\numberwithin{equation}{section}
\newcommand {\Cg} {{\cal C}} %E gothique
\newcommand {\Eg} {{\cal E}} %E gothique
\newcommand {\Fg} {{\cal F}}
\newcommand {\Hg} {{\cal H}} %H gothique
\newcommand {\Mg} {{\cal M}} %N gothique
\newcommand {\Ng} {{\cal N}} %N gothique
\newcommand {\Sg} {{\cal S}} %S gothique
\newcommand {\Xg} {{\cal X}} %S gothique
\newcommand {\Yg} {{\cal Y}} %S gothique
\newcommand {\R} {{\mathbb R}}
\newcommand {\C} {{\mathbb C}} 
\newcommand {\Cb} {{\overline{\C}}}
\newcommand {\Bb} {{\bf B}}
\newcommand {\Ga} {\Gamma}
\newcommand {\p} {\psi}
\newcommand {\f} {\phi}
\newcommand {\ep} {\varepsilon}
\newcommand {\Th} {\Theta}
\newcommand {\ka} {\kappa}
\newcommand {\lam} {\lambda}
\newcommand {\n} {\nu}
\newcommand {\s} {\sigma}
\newcommand {\Si} {\Sigma}
\newcommand {\w} {\omega}
\def\lbeq(#1){\label{eqn:#1}}
\def\refeq(#1){{\rm (\ref{eqn:#1})}}
\def\refeqs(#1,#2){{\rm (\ref{eqn:#1}, \ref{eqn:#2})}}
\def\lbth(#1){\label{th:#1}}
\def\refth(#1){{\rm Theorem \ref{th:#1}}}
\def\refthb(#1){{\bf Theorem \ref{th:#1}}}
\def\refths(#1,#2){{\rm Theorems \ref{th:#1} and \ref{th:#2}}}
\def\refthsb(#1,#2){{\bf Theorems \ref{th:#1} and \ref{th:#2}}}
\def\lblm(#1){\label{lm:#1}}
\def\reflm(#1){{\rm Lemma \ref{lm:#1}}}
\def\reflmss(#1,#2,#3){{\rm Lemmas \ref{lm:#1}, \ref{lm:#2} and \ref{lm:#3}}}
\def\reflms(#1,#2){{\rm Lemmas \ref{lm:#1} and \ref{lm:#2}}}
\def\reflmb(#1){{\bf Lemma \ref{lm:#1}}}
\def\lbprp(#1){\label{prp:#1}}
\def\refprp(#1){{\rm Proposition \ref{prp:#1}}}
\def\lbass(#1){\label{ass:#1}}
\def\refass(#1){{\rm Assumption \ref{ass:#1}}}
\def\lbcor(#1){\label{cor:#1}}
\def\refcor(#1){{\rm Corollary \ref{cor:#1}}}
\def\lbsssec(#1){\label{sssec:#1}}
\def\refsssec(#1){\S \ref{sssec:#1}}
\def\lbssec(#1){\label{ssec:#1}}
\def\refssec(#1){\S \ref{ssec:#1}}
\newcommand {\bgth} {\begin{theorem}}
\newcommand {\edth} {\end{theorem}}
\newcommand {\bgprp} {\begin{proposition}}
\newcommand {\edprp} {\end{proposition}}
\newcommand {\bgdf} {\begin{definition}}
\newcommand {\eddf} {\end{definition}}
\newcommand {\bgass} {\begin{assumption}}
\newcommand {\edass} {\end{assumption}}
\newcommand {\bglm} {\begin{lemma}}
\newcommand {\edlm} {\end{lemma}}
\newcommand {\bgcor} {\begin{corollary}}
\newcommand {\edcor} {\end{corollary}}
\newcommand {\bgpf} {\begin{proof}}
\newcommand {\edpf} {\end{proof}}
\newcommand {\bgrm} {\begin{remark}}
\newcommand {\edrm} {\end{remark}}
\newcommand {\bqn} {\begin{equation}}
\newcommand {\eqn} {\end{equation}}
\newcommand {\ben} {\begin{enumerate}}
\newcommand {\een} {\end{enumerate}}
\newcommand {\ph} {\varphi}
\newcommand {\la} {\langle}
\newcommand {\ra} {\rangle}
\newcommand {\ax} {{\la x \ra}}
\newcommand {\ay} {{\la y \ra}}
\newcommand {\br} {\begin{array}}
\newcommand {\er} {\end{array}}
\newcommand {\lap} {\Delta}
\newcommand {\pa} {\partial}
\newtheorem{theorem}{Theorem}[section]
\newtheorem{lemma}[theorem]{Lemma}
\newtheorem{proposition}[theorem]{Proposition}
\newtheorem{definition}[theorem]{Definition}
\newtheorem{corollary}[theorem]{Corollary}
\newtheorem{assumption}[theorem]{Assumption}
\theoremstyle{definition}
\newtheorem{remark}[theorem]{Remark}
\newcommand {\absleq} {{\leq_{|\, \cdot\, |}\, }}
\title{Remarks on $L^p$-boundedness of wave 
operators for Schr\"odinger operators 
with threshold singularities}
\author{
K. Yajima\thanks{
Department of Mathematics, Gakushuin University, 
1-5-1 Mejiro, Toshima-ku, Tokyo 171-8588, Japan. 
Supported by JSPS grant in aid for scientific research 
No. 22340029 }}
\date{}
\begin{document}
\allowdisplaybreaks
\maketitle

\begin{abstract}
We consider the continuity property in Lebesgue spaces $L^p(\R^m)$ 
of wave operators $W_\pm$ of scattering theory for Schr\"odinger operator 
$H=-\lap + V$ on $\R^m$, $|V(x)|\leq C\ax^{-\delta}$ for some $\delta>2$ 
when $H$ is of exceptional type, i.e. $\Ng=\{u \in \ax^{-s} L^2(\R^m) \colon 
(1+ (-\lap)^{-1}V)u=0 \}\not=\{0\}$ for some 
$1/2<s<\delta-1/2$. It has recently 
been proved by Goldberg and Green for $m\geq 5$ that  
$W_\pm$ are bounded in $L^p(\R^m)$ for 
$1\leq p<m/2$, the same holds for $1\leq p<m$ if all $\f\in \Ng$ 
satisfy $\int_{\R^m} V\f dx=0$ and, 
for $1\leq p<\infty$ if in addition 
$\int_{\R^m} x_i V\f dx=0$, $i=1, \dots, m$. 
We make the results for $p>m/2$ more precise and prove 
in particular that these conditions are also necessary 
for the stated properties of $W_\pm$. We also prove that, 
for $m=3$, $W_\pm$ are bounded in $L^p(\R^3)$ for 
$1<p<3$ and that the same holds for $1<p<\infty$ if and only 
if all $\f\in \Ng$ satisfy $\int_{\R^3}V\f dx=0$ and  
$\int_{\R^3} x_i V\f dx=0$, $i=1, 2, 3$, simultaneously. 

\end{abstract}

\section{Introduction} 

Let $H_0=-\lap$ be the free Schr\"odinger operator on the 
Hilbert space $\Hg=L^2(\R^m)$ with domain domain 
$D(H_0)=\{u\in \Hg\colon -\lap u \in \Hg\}$ and 
$H= H_0 + V$, $V$ being the 
multiplication operator with the real measurable function 
$V(x)$ which satisfies   
\begin{equation}\lbeq(decay-con)
|V(x)|\leq C\ax^{-\delta} \ \ \text{for some }\ \delta>2, \ 
\ax=(1+|x|^2)^\frac12. 
\end{equation}
Then, $H$ is selfadjoint in $\Hg$ with a core 
$C_0^\infty(\R^m)$ and it satisfies the following 
properties (see e.g. \cite{KatoS1,Ku-0,RSi,RS3,RS4}):  
\ben
\item[{\rm (i)}]  The spectrum $\s(H)$ of $H$ consists 
of the absolutely continuous (AC for short) part 
$[0,\infty)$ and a finite number of non-positive 
eigenvalues of finite multiplicities. 
\een
We write $\Hg_{ac}(H)$ for the AC spectral subspace 
of $\Hg$ for $H$ and $P_{ac}(H)$ for the orthogonal 
projection onto $\Hg_{ac}(H)$.   
\ben 
\item[{\rm (ii)}] Wave operators 
$W_\pm=\lim_{t \to\pm\infty}e^{itH}e^{-itH_0}$ 
defined by strong limits exist and are complete, viz.  
${\rm Image}\, W_\pm = \Hg_{ac}(H)$. 
They are unitary 
from $\Hg$ onto $\Hg_{ac}(H)$ and intertwine $H_{ac}$ 
and $H_0$. Hence, for Borel functions $f$,    
\begin{equation}\lbeq(inter)
f(H)P_{ac}(H) = W_\pm f(H_0) W_\pm^\ast.
\end{equation}
\een
If follows that various mapping properties of $f(H)P_{ac}$ 
may be deduced from those of $f(H_0)$ if the corresponding 
ones of $W_\pm$ are known. In particular, if 
$W_\pm \in \Bb(L^p(\R^m))$ for 
$1\leq p_1 \leq p\leq p_2 < \infty$,  
then $W_\pm^\ast \in \Bb(L^q(\R^m))$ for 
$q_2 \leq q \leq q_1$, $1/p_j+1/q_j=1$, $j=1,2$, and   
\bqn \lbeq(pq)
\|f(H)P_{ac}(H)\|_{\Bb( L^q, L^p)} 
\leq C_{pq}\|f(H_0)\|_{\Bb(L^q, L^p)}, 
\eqn 
for these $p$ and $q$ with $C_{pq}$ which 
are {\it independent of $f$}. 
We define the Fourier and the conjugate Fourier 
transforms $\Fg{u}(\xi)$ and $\Fg^\ast{u}(\xi)$ 
respectively by    
\[
\Fg u(\xi) = \int_{\R^m} e^{-ix\xi} u(x) dx\ \mbox{and} \  
\Fg^\ast u(\xi) = \frac1{(2\pi)^m}
\int_{\R^m} e^{ix\xi} u(x) dx.
\] 
We also write $\hat{u}(\xi)$ for $\Fg u(\xi)$. 

The intertwining property \refeq(inter) 
may be made more precise. Wave operators $W_\pm$ are 
transplantations (\cite{Stempak}) of the complete set  
of (generalized) eigenfunctions 
$\{e^{ix\xi} \colon \xi \in \R^m\}$ of $-\lap$ by 
those of out-going and in-coming scattering eigenfunctions 
$\{\ph_{\pm}(x,\xi) \colon \xi\in \R^m\}$ of 
$H=-\lap + V$ (\cite{Ku-0}):  
\[
W_{\pm}u(x) = \Fg_{\pm}^\ast \Fg u(x)=
\frac1{(2\pi)^d}\int_{\R^d} \ph_{\pm}(x,\xi) 
\hat u(\xi) d\xi, 
\]
where $\Fg_{\pm}$ and $\Fg_\pm^\ast$ are the generalized 
Fourier transforms associated with 
$\{\ph_{\pm}(x,\xi) \colon \xi\in \R^m\}$ 
and the conjugate ones defined respectively by 
\[
\Fg_{\pm} u(\xi) = \int_{\R^d}\overline{\ph_{\pm}(x,\xi)} 
u(x) dx, \quad 
\Fg_{\pm}^\ast u(\xi) = 
\frac1{(2\pi)^m} 
\int_{\R^d}\ph_{\pm}(x,\xi)u(x) dx.
\]
They satisfy $\Fg_{\pm}^\ast \Fg_{\pm} u = u$ for 
$u \in \Hg_{ac}(H)$ and, $\Fg_\pm \Fg_{\pm}^\ast u = u$ 
for $u \in L^2(\R^m)$. 
We define $F(D) \equiv \Fg^\ast M_F \Fg$ and 
$F(D_\pm) \equiv \Fg_{\pm}^\ast M_F \Fg_{\pm}$ 
for Borel functions $F$ on $\R^m$ where $M_F$ is the multiplication 
with $F(\xi)$. Then,  
\[
F(D_\pm) = W_\pm F(D)W_{\pm}^\ast u , \quad u \in \Hg_{ac}(H) 
\]
and $W_\pm$ transplant estimates for $F(D)$ in $L^p$-spaces 
to $F(D_\pm)$.  

In this paper we are interested in the problem whether or not 
$W_\pm$ are bounded in $L^p(\R^m)$. This will almost 
automatically imply the same property in Sobolev spaces 
$W^{k,p}(\R^m)= \{ u \in L^p(\R^m) \colon \pa^\alpha 
u \in L^p(\R^m)\}$ 
for integers $0 \leq k\leq 2$ (see Section 7 of \cite{FY}). 

There is now a large literature on this problem 
(\cite{AY, Be, DF,FY, Weder, Y-d2, JY-1, JY-2, Y-d3, Y-odd}) and 
it is well known that the answer depends on the spectral 
properties of $H$ at $0$, the bottom of the AC spectrum 
of $H$. We define 
\bqn 
\Eg=\{u \in H^2(\R^m) \colon (-\lap + V)u = 0\}, 
\eqn 
the eigenspace of $H$ with eigenvalue $0$ 
and, for $1/2<s<\delta-1/2$,  
\begin{equation}\lbeq(thres)
\Ng= \{u \in \ax^{s}L^2(\R^m): 
(1 + (-\lap)^{-1}V) u=0\} = 0 . 
\end{equation} 
Functions $\f$ in $\Ng$ satisfy 
$-\lap \f + V \f=0$ for $x \in \R^m$. 
The space $\Ng$ is finite dimensional, 
independent of $1/2<s<\delta-1/2$,  
$\Eg \subset \Ng$ and, if $m\geq 5$, $\Eg=\Ng$ (\cite{JK}). 
The operator $H$ is 
said be of {\it generic type} if 
$\Ng=\{0\}$ and of {\it exceptional type} otherwise. 
When $H$ is of {\it generic type}, we have rather 
satisfactory results (though there is much space for 
improving conditions on $V$) and it 
has been proved that $W_\pm$ are bounded in $L^p(\R^m)$ 
for all $1\leq  p \leq \infty$ if $m \geq 3$ and, for all 
$1<p<\infty$ if $m=1$ and $m=2$ under various smoothness 
and decay at infinity assumptions on $V$ (see \cite{Be} 
for the best result when $m=3$); but they are 
in general {\it not} bounded in $L^1(\R^1)$ or 
$L^\infty(\R^1)$ when $m=1$ (\cite{Weder}). 

When $H$ is of exceptional type, it is long known that the 
same results hold when $m=1$ (see \cite{Weder, AY, DF}). 
For higher dimensions $m\geq 3$, it is first shown 
(\cite{Y-odd, FY}) 
that $W_\pm$ are bounded in $L^p(\R^m)$ for $3/2<p<3$ if $m=3$ 
and for $\frac{m}{m-2}<p<\frac{m}2$ if $m\geq 5$, which is subsequently 
extended to $1<p<3$ for $m=3$ and $1<p<m/2$ for $m\geq 5$ 
(\cite{Y-arxiv}). Then, recently, Goldberg and Green 
(\cite{GG}) have substantially improved these results 
by proving the following theorem for $m\geq 5$. In what 
follows 
in this paper, we assume $m \geq 3$ and $V$ satisfies the 
following assumption. The constant $m_\ast$ is defined by  
\[
m_\ast=(m-1)/(m-2).
\]
\bgass \lbass(V) 
$V$ is a real valued measurable function such that 
\ben 
\item[{\rm (1)}] $\Fg (\ax^{2\s} V) \in L^{m_\ast}$ 
for some $\s>1/m_\ast$. 
\item[{\rm (2)}] $|V(x)|\leq C \ax^{-\delta}$ 
for some $\delta> \left\{\br{ll} m+4, 
\  &  \mbox{if}\ \ 3\leq m\leq 7, \\ 
m+3, \  & \mbox{if}\ \ m\geq 8 \er \right.$ and $C>0$. 
\een
\edass  
\noindent The condition (1) requires certain 
smoothness on $V$. 

We write $\la u,v\ra= \int_{\R^m} \overline{u(x)}v(x) dx$ 
and define subspaces $\Eg_1 \subset \Eg_0 \subset \Ng$ 
respectively by 
\bqn \lbeq(eg0eg1)
\Eg_0=\{\f\in \Ng \colon \la V, \f \ra =0\}, \quad   
\Eg_1=\{\f \in \Eg_0 \colon \la x V, \f\ra=0\},  
\eqn 
where $\la xV,\f\ra=0$ 
means $\la x_i V, \f\ra=0$ for all $1\leq i\leq m$.
We have $\dim \Ng/\Eg_0 \leq 1$, 
$\Eg_0= \Eg$ if $m=3$ and $\Ng=\Eg$ if $m \geq 5$. 

\bgth[Goldberg-Green] \lbth(GG) Suppose that $V$ 
satisfies \refass(V) and that 
$H$ is of exceptional type. Then, if $m\geq 5$, $W_\pm$ 
are bounded in $L^p(\R^m)$ for $1\leq p<m/2$. 
They are bounded in $L^p(\R^m)$ also for $1\leq p<m$ 
if $\Ng=\Eg_0$ and for $1\leq p<\infty$ if $\Ng=\Eg_1$. 
\edth 

In this paper, we show following theorems which 
in particular prove the corresponding 
result for $m=3$ 
and that $\Ng=\Eg_0$ and $\Ng=\Eg_1$ of 
\refth(GG) are also necessary conditions for 
respective cases. We write $P$, $P_{0}$ 
and $P_1$ for the orthogonal 
projections onto $\Eg$, $\Eg_{0}$ and $\Eg_{1}$ respectively. 
Because $(-\lap)^{-1}V$ is a real operator, we may take 
the bases of $\Ng$, $\Eg_0$ and $\Eg_1$ which consist   
of real functions and $P$, $P_{0}$ and $P_1$ are real 
operators: For the conjugation 
$(\Cg u)(x) = \overline{u(x)}$, 
\bqn \lbeq(conj)
\Cg^{-1} P \Cg=P, \quad \Cg^{-1} P_0 \Cg=P, \quad  
\Cg^{-1} P_1 \Cg=P_1.
\eqn 
We state results for $m=3$, $m=5$ and $m\geq 6$ separately. 
If $m=3$, it is known that $W_\pm$ are not bounded in 
$L^1(\R^3)$ when $H$ 
is of exceptional type (\cite{Y-disp, ES}).

\begin{theorem}\lbth(theo) Let $m=3$. Suppose that $V$ 
satisfies \refass(V) and that $H$ is of exceptional type. 
Then: 
\ben 
\item[{\rm (1)}] $W_\pm$ are bounded in $L^p(\R^3)$ 
for $1<p<3$. 
\item[{\rm (2)}] For $3<p<\infty$, there 
exists a constant $C$ such that  
\bqn \lbeq(3d-th)
\|(W_\pm \pm a \ph \otimes |D|^{-1}V\ph + 
P)u\|_{L^p}
\leq C \|u\|_{L^ p}, 
\eqn  
where $\ph$ is the real function defined by 
\refeq(canonical) (called canonical resonace),  
$a= 4\pi i|\la V, \ph\ra|^{-2}$ and $P$ 
may be replaced by $P\ominus P_1$.
\item[{\rm (3)}] If $W_\pm$ are bounded in $L^p(\R^3)$ 
for some $3<p<\infty$, then $\Ng=\Eg_1$. In this case they  
are bounded in $L^p(\R^3)$ for all $1<p<\infty$.  
\een
\end{theorem}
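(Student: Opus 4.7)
The plan is to start from the stationary (Birman--Kato--Kuroda) representation
\[
W_\pm - 1 = \mp \frac{1}{\pi i} \int_0^\infty M_\pm(\lambda)^{-1} V\,\mathrm{Im}\,R_0(\lambda^2 + i0)\cdot 2\lambda\, d\lambda,
\quad M_\pm(\lambda) = 1 + R_0(\lambda^2 \pm i0) V,
\]
and split $W_\pm = W_\pm^L + W_\pm^H$ via a smooth cut-off near $\lambda = 0$. The high-energy piece $W_\pm^H$ is bounded on $L^p(\R^3)$ for every $1<p<\infty$ under \refass(V) by arguments parallel to those of \cite{FY, Y-arxiv, GG}; the threshold singularity plays no role away from $\lambda = 0$, so the entire exceptional-type phenomenon is concentrated in $W_\pm^L$.

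For $W_\pm^L$, I would adapt the Jensen--Kato threshold expansion in $\R^3$: in the exceptional case, on appropriate weighted $L^2$-spaces,
\[
M_\pm(\lambda)^{-1} = \pm \frac{i\alpha}{\lambda}\, \pi_0 + \pi_1 + O(\lambda), \qquad \lambda \to 0^+,
\]
where $\pi_0$ is a rank-one operator built from the canonical resonance $\ph$ of \refeq(canonical) and $\pi_1$ carries the zero-energy eigenprojection $P$. Substituting into the stationary formula and carrying out the $\lambda$-integration explicitly produces the rank-finite singular operator $\mp a\,\ph\otimes|D|^{-1}V\ph - P$ with $a = 4\pi i / |\la V, \ph\ra|^2$; pushing the expansion one order further and exploiting the moment cancellations $\la x_i V, \psi\ra = 0$ for $\psi \in \Eg_1$ refines $-P$ into $-(P \ominus P_1)$. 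The residual reduces to a finite sum of Fourier multipliers and Calder\'on--Zygmund-type oscillatory operators, which I would show to be bounded on $L^p(\R^3)$ for all $1<p<\infty$; this yields \refeq(3d-th), i.e.\ part~(2). The conjugation symmetry \refeq(conj) forces the singular operator to be real, consistent with $a$ being purely imaginary.

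Part~(1), $1<p<3$, is obtained by estimating $W_\pm^L$ intact rather than subtracting the correction (which is not $L^p$-bounded on any $L^p(\R^3)$): in this range, the $1/\lambda$ singularity in $\pi_0$ is absorbed by $\mathrm{Im}\,R_0(\lambda^2+i0)$ and the Riesz potential $|D|^{-1}$, producing an integral kernel controlled by a fractional-integral majorant that maps $L^p$ to itself exactly when $1<p<3$. Part~(3) then follows from~(2): the ``if'' direction is immediate, since $\Ng = \Eg_1$ kills both correction terms in \refeq(3d-th). For the converse, if $W_\pm \in \Bb(L^p)$ for some $p>3$, subtracting yields $\pm a\,\ph \otimes |D|^{-1}V\ph + (P \ominus P_1) \in \Bb(L^p)$. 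The two summands have linearly independent ranges ($\ph \notin \Eg$), hence each must be individually $L^p$-bounded. Boundedness of the rank-one piece requires $|D|^{-1}V\ph \in L^{p'}(\R^3)$ with $p' < 3/2$; but its leading asymptotic $|D|^{-1}V\ph(x) \sim c_0 \la V, \ph\ra\, |x|^{-2}$ fails to be in $L^{p'}$ unless $\la V, \ph\ra = 0$, contradicting the definition of the canonical resonance unless none exists, i.e.\ $\Ng = \Eg$. Similarly, $(P \ominus P_1) \in \Bb(L^p)$ demands each $\psi \in \Eg \ominus \Eg_1$ to lie in $L^{p'}(\R^3)$; the expansion $\psi(x) \sim c_1 \sum_i \la x_i V, \psi\ra\, x_i |x|^{-3}$ (size $|x|^{-2}$) lies in $L^{p'}$ for $p'<3/2$ only when every $\la x_i V, \psi\ra = 0$, i.e.\ $\psi \in \Eg_1$, forcing $\Eg = \Eg_1$. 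Combined, $\Ng = \Eg_1$; \refeq(3d-th) then gives $W_\pm \in \Bb(L^p)$ for all $1<p<\infty$.

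The main technical obstacle is the two-term low-energy expansion of $M_\pm(\lambda)^{-1}$ in $\R^3$ and the precise identification of both singular contributions with the explicit operators $a\,\ph \otimes |D|^{-1}V\ph$ (with the exact constant $a = 4\pi i / |\la V, \ph\ra|^2$) and $P$ (refined to $P \ominus P_1$). Keeping uniform $L^p$-control of the $O(\lambda)$ residual for the full range $1 < p < \infty$ --- so that no further exceptional-type condition slips in --- is the technical heart of the argument, and this is where the smoothness hypothesis \refass(V)(1) enters essentially.
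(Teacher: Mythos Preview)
Your overall architecture is right and matches the paper: stationary representation, high/low energy split, Jensen--Kato threshold expansion of $(1+G_0(\lam)V)^{-1}$ (this is Theorem~\refth(III) here), extraction of the rank-one resonance term and the projection $P$, and the necessity argument in part~(3) via the asymptotics \refeq(asymp-d), \refeq(asymp-d1). The logic of part~(3) is essentially the paper's proof of \reflm(third)(3) and \reflm(3d-second-1)(3).

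The gap is in the core $L^p$ estimates for parts~(1) and~(2). Your description of the residual as ``Fourier multipliers and Calder\'on--Zygmund-type oscillatory operators'' and of the $1<p<3$ mechanism as ``an integral kernel controlled by a fractional-integral majorant that maps $L^p$ to itself exactly when $1<p<3$'' does not identify a workable tool. Fractional integrals do not self-map $L^p$, and genuine Calder\'on--Zygmund operators are bounded for \emph{all} $1<p<\infty$, so neither accounts for the specific threshold $p=3$. The paper's actual mechanism is quite different: one writes $\la V\ph, (G_0(\lam)-G_0(-\lam))u\ra$ via the spherical average $M(r)=M(r,V\ph\ast\check u)$ as a one-dimensional Fourier transform (\reflm(p-stg)), so that $Z_s u$ becomes a convolution of $V\ph$ with a radial function $K_0(|x|)/|x|$; then $K_0(\rho)$ is controlled by $\Mg\Hg(r^a M)(\rho)$ for various $a$ (obtained by integration by parts in $\lam$), and the Muckenhoupt weighted inequality for the Hilbert transform and maximal function with the $A_p$ weights $\rho^{2-p}$, $\rho^{2-2p}$, $\rho^2$ on $\R^1$ gives the bounds in the ranges $\tfrac32<p<3$, $1<p<\tfrac32$, $p>3$ respectively. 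The $p=3$ threshold comes precisely from when these weights leave the $A_p$ class. The boundary term from the final integration by parts is exactly what produces the explicit operators $a\,\ph\otimes|D|^{-1}V\ph$ and $P$. Without this one-dimensional reduction and the $A_p$-weight machinery, you have no concrete way to obtain either the $1<p<3$ boundedness or the $L^p$ control of the residual in \refeq(3d-th) for $p>3$; this, not the threshold expansion itself, is where the real work lies.
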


\begin{theorem}\lbth(theo5) Let $m=5$. Suppose that $V$ 
satisfies \refass(V) and that $H$ is of exceptional type. 
Then: 
\ben 
\item[{\rm (1)}] $W_\pm$ are bounded in $L^p(\R^5)$ for 
$1<p<5/2$. 
\item[{\rm (2)}] For $5/2<p<5$, 
there exists a constant $C$ such that  
\bqn \lbeq(5d-th-1)
\left\|\left(W_\pm  \pm a_0 (|D|^{-1}V\ph) \otimes \ph + 
\frac{P}{2} \right)u \right\|_{L^p}
\leq C \|u\|_{L^p}, 
\eqn  
where $\ph= PV$, $V$ being considered as a function,  
$a_0= i/(24\pi^2)$ and $P$ may be replaced by $P\ominus P_0$. 
If $W_\pm$ are bounded in $L^p(\R^5)$ for some $\frac52<p<5$, 
then $\Ng=\Eg_0$. In this case  
they are  bounded in $L^p(\R^5)$ for all $1<p<5$. 
\item[{\rm (3)}] By virtue of {\rm (1)} and {\rm (2)}, 
the condition $\Eg=\Eg_0$ is 
necessary for $W_\pm$ to be bounded in $L^p(\R^5)$ for 
some $p>5$. Suppose $\Eg=\Eg_0$. Then, 
\bqn \lbeq(5d-th-2)
\left\|\left(W_\pm + P\right)u
\right\|_{L^p}
\leq C \|u\|_{L^p}  
\eqn 
for a constant $C$, where $P=P_0$ may be replaced by 
$P_0\ominus P_1$. 
If $W_\pm$ are bounded in $L^p(\R^5)$ for some $p>m$, 
then $\Ng=\Eg_1$. In this case they are 
bounded in $L^p(\R^5)$ for all $1<p<\infty$. 
\een
\end{theorem}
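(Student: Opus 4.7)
I would prove \refth(theo5) by treating all three assertions uniformly through the stationary representation of the wave operators,
\[
(W_\pm - 1)u = \mp\frac{1}{\pi i}\int_0^\infty R(\lambda^2\mp i0)V\,\mathrm{Im}\,R_0(\lambda^2+i0)u\cdot 2\lambda\,d\lambda,
\]
and splitting the $\lambda$-integral by a smooth cut-off $\chi$ supported near $\lambda=0$. The complementary high-energy part is bounded on every $L^p(\R^5)$, $1<p<\infty$, by the methods of \cite{Y-odd,FY,GG}, so all threshold obstructions lie in the low-energy piece. Part (1) is already contained in \refth(GG); the novelty is the fine analysis of this low-energy integral that yields (2) and (3).

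For the low-energy piece I would apply the Grushin/Feshbach reduction of Jensen--Nenciu to invert $1+R_0(\lambda^2\pm i0)V$, producing a Laurent expansion of $R(\lambda^2\pm i0)$ around $\lambda=0$ whose leading coefficient is $\lambda^{-2}P$ (since $\Ng=\Eg$ for $m=5$), whose $\lambda^{-1}$-coefficient is controlled by the moments $\langle V,\f\rangle$ and vanishes precisely on $\Eg_0$, and whose $\lambda^0$-coefficient is controlled by $\langle xV,\f\rangle$ and vanishes on $\Eg_1$. Plugging this expansion into the stationary integral, the $\lambda^{-2}P$ singularity yields, after evaluation against $\mathrm{Im}\,R_0(\lambda^2+i0)$ by functional calculus and a Plemelj-type argument at the origin, exactly the rank-one correction $\pm a_0(|D|^{-1}V\ph)\otimes\ph$ (with $\ph=PV$) together with the half-residue contribution $\pm P/2$; the constant $a_0=i/(24\pi^2)$ is identified from the explicit convolution kernel of $|D|^{-1}$ in $\R^5$. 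The remainder is an oscillatory integral controlled by the Goldberg--Green pointwise kernel estimates, and is $L^p$-bounded for $5/2<p<5$ in (2) and for $p>5$ in (3). Under $\Eg=\Eg_0$ the $\lambda^{-1}$-coefficient drops so that the rank-one correction disappears and only the $+P$ remains, and under $\Ng=\Eg_1$ a further order drops and the whole correction vanishes, yielding boundedness on all $1<p<\infty$.

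For the necessity direction I would exploit the sharp decay of threshold eigenfunctions in $\R^5$: a generic $\f\in \Eg\setminus \Eg_0$ satisfies $\f(x)\sim c_\f|x|^{-3}$ with $c_\f\neq 0$, and a $\f\in \Eg_0\setminus \Eg_1$ satisfies $\f(x)\sim c_\f|x|^{-4}$. Hence $\f\notin L^{p'}(\R^5)$ as soon as $p>5/2$ (resp.\ $p>5$), so the linear functionals $u\mapsto \langle\f,u\rangle$ built into the rank-one and projection corrections in \refeq(5d-th-1) and \refeq(5d-th-2) are unbounded on $L^p$. Under the respective negations $\Ng\neq \Eg_0$ or $\Eg_0\neq \Eg_1$, the correction is therefore an unbounded operator on $L^p$; if $W_\pm$ were bounded on such an $L^p$, the identity would force this correction to be bounded, a contradiction, whence $\Ng=\Eg_0$ (resp.\ $\Ng=\Eg_1$).

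The chief technical obstacle is the bookkeeping in the Jensen--Nenciu expansion needed to extract the precise coefficients $\pm a_0$ and $\pm 1/2$ in the corrections (via Plemelj/principal-value computations at $\lambda=0$) and to reduce the remaining oscillatory integrals to a form susceptible to the Goldberg--Green pointwise bounds, especially near the endpoints $p=5/2$ and $p=5$ where $|D|^{-1}V\ph$ is borderline integrable.
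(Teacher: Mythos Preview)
Your high-level outline (stationary formula, high/low energy split, resolvent expansion near $\lam=0$, necessity via decay of eigenfunctions) matches the paper's architecture, but the analytic core you propose is quite different from what the paper actually does, and your sketch has real gaps at the decisive points.

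First, the attribution of the obstruction terms is off. In the expansion $(1+G_0(\lam)V)^{-1}=\lam^{-2}PV - a_0\lam^{-1}|\ph\ra\la V\ph| + E(\lam)$ for $m=5$, the rank-one correction $a_0\,\ph\otimes |D|^{-1}(V\ph)$ comes from the $\lam^{-1}$ term (via \reflm(5-7)), while the $\tfrac12 P$ comes from the $\lam^{-2}PV$ term (via \reflm(m/2m), where $D_5=\Ga(3/2)/(\sqrt{\pi}\Ga(2))=\tfrac12$). Your ``half-residue/Plemelj'' reading of the $\tfrac12$ is a numerical coincidence at $m=5$: for general odd $m$ the constant is $D_m=\Ga((m-2)/2)/(\sqrt{\pi}\,\Ga((m-1)/2))\neq\tfrac12$, and for $p>m$ the constant is $1$, not $\tfrac12$. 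The mechanism is not a residue but an integration-by-parts boundary term at $\lam=0$ in the explicit one-dimensional oscillatory integrals obtained from the exponential-polynomial kernel \refeq(coker) and the spherical-average representation \refeq(spheric). Likewise, the $\Eg_1$ condition does not enter through a ``$\lam^0$-coefficient vanishing on $\Eg_1$'' in the resolvent expansion; $E(\lam)$ is already good. The $\Eg_1$ dependence enters only through whether the boundary term $|\f\ra\la\f|$ is bounded on $L^p$ for $p>m$, which by \refeq(asymp-d) happens precisely when $\la xV,\f\ra=0$.

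Second, the paper does \emph{not} control the remainder by Goldberg--Green pointwise kernel estimates; it explicitly takes a different route. After extracting boundary terms, the remaining kernels $K^{(j,k)}(\rho)$ are estimated by $\Mg\Hg$ applied to $r^\ell M(r)$ or $r^\ell M_\ast(r)$ (the latter with $M_\ast=M(\,\cdot\,,|D|^{-1}(V\f)\ast\check u)$), and the $L^p$ bounds follow from Muckenhaupt weighted inequalities with the specific $A_p$ weights $|r|^{m-1-p(m-1)}$, $|r|^{m-1-2p}$, $|r|^{m-1-p}$, $|r|^{m-1}$ on the ranges $1<p<\tfrac{m}{m-1}$, $\tfrac{m}{3}<p<\tfrac{m}{2}$, $\tfrac{m}{2}<p<m$, $p>m$ respectively, together with Hardy and Hardy--Littlewood--Sobolev inequalities and, for $\f\in\Eg_0$, Calder\'on--Zygmund theory for $|D|^{-1}(V\f)$. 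A key cancellation $iC_0+C_1=0$ (see \refeq(c0c1)) removes a dangerous boundary term when pairing $Z_{s1}^{0k}$ with $Z_{s1}^{1k}$ at small $p$ (\reflm(joint)). None of this structure is visible in your sketch, and the vague appeal to ``Goldberg--Green pointwise bounds'' would not, as stated, isolate the exact obstruction operators with the correct constants $a_0$ and $D_5=\tfrac12$.

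Your necessity argument is essentially the paper's: if $W_\pm$ were bounded then so would the explicit finite-rank correction, forcing $\f\in L^{p'}$, which by the asymptotics \refeq(asymp-d) gives $\la V,\f\ra=0$ (for $p>\tfrac52$) or $\la xV,\f\ra=0$ (for $p>5$). That part is fine.
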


\bgth \lbth(theo6)
Let $m\geq 6$. Suppose that $V$ satisfies 
\refass(V) and that $H$ is of exceptional type. Then: 
\ben 
\item[{\rm (1)}] $W_\pm$ are bounded in $L^p(\R^m)$ 
for $1<p<m/2$. 
\item[{\rm (2)}] For $\frac{m}2<p<m$, 
there exists a constant $C>0$ such that 
\bqn 
\|(W_\pm + D_m P)u\|_{L^p} 
\leq C_p \|u\|_{L^{p}}, 
\eqn 
where $P$ may be replaced by $P\ominus P_{0}$ and 
\begin{numcases}
{D_m=} \frac{\Ga\left(\frac{m-2}2\right)}
{\sqrt{\pi} \Ga\left(\frac{m-1}2\right)}, & 
$\mbox{$m$ is odd}$, \\
\frac{2^m \Ga\left(\frac{m}{2}\right)}
{\sqrt{\pi}\Ga\left(\frac{m-1}2\right)} 
\int_1^\infty (x^2+1)^{-(m-1)}dx, & $\mbox{$m$ is even}$. 
\lbeq(dm-e1)
\end{numcases} 
If $W_\pm$ are bounded in $L^p(\R^m)$ for some $m/2<p<m$ 
then, $\Eg=\Eg_0$. In this case they are bounded in $L^p(\R^m)$ 
for all $1<p<m$ 
\item[{\rm (3)}] Suppose $\Eg=\Eg_0$. 
Let $m<p<\infty$. 
Then, for a constant $C_p$, 
\bqn \lbeq(5d-th2)
\|(W_\pm + P)u\|
\leq C_p \|u\|_{L^{p}},
\eqn 
where $P$ may be replaced by $P_0 \ominus P_1$. 
If $W_\pm$ are bounded in $L^p(\R^m)$ for some $p>m$, then 
$\Eg=\Eg_1$. In this case they are bounded in 
$L^p(\R^m)$ for all $1<p<\infty$.
\een
\end{theorem}

\bgrm 
{\rm (1)} The integral in \refeq(dm-e1) may be 
computed explicitly: 
\bqn \lbeq(shin)
\int_1^\infty (x^2+1)^{-(m-1)}dx
= \frac{\Ga\left(m-\frac32\right)}{4\Ga(m-1)}
\left(\sqrt{\pi}
-\sum_{j=1}^{m-2}\frac{\Ga(j)2^{-j+1}}
{\Ga\left(j+\frac12\right)}
\right). 
\eqn 
{\rm (2)} There are examples of $V$ such that 
$\Eg_1= \Eg_0 \subsetneq \Ng$, 
$\Eg_1 \subsetneq \Eg_0 =\Ng$ and 
$\Eg_1 \subsetneq \Eg_0 \subsetneq \Ng$ 
(see Example 8.4 of \cite{Jensen}). 

\noindent 
{\rm (3)} Murata's result (Theorem 1.2 of \cite{Mu}) 
also implies that, if $\Ng\not=0$, $W_\pm$ are not in general 
bounded in $L^p(\R^m)$ for $p>3$ if $m=3$ and for $p>\frac{m}2$ 
if $m\geq 5$.  
\edrm 

The rest of the paper is devoted to the proof of 
Theorems. In spite that substantial part of 
\refths(theo5,theo6) overlaps with \refth(GG) and that  
they miss critically important $L^1$-boundedness, we 
present the proof of Theorems which is very different 
from the one by Goldberg and Green (\cite{GG}). Our proof  
heavily uses harmonic analysis machinery, which produces 
sharper results for larger $p$'s, however, at the same time, 
prevents us from reaching end points $p=1$ and $p=\infty$. 
We prove the theorems only for $W_{-}$ since conjugation 
changes the direction of time, 
viz. $\Cg^{-1}e^{-itH}\Cg=e^{itH}$, and  
\bqn 
W_{+}= \Cg^{-1} W_{-}\Cg. 
\eqn 

We use the following notation and conventions: 
The $\ell$-th derivative of $f(x)$, $x\in \R$ 
is denoted by $f^{(\ell)}(x)$. 
$\Si={\mathbb S}^{m-1}=\{x\colon x_1^2+ \cdots + x_m^2=1\}$ 
is the unit sphere in $\R^m$ and 
$\w_{m-1}=2\pi^\frac{m}{2}/\Ga\left(\frac{m}2\right)$ 
is its area. 
The coupling and the inner product are anti-linear 
with respect to the first component,  
\[
(u,v)= \la u, v \ra = \int_{\R^n} \overline{u(x)}v(x) dx, 
\]
in accordance with the interchangeable notaion 
for the rank $1$ operator 
\[
|u\ra \la v| = u \otimes v \ \colon \ 
\f \mapsto u \la v, \f \ra.
\]
This notation is used also when $v$ is in a 
certain function space and $u$ in its dual space.
\[
f \, \absleq \, g \ \mbox{means}\ |f|\leq |g|.
\]
For Banach spaces $X$ and $Y$, $\Bb(X,Y)$ 
is the Banach space of bounded operators from $X$ to $Y$ 
and $\Bb(X)=\Bb(X,X)$; $\Bb_\infty(X,Y)$ and 
$\Bb_\infty(X)$ are spaces of compact operators; 
and the dual space $\Bb(X,\C)$ of $X$ is denoted by $X^\ast$. 
The identity operators in various 
Banach spaces are indistinguishably denoted by $1$. 
For $1\leq p \leq \infty$, 
$\|u\|_p=\|u\|_{L^p}$ is the norm of $L^p(\R^m)$ and $p'$ is 
its dual exponent, 
$1/p+1/p'=1$. When $p=2$, we often omit $p$ 
and write $\|u\|$ for $\|u\|_2$. We interchangeably write 
$L^p_{w}(\R^m)$ or $L^{p,\infty}(\R^m)$ for weak-$L^p$ spaces 
and $\|u\|_{p,w}$ or $\|u\|_{p,\infty}$ for their norms. 
For $s \in \R$, 
\[
L^2_{s}=\ax^{-s}L^2=
L^2(\R^m, \ax^{2s}dx), \quad H^s(\R^m)=\Fg L^2_s(\R^m)
\]
are the weighted $L^2$ spaces  and Sobolev spaces. 
The space of rapidly decreasing functions is 
denoted by $\Sg(\R^m)$. 

We denote the resolvents of $H$ and $H_0$ respectively by  
\[
R(z)= (H-z)^{-1}, \quad R_0(z) = (H_0 -z)^{-1}.
\]
We parameterize $z \in \C \setminus [0,\infty)$ 
as $z=\lam^2$ by $\lam \in \C^+$, the open upper half plane 
of $\C$, so that the positive and the negative parts of 
the boundary $\{\lam \colon \pm \lam \in (0,\infty)\}$ 
are mapped onto the upper and the lower edges of 
the positive half line $\{z\in \C \colon z>0\}$. We define 
\[
G(\lam) = R(\lam^2), \quad G_0(\lam) = R_0(\lam^2), 
\quad  \lam \in \C^+.
\]
These are $\Bb(\Hg)$-valued meromorphic functions of 
$\lam \in \C^+$ and the limiting absorption principle 
\cite{Ku-0} (LAP for short) asserts that, when considered as 
$\Bb(\ax^{-s}L^2, \ax^{t}L^2)$-valued 
functions for $s, t>\frac12$ and $s+t>2$, $G_0(\lam)$ 
has H\"older continuous extensions to its closure 
$\Cb^+=\{z: \Im z \geq 0\}$. The same is true also for  
$G(\lam)$, but, if $H$ is of exceptional type, 
it has singularities at $\lam=0$. In what follows $z^\frac12$ 
is the branch of square root of $z$ cut along the negative 
real axis such that $z^\frac12>0$ when $z>0$. 

The plan of the paper is as follows: In section 2, we record 
some results most of which are well known 
and which we use in the sequel. They include:  
\begin{itemize}
\item[$\bullet$] Formulas for the integral kernel of 
$G_0(\lam)$ as exponential-polynomials in odd 
dimensions or their superpositions in even dimensions.  

\item[$\bullet$] 
Representation of 
$\la \p|(G_0(\lam)-G_0(-\lam))u\ra$ 
as the linear combination of Fourier transforms of 
$r^{j+1}M(r, \overline{\p}\ast \check{u})$, 
$M(r,f)$ being the average of $f$ over the sphere of radius $r$ 
centered at the origin. 

\item[$\bullet$] The Muckenhaupt weighted inequality and 
examples of $A_p$-weights. 

\end{itemize}
In section 3, we recall 
and improve results of \cite{Y-odd} and \cite{FY} on the 
behavior as $\lam \to 0$ of $(1+ G_0(\lam)V)^{-1}$ and 
reduce the problem to the $L^p$-boundedness of  
\bqn \lbeq(zs-repeat)
Z_s u= -\frac{1}{\pi i}\int^\infty_{0}  
G_0(\lam)VS(\lam)(G_0(\lam)-G_0(-\lam))u\lam F(\lam) d\lam 
\eqn 	
where $S(\lam)$ is the singular part of the expansion of 
$(1+ G_0(\lam)V)^{-1}$ at $\lam=0$ and 
$F\in C_0^\infty(\R)$ is such that 
$F(\lam)=1$ near $\lam=0$. 

We prove \refth(theo) in Section 4, \refth(theo5) 
for odd dimensions $m\geq 5$ in Section 5 and for even 
dimensions in Section 6. We explain the basic strategy 
of the proof at the end of \refssec(4-1) after most of 
basic ideas appears in the simplest form. 

{\bf Acknowledgment} \ The part of this work was carried 
out while the author was visiting Aalborg 
and Aarhus universities in summers of 2014 and 2015 
respectively. He would like to express his 
sincere gratitude to both institutions for the hospitality, 
to Arne Jensen, Jacob Schach Moeller and Erik Skibsted 
in particular. He thanks Professors Goldberg and Green 
for sending us their manuscript \cite{GG} before posting on 
the arXiv-math, Shin Nakano for computing the recursion 
formula for \refeq(shin) and Fumihiko Nakano for bringing 
\cite{GKP} to his attention.  

%%%%%%%%%%%%%%%%%%%%%%%%%%%%%%%%%%%%%%%%%%%%%%

\section{Preliminaries} 

In this section we record some well known results 
which we use in what follows.

\subsection{Integral kernel of the free resolvent}
For $m\geq 2$, resolvent $G_0(\lam)$ for 
$\Im \lam \geq 0$ is the convolution with  
\begin{equation}\lbeq(co-ker) 
G_0(\lam,x) = \frac{e^{i\lam|x|}}{2(2\pi)^{\frac{m-1}{2}}
\Ga\left(\frac{m-1}{2}\right)|x|^{m-2}}
\int^\infty_0 e^{-t}t^{\frac{m-3}2}
\left(\frac{t}2-i\lam|x|\right)^{\frac{m-3}2}dt   
\end{equation}
(\cite{WW}). When $m\geq 3$ is odd, 
it is an exponential polynomial like function.  
\bglm \lblm(odd-green) Let $m \geq 3$ be odd. Then:
  
\begin{equation}\lbeq(coker)
G_0(\lam,x) = \sum_{j=0}^{(m-3)/2} 
C_j \frac{(\lam|x|)^j e^{i\lam |x|}}{|x|^{m-2}} \ \mbox{with}\ 
C_j=\frac{(-i)^j(m-3-j)!}
{2^{m-1-j}\pi^{\frac{m-1}2}j! (\frac{m-3}2 -j)!}. 
\end{equation} 
The constant $C_0$ may also be written as 
$C_0= (m-2)^{-1}\w_{m-1}^{-1}$ and   
\bqn \lbeq(c0c1)
iC_0 +C_1=0, \quad \mbox{when $m \geq 5$}.
\eqn 
\edlm 

If $m$ is even, the structure of $G_0(\lam, x)$ 
is more complex and this makes the analysis harder. 
For partly circumventing the difficulty 
we express $G_0(\lam,x)$ as a 
superposition of exponential-polynomial like functions 
of the form \refeq(coker). 
This will allow a part of the proof for even dimensions  
to go in parallel with the odd dimensional cases.  We set 
\[
\n= \frac{m-2}2 .
\]
Define operators $T_j^{(a)}$, $j=0, \dots, \n$ for 
superposing over parameter $a>0$ by 
\begin{gather} \lbeq(sum-j)
T_j^{(a)}[f(x,a)] = C_{m,j}\w_{m-1}
\int_0^\infty (1+a)^{-(2\n-j+\frac12)} 
f(x,a) \frac{da}{\sqrt{a}}, \\
C_{m,j}\w_{m-1}= 
(-2i)^j\frac{\Ga\left(2\n-j+\frac12\right)}{(m-2)!\sqrt{\pi}}
\begin{pmatrix} \n \\ j \end{pmatrix}.  \lbeq(const-cm0) 
\end{gather}
The factor $\w_{m-1}$ is added for shorting 
some formulas below (see \refeq(usformula)).  

\bglm If $m\geq 4$ is even, then we have 
\bqn \lbeq(co-kerb) 
G_0(\lam,x) =
\sum_{j=0}^\n \w^{-1}_{m-1}T_j^{(a)}
\left[e^{i\lam|x|(1+2a)}\frac{(\lam|x|)^{j}}{|x|^{m-2}} 
\right]. 
\eqn 
\edlm 
\bgpf Let $C_{m\ast} = 2^{m-1}\pi^{\frac{m-1}{2}}
\Ga\left(\frac{m-1}{2}\right)$. 
In the formula \refeq(co-ker): 
\bqn \lbeq(co-ker-t)
G_0(\lam,x) = \frac{e^{i\lam|x|}}
{C_{m\ast} |x|^{m-2}}
\int^\infty_0 e^{-t}t^{\frac{m-3}{2}}
\left(t-2i\lam|x| \right)^{\frac{m-3}{2}}dt,
\eqn 
write $(t-2i\lam|x|)^{\frac{m-3}{2}}
=(t-2i\lam|x|)^{\n}(t-2i\lam|x|)^{-\frac{1}{2}}$, 
expand  $(t-2i\lam|x|)^{\n}$ via the 
binomial formula and use the identity  
\bqn \lbeq(frac)
z^{-\frac12}= \frac1{\sqrt{\pi}}\int_0^\infty 
e^{-az}\, a^{-\frac12}\, da, \quad {\Re\, } z>0 
\eqn
for $(t-2i\lam|x|)^{-\frac12}$. The right hand side 
of \refeq(co-ker-t) becomes 
\[
\sum_{j=0}^\n \frac{(-2i)^j}{\sqrt{\pi}C_{m\ast}}
\begin{pmatrix} \n \\ j \end{pmatrix} 
\iint_{\R_{+}^2} e^{-(1+a)t} t^{2\n-j}  
\left(e^{i\lam|x|(1+2a)}\frac{(\lam|x|)^{j}}{|x|^{m-2}} \right) 
\frac{dt}{\sqrt{t}}\frac{da}{\sqrt{a}}. 
\]
The integral converges absolutely if $m\geq 4$ and 
we obtain \refeq(co-kerb) after 
performing the integral with respect to $t$. \edpf

\subsection{Spectral measure of $H_0$}. 
The spectral measure of $H_0=-\lap$ is AC and Stone's theorem 
implies that the spectral projection $E_0(d\mu)$ 
is given for $\mu=\lam^2$, $\lam>0$ by    
\[
E_0(d\mu)=\frac{1}{2\pi i}
(R_0(\mu+i0)-R_0(\mu-i0))d\mu  
= \frac{1}{i\pi} (G_0(\lam)-G_0(-\lam)) \lam d\lam .
\]

\bglm \lblm(mulfo)  Let $m \geq 3$ and 
$u, v \in (L^1\cap L^2)(\R^m)$. Then, 
both sides of the 
following equation 
can be continuously extended to $\lam=0$ and  
\bqn \lbeq(devi-by)
\lam^{-1} \la v, (G_0(\lam)-G_0(-\lam)) u \ra = 
\la |D|^{-1}v, (G_0(\lam)-G_0(-\lam)) u \ra , \quad 
\lam \geq 0.  
\eqn 
For bounded continuous functions $f$ on $\R$ 
we have for $\lam \geq 0$, 
\bqn 
f(\lam)\la v, (G_0(\lam)u -G_0(-\lam))u \ra 
= \la v, (G_0(\lam)u -G_0(-\lam))f(|D|)u \ra. \lbeq(fd-1) 
\eqn
\edlm 
\bgpf For $u, v \in (L^1\cap L^2)(\R^m)$ we have   
\bqn 
\la v, (G_0(\lam)-G_0(-\lam)) u \ra= 
 \frac{\lam^{m-2}i}{2(2\pi)^{m-1} } \int_{\Si} 
\overline{\hat v(\lam\w)}\hat{u}(\lam\w)d\w,  
\lbeq(spec-mea)
\eqn 
where $\Si={\mathbb S}^{m-1}$. It follows, since 
$\widehat{|D|^{-1}v}(\lam\w)= \lam^{-1}\hat{v}(\lam\w)$, 
$\lam>0$, that 
\bqn 
\la |D|^{-1} v, (G_0(\lam)-G_0(-\lam)) u \ra
= \frac{\lam^{m-3}i}{2(2\pi)^{m-1} } 
\int_{\Si} \overline{\hat v(\lam\w)} \hat{u}(\lam\w)d\w.  
\lbeq(spec-mea-1)
\eqn 
The right side extends to a continuous 
function of $\lam \geq 0$ when $m\geq 3$ 
and \refeq(devi-by) follows by comparing 
\refeq(spec-mea) 
and \refeq(spec-mea-1). Eqn. \refeq(fd-1) likewise follows. 
\edpf 

We define the spherical average of a function $f$ on 
$\R^m$ by 
\begin{equation}\lbeq(M)
M(r, f) = \frac1{\omega_{m-1}}\int_{\Si} f(r\w) d\w, 
\quad \mbox{for all}\ r \in \R. 
\end{equation}
We often write $M_f(r)=M(r,f)$. We have 
$M_f(-r)=M_f(r)$ and H\"older's inequality implies 
\bqn \lbeq(m-holder)
\left(\frac1{\w_{m-1}}
\int_0^\infty |M_f(r)|^p r^{m-1}dr\right)^{1/p} 
\leq \|f\|_p, \quad 1\leq p \leq \infty.
\eqn 
For an even function $M(r)$ of $r\in \R$, 
define $\tilde M(\rho)$ by  
\bqn \lbeq(tiM)
\tilde{M}(\rho )= 
\int_{\rho }^\infty rM(r) dr 
\left(= -\int_{-\infty}^\rho  rM(r)dr\right).
\eqn 
\bglm Suppose $M(r)=M(-r)$ and $\la r \ra^2 M(r)$ is 
integrable. Then, 
\bqn  
\lbeq(MtildeM-1)
\int_{\R} e^{-ir\lam } r M(r)dr 
= \frac{\lam}{i} \int_\R e^{-ir \lam} \tilde M(r) dr,\\ 
\int_{\R} \tilde{M}(r) dr= 
\int_{\R} r^2M (r) dr. 
\eqn 
\edlm 
\bgpf Since $rM(r)= -\tilde{M}(r)'$, integration by parts 
gives the first equation. We differentiate 
both sides of the first and set $\lam=0$. The second 
follows. \edpf 

We denote $\check u(x) =u(-x)$, $x\in \R^m$. 
(The sign $\check{u}$ will be reserved for this 
purpose and will not be used to denote the conjugate 
Fourier transform.)

\paragraph{Representation formula for odd dimensions.}
\begin{lemma}\lblm(p-stg) Let $m\geq 3$ be odd 
and $u,\p \in C_0^\infty(\R^m)$. Define 
$c_j = {\w}_{m-1}C_j$, $1\leq j\leq \frac{m-3}2$, 
where $C_j$ are the constants in {\rm \refeq(coker)}. 
Then, for $\lam>0$ we have 
\bqn \lbeq(spheric)
\la \p , (G_0(\lam)-G_0(-\lam))u \ra  
= \sum_{j=0}^{\frac{m-3}2} c_j (-1)^{j+1}\lam^{j} 
\int_{\R} e^{-i\lam r}r^{1+j} 
M_{\overline{\p}\ast \check u}(r)dr. 
\eqn 
\end{lemma}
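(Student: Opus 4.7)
The plan is to insert the explicit kernel formula \refeq(coker) from \reflm(odd-green) into $\la \psi, G_0(\pm\lam)u\ra$ and convert the resulting integral over $\R^m$ into a radial integral via the spherical mean $M_{\overline{\psi}\ast\check u}$. The computation is entirely elementary and, since $\psi,u\in C_0^\infty(\R^m)$, every step enjoys absolute convergence.

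First I would rewrite the matrix element as a convolution against $G_0(\lam,\cdot)$. The change of variables $(x,y)\mapsto(w,z)=(x,x-y)$ yields
$$
\la\psi,G_0(\lam)u\ra=\int_{\R^m} G_0(\lam,z)\,(\overline{\psi}\ast\check u)(z)\,dz,
$$
and the same identity holds with $\lam$ replaced by $-\lam$ (we use here that \refeq(coker) defines an entire function of $\lam$). Both integrals converge absolutely since $\overline{\psi}\ast\check u\in C_0^\infty(\R^m)$ and $|z|^{-(m-2)}$ is locally integrable on $\R^m$. Substituting \refeq(coker) and invoking the polar-coordinate identity $\int f(z)g(|z|)\,dz=\w_{m-1}\int_0^\infty M_f(r)g(r)r^{m-1}\,dr$, the power $|z|^{j-(m-2)}$ from the $j$-th term combines with the Jacobian $r^{m-1}$ into $r^{j+1}$, while $\w_{m-1}$ absorbs $C_j$ into $c_j$, giving
$$
\la\psi,G_0(\lam)u\ra=\sum_{j=0}^{(m-3)/2} c_j\,\lam^j\!\int_0^\infty e^{i\lam r}\,r^{j+1}\,M_{\overline{\psi}\ast\check u}(r)\,dr,
$$
and the analogous identity for $G_0(-\lam)$ has $e^{i\lam r}$ replaced by $(-1)^j e^{-i\lam r}$.

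Finally I would fold the half-line integrals into the full line by means of the identity $M_{\overline{\psi}\ast\check u}(-r)=M_{\overline{\psi}\ast\check u}(r)$. Writing $M$ for $M_{\overline{\psi}\ast\check u}$, the substitution $r=-s$ gives $\int_0^\infty e^{i\lam r}r^{j+1}M(r)\,dr=(-1)^{j+1}\!\int_{-\infty}^0 e^{-i\lam s}s^{j+1}M(s)\,ds$, because the Jacobian together with $(-s)^{j+1}$ produces the factor $(-1)^{j+1}$ while $M$ is unchanged. Subtracting the $-\lam$ contribution, the two $(-1)^{j+1}$'s on the half lines $(-\infty,0]$ and $[0,\infty)$ agree (using $-(-1)^j=(-1)^{j+1}$), and the two half-line integrals assemble into
$$
\la\psi,(G_0(\lam)-G_0(-\lam))u\ra=\sum_{j=0}^{(m-3)/2} c_j\,(-1)^{j+1}\lam^j\!\int_{\R} e^{-i\lam r}\,r^{j+1}\,M_{\overline{\psi}\ast\check u}(r)\,dr,
$$
which is \refeq(spheric). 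The only delicate point, and so the main obstacle, is the parity bookkeeping: the sign $(-1)^{j+1}$ appearing in the statement comes from the interplay of the factor $(-\lam)^j$ from $G_0(-\lam)$ with the sign $(-1)^{j+1}$ produced by the reflection $r\mapsto -s$, and one must check that the even-$j$ and odd-$j$ terms recombine with the same sign so that a single unified formula on $\R$ results.
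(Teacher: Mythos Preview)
Your proof is correct and follows essentially the same approach as the paper's: both insert the kernel \refeq(coker), rewrite $\la\p,G_0(\pm\lam)u\ra$ as a radial integral involving $M_{\overline{\p}\ast\check u}$ via polar coordinates, and then exploit the evenness of $M$ under $r\mapsto -r$ to assemble the two half-line integrals into one over $\R$. The only cosmetic difference is the order of the sign manipulations---the paper first adds $\la\p,G_0(\lam)u\ra$ and $-\la\p,G_0(-\lam)u\ra$ into a full-line integral with $e^{i\lam r}$ and then substitutes $r\mapsto -r$ at the very end to produce the factor $(-1)^{j+1}$, whereas you substitute $r\mapsto -s$ in the $G_0(\lam)$ term first and then combine---but the content is identical.
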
 
\begin{proof} We compute $\la \p , G_0 (\lam )u \ra$ 
by using the integral kernel \refeq(coker) of $G_0(\lam)$. 
Change the order 
of integration and use polar coordinates. Then, 
\begin{align*}
& \la \p , G_0 (\lam )u \ra 
= \sum_{j=0}^{\frac{m-3}2}C_j
\int_{\R^m}\overline{\p(x)}
\left(\int_{\R^m}
\frac{\lam^je^{i\lam |y|}u(x-y)}{|y|^{m-2-j}}dy
\right)dx \\
=& \sum_{j=0}^{\frac{m-3}2} C_j 
\int_{\R^m}
\frac{\lam^j e^{i\lam |y|}(\overline{\p} \ast \check u)(y)}
{|y|^{m-2-j}}dy = \sum_{j=0}^{\frac{m-3}2} c_j 
\int_0^{\infty} \lam^j 
e^{i\lam r}r^{1+j} M_{\overline{\p} \ast \check u}(r)dr. 
\end{align*}
Since $M_{\overline{\p} \ast \check u}(r)$ is even, 
change of variable $r$ by $-r$ yields  
\[
-\la \p, G_0 (-\lam)u \ra =
\sum_{j=0}^{\frac{m-3}2} c_j 
\int_{-\infty}^0 \lam^j e^{i\lam r}r^{1+j} 
M_{\overline{\p} \ast \check u}(r)dr. 
\]
Add both sides of last two equations and change  
$r$ by $-r$.  
\end{proof}

\paragraph{Representation formula for even dimensions.} 
If $m$ is even, we have the analogue of 
\refeq(spheric). For a function $M(r)$ on $\R$ and $a>0$, 
define   
\[
M^a(r)=M((1+2a)^{-1}r).
\] 

\begin{lemma} Let $m\geq 2$. Let 
$u,\p \in {C}_0^\infty(\R^m)$. Then
\begin{gather} 
\langle \p ,(G_0(\lam)- G_0(-\lam))u\rangle 
= \sum_{j=0}^\n (-1)^{j+1} T_j^{(a)} \left[
\frac{\lam^j 
\Fg (r^{j+1} M^{a}_{\overline{\p } \ast \check{u}})(\lam)}
{(1+2a)^{j+2}}
\right],   \lbeq(usformula) \\
\mbox{For $j=0$}, \ -T_0^{(a)} 
\left[
\frac{
\Fg(r M^a_{\overline{\p } \ast \check{u}})(\lam)
}
{(1+2a)^2} \right]
= iT_0^{(a)}\left[
\frac{
\lam (\Fg \widetilde{M^a_{\overline{\p } \ast \check{u}}})(\lam)
}
{(1+2a)^2} 
\right]. 
\lbeq(usformula-add)
\end{gather}
\end{lemma}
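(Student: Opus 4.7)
The plan is to mirror the proof of \reflm(p-stg), substituting the superposition representation \refeq(co-kerb) of $G_0(\lam,x)$ in place of the closed-form exponential-polynomial kernel \refeq(coker), and then carrying the outer operator $T_j^{(a)}$ through the computation.

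First I would start from
\[
\la \p, G_0(\lam) u\ra = \int_{\R^m} \overline{\p(x)}\left(\int_{\R^m} G_0(\lam,x-y)u(y) dy\right)dx,
\]
rewrite the inner integral as a convolution with $G_0(\lam,y)$ acting on $\check u$ via the identity $\int \overline{\p(x)} u(x-y)dx = (\overline{\p}\ast\check u)(y)$, plug in \refeq(co-kerb), and interchange the $a$-integral inside $T_j^{(a)}$ with the $y$-integral (Fubini is valid since $\overline{\p}\ast\check u$ has compact support and the $a$-weight is integrable for $m\geq 4$). Passing to polar coordinates $y=r\omega$ turns the $y$-integral into the spherical average, yielding
\[
\la \p, G_0(\lam)u\ra = \sum_{j=0}^{\n} T_j^{(a)}\!\left[\int_0^\infty e^{i\lam r(1+2a)}\lam^j r^{j+1} M_{\overline{\p}\ast\check u}(r)dr\right],
\]
and the factor $\w_{m-1}^{-1}$ in \refeq(co-kerb) is absorbed by the $\w_{m-1}$ from polar coordinates.

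Next I would replace $\lam$ by $-\lam$ to obtain the corresponding formula for $\la\p,G_0(-\lam)u\ra$, then send $r\to -r$ in the $r$-integral and use the evenness $M_{\overline{\p}\ast\check u}(-r)=M_{\overline{\p}\ast\check u}(r)$. After collecting signs, the subtraction produces
\[
\la \p,(G_0(\lam)-G_0(-\lam))u\ra = \sum_{j=0}^\n T_j^{(a)}\!\left[\lam^j \int_\R e^{i\lam r(1+2a)} r^{j+1} M_{\overline{\p}\ast\check u}(r)dr\right].
\]
To recognize the inner integral as a Fourier transform at $\lam$, I change variable $s=-(1+2a)r$; this rescales $M$ into $M^a$ by definition, produces the Jacobian $(1+2a)^{-(j+2)}$ from $dr$ and $r^{j+1}$, and contributes a sign $(-1)^{j+1}$ from the reflected interval combined with $r^{j+1}=(-1)^{j+1}s^{j+1}/(1+2a)^{j+1}$. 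This gives exactly \refeq(usformula).

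Finally, \refeq(usformula-add) is immediate: applying \refeq(MtildeM-1) to the even function $M=M^a_{\overline{\p}\ast\check u}$ (whose required decay is inherited from $M_{\overline{\p}\ast\check u}$ since $\overline{\p}\ast\check u$ has compact support) gives $\Fg(rM^a)(\lam)=-i\lam\,\Fg(\widetilde{M^a})(\lam)$; dividing by $(1+2a)^2$ and applying $T_0^{(a)}$ yields the stated identity. The only real obstacle is keeping the bookkeeping of signs and powers of $(1+2a)$ straight through the chain of changes of variables; once that is done, the structural parallel with the odd-dimensional proof is exact.
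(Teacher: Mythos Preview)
Your proposal is correct and follows essentially the same route as the paper's proof: both start from the superposition formula \refeq(co-kerb), pass to polar coordinates via $(\overline{\p}\ast\check u)$ and the spherical average, exploit the $\lam\to-\lam$/$r\to-r$ symmetry with evenness of $M$ to extend the $r$-integral to all of $\R$, and then perform the final substitution $r\to -r/(1+2a)$ to produce the factor $(-1)^{j+1}(1+2a)^{-(j+2)}$ and the Fourier transform of $r^{j+1}M^a$. The derivation of \refeq(usformula-add) via \refeq(MtildeM-1) is likewise identical.
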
  
\bgpf  Define 
$ B_j(\lam,r,a)= e^{i\lam r(1+2a)}(\lam r)^{j}r^{-(m-2)}$  
and  
\[ 
B_j(\lam,a)u(x)= \int_{\R^m}
B_j(\lam, |y|, a)u(x-y) dy, \quad j=0, \dots, \n.
\]
Then, \refeq(co-kerb) and change of the order of 
integrations 
imply  
\bqn \lbeq(gom) 
\la \p , (G_0(\lam)-G_0(-\lam)) u \ra  =\sum_{j=0}^\n 
\frac{T_j^{(a)}}{\w_{m-1}}
\left[ \la \p , (B_j(\lam,a)-B_j(-\lam,a))u \ra \right].
\eqn 
We have, as in odd dimensions, that 
for $u \in \Sg(\R^m)$ and $\p  \in L^1(\R^m)$ 
\begin{align*}
& \la \p , B_j(\lam,a) u\ra= \int_{\R^{m}}\left(\int_{\R^m}
\overline{\p (x)} B_j(\lam, |y|, a)u(x-y)dy \right)dx \notag \\
& = \int_{\R^m} B_j(\lam, |y|,a) 
(\overline{\p } \ast \check{u})(y)dy 
= {\w}_{m-1}\int_0^\infty e^{i(1+2a)\lam{r}}(\lam{r})^j 
r M_{\overline{\p } \ast \check{u}}(r) dr. 
\end{align*} 
Replacing $\lam$ to $-\lam$ and changing the variable 
$r$ to $-r$, we have 
\[
-\la \p , B_j(-\lam,a) u\ra
= {\w}_{m-1} \int_{-\infty}^0 e^{i(1+2a)\lam{r}}(\lam{r})^j r 
M_{\overline{\p } \ast \check{u}}(r) dr,
\] 
where we used that 
$M_{\overline{\p } \ast \check{u}}(-r)
=M_{\overline{\p } \ast \check{u}}(r)$.
Adding these two yields
\bqn \lbeq(k-3)
\la \p , (B_j(\lam,a)- B_j(-\lam,a)) u\ra 
= {\w}_{m-1} \int_{\R} e^{i(1+2a)\lam r}(\lam r)^j r 
M_{\overline{\p } \ast \check{u}}(r) dr .
\eqn 
Change $r$ to $-r$ in the right of \refeq(k-3), 
plug the result with \refeq(gom) 
and, at the end, change the variable 
$r$ to $-r/(1+2a)$. Then, \refeq(k-3) becomes   
\[ 
\frac{(-1)^{j+1}{\w}_{m-1}}{(1+2a)^{j+2}}
\int_{\R} e^{-i\lam r} \lam^j r^{j+1} 
M^{a}_{\overline{\p } \ast \check{u}}(r) dr 
= \frac{(-1)^{j+1}{\w}_{m-1} \lam^j}{(1+2a)^{j+2}}
\Fg (r^{j+1} M^{a}_{\overline{\p } \ast \check{u}})(\lam)
\]
and \refeq(usformula) follows. 
If we use the first of \refeq(MtildeM-1),   
the right of the last equation for $j=0$ becomes   
\[
\frac{i\lam(\Fg \widetilde{M_{\overline{\p } 
\ast \check{u}}^a})(\lam)}
{(1+2a)^2}{\w}_{m-1}
\]
and we obtain \refeq(usformula-add). 
\end{proof}

\subsection{Some results from harmonic analysis.}  

The following lemma on weighted inequality 
(cf. \cite{Gr}, Chapter 9) plays crucial role in 
this paper.  

\begin{lemma}\lblm(ap) The weight function $|r|^{a}$ is  
an $A_p$ weight on $\R$ if and only 
if $-1< a <p-1$. 
The Hilbert transform $\tilde \Hg$ and the 
Hardy-Littlewood maximal operator 
$\Mg$ are bounded in $L^p(\R, w(r)dr)$ for 
$A_p$ weights $w(r)$.  
\end{lemma}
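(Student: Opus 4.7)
The plan is to split the lemma into two essentially independent parts: the characterization of the power weights $|r|^a$ as $A_p$-weights, and the Muckenhoupt-type boundedness of $\tilde\Hg$ and $\Mg$. The second part is a classical result of Muckenhoupt--Hunt--Wheeden which I would simply cite from Grafakos, Chapter 9, so essentially all the work is in verifying the power-weight characterization.

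For the power-weight statement, I would first record the definition
\[
[w]_{A_p}=\sup_I\left(\frac{1}{|I|}\int_I w(r)\,dr\right)\left(\frac{1}{|I|}\int_I w(r)^{-1/(p-1)}\,dr\right)^{p-1},
\]
taken over bounded intervals $I\subset\R$. Necessity of $-1<a<p-1$ is immediate: plugging $I=(-1,1)$ and $w(r)=|r|^a$, the first integral is finite iff $a>-1$, and the second integral, with integrand $|r|^{-a/(p-1)}$, is finite iff $-a/(p-1)>-1$, i.e.\ $a<p-1$. Both must hold for $[w]_{A_p}<\infty$.

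For sufficiency, I would exploit two symmetries of the weight $|r|^a$. First, the reflection $r\mapsto -r$ preserves $|r|^a$, so it is enough to consider intervals $I=[c-d,c+d]$ with $c\ge 0$. Second, under the dilation $r\mapsto\lam r$ the weight transforms as $|\lam r|^a=\lam^a|r|^a$, which leaves $[w]_{A_p}$ invariant; so one may normalize $d=1$ and look only at $I_c=[c-1,c+1]$ with $c\ge 0$. Then two regimes appear: for $c\ge 2$ the weight $|r|^a$ and $|r|^{-a/(p-1)}$ are pinched between constant multiples of $c^a$ and $c^{-a/(p-1)}$ respectively on $I_c$, and the $A_p$-product telescopes to a constant independent of $c$; for $c\in[0,2]$ the two integrals
\[
\int_{c-1}^{c+1}|r|^a\,dr,\qquad \int_{c-1}^{c+1}|r|^{-a/(p-1)}\,dr
\]
are continuous functions of $c$ under the hypotheses $a>-1$ and $a<p-1$ (thanks to the integrability of both $|r|^a$ and $|r|^{-a/(p-1)}$ near $r=0$), hence bounded on the compact interval $c\in[0,2]$.

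I expect no serious obstacle: the only delicate point is the continuity of the two one-parameter integrals at $c=1$, where the interval starts to contain the origin, but this is a routine consequence of dominated convergence using a common local majorant $\1_{[-2,2]}(|r|^a+|r|^{-a/(p-1)})$, which is integrable precisely under the stated hypotheses. Combining these cases bounds $[w]_{A_p}$ uniformly, and the lemma follows after citing the Muckenhoupt weighted inequality for $\tilde\Hg$ and $\Mg$.
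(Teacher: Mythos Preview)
Your argument is correct. The paper, however, does not prove this lemma at all: it is stated as a known result with a reference to \cite{Gr}, Chapter~9, and no proof is given. Both the power-weight characterization (which appears as a worked example in Grafakos) and the Muckenhoupt--Hunt--Wheeden boundedness theorem are treated as background facts to be cited. So your approach differs from the paper's only in that you actually supply the (standard) verification that $|r|^a\in A_p$ iff $-1<a<p-1$, via the dilation/reflection reduction and the near/far splitting in the center parameter $c$; the paper simply absorbs this into the citation. What your write-up buys is self-containment; what the paper's approach buys is brevity, since the lemma is used only as a black box throughout.
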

Modifying the Hilbert transform $\tilde{\Hg}$, we define 
\bqn \lbeq(Hg) 
\Hg u(\rho ) = \frac{(1+\tilde \Hg)u(\rho )}{2} 
= \frac1{2\pi}\int_0^\infty e^{ir\rho } \hat{u}(r)dr. 
\eqn 
We shall repeatedly use following $A_p$ weights 
on $\R^1$ to the operator $\Mg\Hg$:  
\bqn \lbeq(ex-weight)
\mbox{$|r|^{m-1-p(m-1)}$,\ \ $|r|^{m-1-2p}$,\ \ 
$|r|^{m-1-p}$ \ \ and \ \ $|r|^{m-1}$}, 
\eqn 
respectively for $1<p<\frac{m}{m-1}$, 
$\frac{m}3<p<\frac{m}2$, $\frac{m}2<p<m$ and $m<p$.   

For a function $F(x)$ on $\R^m$, we say $G(|x|)\in L^1(\R^m)$ 
is a radial decreasing integrable majorant 
(RDIM for short) of $F$ if $G(r)>0$ is decreasing 
and $|F(x)| \leq G(|x|)$ for a.e. 
$x\in \R^m$. The following lemma is well known 
(see e.g. \cite{Stein}, p.57). 

\bglm \lblm(max) 
\ben
\item[{\rm (1)}]
A rapidly decreasing function $F \in \Sg(\R^m)$ has a RDIM.
\item[{\rm (2)}] If $F$ has a RDIM. then there is a constant 
$C>0$ such that 
\begin{equation}\lbeq(max)
|(F \ast u)(t)| \leq C(\Mg u)(t), 
\quad t \in \R. 
\end{equation}
\een
\edlm

\bglm \lblm(add-cut) 
For $u$ and $F \in L^1(\R)$ such that 
$\hat{u}, \, \hat{F} \in L^1(\R)$ we have 
\bqn \lbeq(add-cut)
\frac1{2\pi}\int_0^\infty e^{i\lam\rho }
F(\lam) \hat{u}(\lam ) d\lam =  (\Fg^\ast{F}\ast \Hg u)(\rho ). 
\eqn 
\edlm 
\bgpf Let $\Th(\lam)=\left\{\br{l} 1, \ \mbox{for}\ \lam>0  \\ 
0, \ \mbox{for}\ \lam\leq 0 \er \right. $.   
Then, the left side of \refeq(add-cut) equals  
\begin{multline*}
\frac1{2\pi}\int_{\R} e^{i\lam\rho }
F(\lam)\Th(\lam)\hat{u}(\lam) d\lam= 
\frac1{2\pi}\int_{\R}\left(
\int_{\R} e^{i\lam(\rho -\xi)}\Fg^\ast{F}(\xi)d\xi
\right) \Th(\lam)\hat{u}(\lam) d\lam\\
=\int_{\R} \Fg^\ast {F}(\xi) 
\Fg^\ast \{\Th(\lam)\hat{u}(\lam)\}(\rho -\xi) d\xi 
=(\Fg^\ast{F} \ast \Hg u)(\rho ) 
\end{multline*} 
as desired. 
\edpf

%%%%%%%%%%%%%%%%%%%%%%%%%%%%%%%%%%%%%%%%%%%%%%%%

\section{Reduction to the low energy analysis} 
We write $W_{-}=W$ in the sequel. 
When $u \in \ax^{-s}L^2$, $s>1/2$, $Wu$ may be expressed 
via the boundary values of resolvents (e.g. \cite{Ku-0}):
\begin{align} 
W u & =u-\lim_{\ep \downarrow 0, N\uparrow \infty}
\frac{1}{\pi i}\int^N_{\ep}  
G(\lam)V(G_0(\lam)-G_0(-\lam))u \lam d\lam
\lbeq(stationary-0) \\
& = u -\frac{1}{\pi i}\int^\infty_{0}  
G(\lam)V(G_0(\lam)-G_0(-\lam))u\lam d\lam 
\lbeq(stationary)
\end{align} 
Here the right of \refeq(stationary-0) is the Riemann integral 
of an $\ax^{t}L^2$-valued continuous function  
where $t>1/2$ is such that $s+t>2$, the result belongs 
to $L^2(\R^m)$ and the limit exists in $L^2(\R^m)$, 
which we symbolically write as \refeq(stationary). 

We decompose $W$ into the high and the low energy parts 
\bqn \lbeq(lh-decom)
W=W_{>} + W_{<} \equiv W \Psi(H_0) + 
W \Phi(H_0),
\eqn 
by using cut off functions $\Phi\in C_0^\infty(\R)$ 
and $\Psi\in C^\infty(\R)$ such that 
\[
\Phi(\lam^2) + \Psi(\lam^2) \equiv 1, 
\quad \Phi(\lam^2)=1 \ \mbox{near $\lam=0$  and 
$\Phi(\lam^2)=0$ for $|\lam|>\lam_0$}
\] 
for a small constant $\lam_0>0$. 
We have proven in previous papers \cite{Y-odd, FY} 
that, under \refass(V),  
$W_{>}$ is bounded in $L^p(\R^m)$ for all 
$1\leq p \leq \infty$ if $m \geq 3$ and we only 
need to study  $W_{<} = \Phi(H_0)+ Z$ where 
\bqn 
Z=  - \frac{1}{\pi i}\int^\infty_{0}  
G(\lam)V(G_0(\lam)-G_0(-\lam))\lam \Phi(H_0) d\lam. 
\lbeq(stationary-l)
\eqn 
Evidently $\Phi(H_0)\in \Bb(L^p(\R^m))$ for all 
$1\leq p \leq \infty$ and we have only to study 
the operator $Z$ defined by \refeq(stationary-l). 
Since $\delta>2$, the LAP 
(cf. Lemma 2.2 of \cite{Y-odd}) implies that $G_0(\lam)V$ 
is a H\"older continuous function of 
$\lam \in \R$ with values in $\Bb_\infty (L^{-s})$ 
for any $\frac12<s<\delta-\frac12$ and, the absence 
of positive eigenvalues (\cite{Kato-e}) implies 
that $1+ G_0(\lam)V$ is invertible for $\lam>0$ 
(cf. \cite{Ag}). It follows from the resolvent equation 
$G(\lam)= G_0(\lam)-G_0(\lam)V G(\lam)$ that $G(\lam)V$ 
may be expressed in terms of $G_0(\lam)V$:  
\bqn \lbeq(res-eq0)
G(\lam)V= 
G_0(\lam)V(1+ G_0(\lam)V)^{-1}\ \mbox{for} \ \lam \not=0  
\eqn 
and it is {\it locally} H\"older continuous for  
$\lam\in \R \setminus\{0\}$ with values in 
$\Bb_\infty (L^2_{-s})$. Thus, we have the expression 
of $Z$ in terms of the free resolvent $G_0(\lam)$:  
\bqn \lbeq(z)
Z u = -\frac{1}{\pi i}\int^\infty_{0}  
G_0(\lam)V(1+ G_0(\lam)V)^{-1}
(G_0(\lam)-G_0(-\lam))\lam F(\lam)u d\lam, 
\eqn 
where $F(\lam)= \Phi(\lam^2)$. If $H$ is of 
generic type, ${\rm Ker}_{L^2_{-s}} (1+ G_0(0)V)=\Ng=\{0\}$ 
for any $\frac12<s<\delta-\frac12$ and 
$1+G_0(\lam)V$ is invertible for $\lam$ in a  
neighborhood $\lam=0$ and both sides of \refeq(res-eq0) 
become H\"older continuous. We then have 
shown in \cite{Y-odd, FY} that $Z$ is bounded in $L^p(\R^m)$ 
for all $1\leq p \leq \infty$ under \refass(V). 

\subsection{Low energy behavior of $(1+ G_0(\lam)V)^{-1}$. }

If $H$ is of exceptional type,   
$(1+ G_0(\lam)V)^{-1}$ becomes singular at $\lam=0$ 
and we describe its singularities here.  Before doing 
so we recall some properties of functions in $\Ng$. 
Recall (\cite{Stein-O}) that for $0<s<m$ : 
\bqn \lbeq(fractional)
|D|^{-s}u(x) = \Fg^\ast (|\xi|^{-s} \hat{u})(x) 
= \frac{\Ga\left(\frac{m-s}2\right)}
{2^s \pi^{\frac{m}2}\Ga\left(\frac{s}2\right)} 
\int_{\R^m}\frac{u(y)}{|x-y|^{m-s}} dy .
\eqn 
When $s=1$ and $s=2$, the constants in front of the integral 
respectively equal to $\pi^{-1}\omega_{m-2}^{-1}$ and 
$C_0=(m-2)^{-1}\w_{m-1}^{-1}$ of \refeq(coker).  

\bglm \lblm(asymp-reei) 
\ben 
\item[{\rm (1)}] 
Functions $\f$ in $\Ng$ satisfy 
$\f \in \ax^{-s}H^2(\R^m)\cap C^1(\R^m)$ 
for any $s>1/2$ and $\nabla\f$ is H\"older continuous. 
They satisfy the following asymptotic expansion as $|x|\to \infty$:  
\begin{multline}
\f(x) = -\frac{C_0}{|x|^{m-2}}\int_{\R^m}(V\f)(y) dy  \\
- \frac{1}{\w_{m-1}}\sum_{j=1}^m 
\frac{x_j}{|x|^{m}}
\int_{\R^m} y_j (V\f)(y) dy  + O(|x|^{-m}). \lbeq(asymp-d) 
\end{multline} 
\item[{\rm (2)}] 
For $\f\in \Ng \setminus\{0\}$,  
$\f \otimes \f \in \Bb(L^p(\R^m))$ for $\frac{m}{m-2}<p<\frac{m}2$ if 
$\f \in \Ng \setminus \Eg_0$ ($m\geq 5$), for $\frac{m}{m-1}<p< m$ 
if $\f \in \Eg_0 \setminus \Eg_1$ and for $1<p< \infty$ if $\f \in \Eg_1$.  
\item[{\rm (3)}] If $\ax^2 u \in L^1(\R^m)$, $|D|^{-1}u(x)$ has the 
following expansion as $|x|\to \infty$: 
\bqn \lbeq(asymp-d1) 
\frac{\int_{\R^m} u dx }{\pi \omega_{m-2}|x|^{m-1}}+ 
\sum_{j=1}^m 
\frac{(m-1)x_j }{\pi \omega_{m-2}|x|^{m+1}}
\int_{\R^m}x_j u dx  + O(|x|^{-m-1}) .
\eqn 
\een
\edlm 
\bgpf (1) The smoothness property of $\f$ is well known 
(see e.g. Corollary 2.6 of \cite{AS}). 
We have from \refeq(fractional) that  
\bqn \lbeq(redfe)
\f(x) = -C_0 \int_{\R^m} \frac{V(y)\f(y)}{|x-y|^{m-2}} dy. 
\eqn
Taylor's formula implies that 
\[
\left|\frac{1}{|x-y|^{m-2}}-\frac1{|x|^{m-2}}
- \frac{(m-2)x\cdot y}{|x|^{m-1}} \right|
\leq C \frac{\ay^2} {\ax^{m}}, \quad |x-y|\geq 1 
\]
and \refeq(asymp-d) follows. Statement 
(2) follows from \refeq(asymp-d). 
We omit the proof of (3) which is similar to that 
of \refeq(asymp-d1).
\edpf 

\subsubsection{Odd dimensional cases}

The structure of singularities depends on $m$. For odd 
dimensions $m \geq 3$ we have the following 
results (see, e.g. Theorem 2.12 of \cite{Y-odd}). 
We state it separately for $m=3$ and $m \geq 5$.  
In the following \refths(III,m=5) for odd $m \geq 3$ and 
\refth(even6) for even $m\geq 6$, we will indiscriminately 
write $E(\lam)$ for the operator valued function of 
$\lam$ defined near $\lam=0$ which, when inserted in 
\refeq(z) for $(1+ G_0(\lam)V)^{-1}$, produces the operator 
which is bounded in $L^p(\R^m)$ for all $1\leq p \leq \infty$.

\paragraph{The case $m =3$.} By virtue of \refeq(asymp-d), 
we have for $\f \in \Ng$ that 
\bqn \lbeq(asym-3)
\f(x) = \frac{L(\f)}{|x|} + O(|x|^{-2}) \ 
\mbox{as $|x|\to \infty$}, \quad 
L(\f) = \frac{-1}{4\pi}\int_{\R^3}V(x)\f(x) dx.
\eqn 
Thus, 
$\Eg= \{\f \in \Ng\setminus\{0\} \colon L(\f)=0\}(=\Eg_0)$ 
and,  as $\Ng \ni \f \mapsto L(\f)\in \C$ 
is continuous, $\dim \Ng/\Eg \leq 1$. Any 
$\ph \in \Ng \setminus \Eg$ 
is called {\it threshold resonance} of $H$. 
We say that $H$ is of exceptional type 
of {\it the first kind} if $\Eg=\{0\}$, 
{\it the second} if $\Eg=\Ng$ and {\it the third kind} if 
$\{0\}\subsetneq \Eg \subsetneq \Ng$. 
We let $D_0, D_1, \dots$ be integral operators defined by 
\[
D_j u(x) = \frac1{4{\pi}j!} \int_{\R^3}|x-y|^{j-1}u(y) dy, 
\quad j=0,1, \dots.
\]
so that we have the formal Taylor expansion  
\[
G_0(\lam)u(x)= \frac1{4\pi}\int_{\R^3}
\frac{e^{i\lam|x-y|}}{|x-y|}u(y)dy 
=\sum_{j=1}^\infty (i\lam)^j D_j u .
\]
If $H$ is of exceptional type of the third kind, 
$-(V\f, \f)$ defines inner product on $\Ng$ and there 
is a unique real $\p\in \Ng$ such that 
\bqn \lbeq(res-eigen-orth)
-(V\p, \f)=0, \quad \forall \f \in \Eg, \quad 
-(V\p,\p)=1 \ \mbox{and}\ L(\p)>0. 
\eqn 
We define {\it the canonical resonance} by 
\bqn \lbeq(canonical)
\ph=\p + P VD_2 V \p\in \Ng.
\eqn 
If $H$ is of exceptional type of the first kind, 
then $\dim\Ng=1$ and there is a 
unique $\ph\in \Ng$ such that $-(V\ph, \ph)=1$ 
and $L(\ph)>0$ and we call this the canonical resonance. 
We have the following result for $m=3$ 
(see e.g. \cite{Y-odd}). 

\begin{theorem}\lbth(III) Let $m=3$ 
and let $V$ satisfy $|V(x)|\leq C\ax^{-\delta}$ 
for some $\delta>3$. Suppose that $H$ is of exceptional 
type of the third kind and let $\ph$ be the canonical 
resonance and $a = 4\pi i |\la V,\ph \ra|^{-2}$. Then:  
\bqn \lbeq(res)
(I+G_0(\lam)V)^{-1}
=\frac{PV}{\lam^2} + i\frac{PV D_3VPV}{\lam}  
- \frac{a}{\lam}|\varphi\ra \la \varphi| V  + E(\lam). 
\eqn 
If $H$ is of exceptional type of 
the first or the second kind, \refeq(res) 
holds with $P=0$ or $\ph=0$ respectively. 
\end{theorem}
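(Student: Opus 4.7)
The proof is a Jensen--Kato style Feshbach--Grushin resolvent expansion in the weighted space $X=\ax^{-s}L^2$, $1/2<s<\delta-1/2$. The Taylor expansion
\[
G_0(\lam)=\sum_{j\geq 0}(i\lam)^j D_j,
\]
valid as operators between suitable weighted $L^2$-spaces up through $j=3$ under the hypothesis $\delta>3$, yields
\[
M(\lam):=1+G_0(\lam)V = T_0 + i\lam D_1V - \lam^2 D_2V - i\lam^3 D_3V + O(\lam^4),
\]
where $T_0 = 1 + D_0V = 1 + (-\lap)^{-1}V$ is Fredholm of index zero on $X$ with $\ker T_0=\Ng$ and (using $\Cg^{-1}T_0\Cg=T_0$) $\ker T_0^*=V\Ng$.

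I would pick biorthogonal systems $\{\f_\alpha\}\subset\Ng$ and $\{v_\alpha\}\subset V\Ng$ with $\la v_\alpha,\f_\beta\ra=\delta_{\alpha\beta}$, set $\Pi=\sum_\alpha |\f_\alpha\ra\la v_\alpha|$, and decompose $X=\Ng\oplus(1-\Pi)X$. Since $T_0$ is an isomorphism on $(1-\Pi)X$, a standard Schur/Feshbach reduction reduces the inversion of $M(\lam)$ on $X$ to the inversion of a finite-dimensional matrix $\Mg(\lam)\colon\Ng\to\Ng$, all other contributions (external legs and $T_0^{-1}$ on $(1-\Pi)X$) being regular at $\lam=0$ and absorbed into $E(\lam)$. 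The entries of $\Mg(\lam)$ are computed from the Taylor expansion: $(D_1V\f)(x)\equiv -L(\f)$ is the constant function with value $-L(\f)$, hence $\la V\chi,D_1V\f\ra = 4\pi L(\chi)L(\f)$, and the linear-in-$\lam$ part of $\Mg$ is rank one and supported on the resonance direction (since $L\equiv 0$ on $\Eg$ and $L(\p)\neq 0$). On the $\Eg\times\Eg$ block, integration by parts with $(-\lap+V)\f=0$ turns $-\la v_\alpha,\lam^2 D_2V\f_\beta\ra$ into the positive-definite Gram matrix $\lam^2\la\nabla\f_\alpha,\nabla\f_\beta\ra$ (after the biorthogonal change of basis), so $\Mg(\lam)$ has the $\Eg\oplus\R\p$ block form
\[
\Mg(\lam)=\begin{pmatrix}\lam^2 A_\Eg & O(\lam^2)\\ O(\lam^2) & c\, i\lam\end{pmatrix}+O(\lam^3), \qquad A_\Eg\ \text{invertible},\ c\neq 0.
\]

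Inverting this $2\times 2$ block by Schur complement yields $\Mg(\lam)^{-1}$ with three singular pieces: a $\lam^{-2}$ coefficient on $\Eg$, a $\lam^{-1}$ rank-one piece in the $\p$-direction, and a $\lam^{-1}$ correction on $\Eg$ produced by the next-order term $-i\lam^3\Pi D_3V\Pi$. Pushing back through the Grushin formula and matching constants, the external legs convert the $\Eg$-projection into $PV$, giving the leading $PV/\lam^2$; the next-order $\Eg$-correction yields precisely $iPVD_3VPV/\lam$; and the resonance piece, after absorbing the $\Eg$--$\p$ cross-term at order $\lam^2$ via the substitution $\p\mapsto\ph=\p+PVD_2V\p$ of \refeq(canonical), becomes $-(a/\lam)|\ph\ra\la\ph|V$, with $a=4\pi i|\la V,\ph\ra|^{-2}$ forced by the normalization \refeq(res-eigen-orth). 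The first-kind ($\Eg=\{0\}$) and second-kind ($\Eg=\Ng$) cases fall out by setting $P=0$ or $\ph=0$ throughout, and $E(\lam)$ inherits H\"older continuity from that of $G_0(\lam)V$ between the weighted spaces.

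\textbf{Main obstacle.} The delicate step is identifying the $\lam^{-2}$ coefficient as \emph{exactly} $PV$ (with $P$ the \emph{orthogonal} projection onto $\Eg$) rather than just an operator with range $\Eg$, and producing the specific combination $\ph=\p+PVD_2V\p$ in place of $\p$ itself in the resonance coefficient. Both require matching the biorthogonal basis against the positive-definite form $-\la V\cdot,\cdot\ra$ on $\Ng$ and carefully tracking the single $\Eg$--$\p$ cross-term at order $\lam^2$; once that is done, the remainder of the algebra is essentially forced by the Grushin formula.
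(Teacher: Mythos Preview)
The paper does not prove this theorem; it quotes it from \cite{Y-odd} (``We have the following result for $m=3$ (see e.g.\ \cite{Y-odd})''). Your proposal follows the Jensen--Kato/Feshbach--Grushin scheme that is exactly the route taken in that reference, so in spirit you are reproducing the intended proof rather than offering an alternative.

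Two concrete corrections. First, the quadratic form that appears on the $\Eg\times\Eg$ block is not the Dirichlet form $\la\nabla\f_\alpha,\nabla\f_\beta\ra$ but the $L^2$ Gram matrix $\la\f_\alpha,\f_\beta\ra$. Indeed, for $\f,\psi\in\Eg$ one has $\lap\f=V\f$, $\lap D_2=D_0$ and $D_0V\psi=-\psi$, so two integrations by parts give
\[
\la V\f,D_2V\psi\ra=\la \lap\f,D_2V\psi\ra=\la\f,\lap D_2V\psi\ra=\la\f,D_0V\psi\ra=-\la\f,\psi\ra .
\]
This is precisely what forces the leading singular coefficient to be the \emph{orthogonal} projection $P$ (times $V$) when $\{\f_\alpha\}$ is taken $L^2$-orthonormal, resolving the ``main obstacle'' you flagged. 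Your version with $\la\nabla\f_\alpha,\nabla\f_\beta\ra=-\la V\f_\alpha,\f_\beta\ra$ would instead produce a projection orthogonal with respect to $-\la V\cdot,\cdot\ra$, which is the wrong object.

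Second, the assertion that $\delta>3$ already makes the Taylor expansion of $G_0(\lam)V$ valid in $\Bb(L^2_{-s})$ through $j=3$ with remainder $O(\lam^4)$ is optimistic: $D_3$ has kernel $|x-y|^2$, and the usual weighted-$L^2$ mapping bounds require stronger decay on $V$ (this is why \refass(V) imposes $\delta>m+4=7$ for $m=3$). The statement as written only asserts the algebraic form of the singular part; the regularity of $E(\lam)$ needed later in the paper comes from the stronger decay in \refass(V), not from $\delta>3$ alone.
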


\paragraph{The case $m \geq 5$.} 
If $m\geq 5$, \refeq(asymp-d) implies $\Ng=\Eg$. 

\begin{theorem}\lbth(m=5) Let $m\geq 5$ be odd and  
$|V(x)|\leq C\ax^{-\delta}$ for some $\delta>m+3$.  
Suppose $H$ is of exceptional type. Then:
\ben 
\item[{\rm (1)}] If $m=5$ then,  
with $\ph =P V$, $V$ being considered as a function, 
\bqn 
(I+G_0(\lam)V)^{-1}
=\frac{PV}{\lam^2} - 
\frac{a_0}{\lam}|\varphi\ra \la \varphi| V  + E(\lam) , \quad 
a_0=\frac{i}{24\pi^2}.
\lbeq(res-5)
\eqn 
\item[{\rm (2)}] If $m\geq 7$ then 
\bqn 
(I+G_0(\lam)V)^{-1} = \frac{PV}{\lam^2} + E(\lam). 
\lbeq(res-7)
\eqn  
\een
\end{theorem}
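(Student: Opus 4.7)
The plan is to adapt the Jensen--Kato threshold-expansion strategy from the author's earlier papers \cite{Y-odd, FY} and sharpen the identification of the subleading singular constant. First I would Taylor-expand the free resolvent near $\lam=0$: using the exponential-polynomial kernel \refeq(coker), one obtains
\[
G_0(\lam) = G_0(0) + \lam^2 G_2 + \lam^3 G_3 + \lam^4 G_4 + \cdots ,
\]
where $G_n$ has polynomial kernel $\alpha_n |x-y|^{n-(m-2)}$ and acts boundedly between suitable weighted $L^2$-spaces. The identity $iC_0+C_1=0$ from \refeq(c0c1) kills the $\lam^1$-coefficient. More importantly, for odd $m$ the Hankel-function representation of $G_0(\lam,x)$ implies that every odd power $\lam^{2k+1}$ with $2k+1<m-2$ also vanishes: the imaginary part of $G_0(\lam,\cdot)$, governed by the spectral measure of $H_0$, is of order $\lam^{m-2}$ as $\lam\downarrow 0$. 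Consequently for $m=5$ the first nontrivial odd power is $\lam^3$, with $G_3 u = (iC_0/3)\int u\,dy$ a rank-one operator ($C_0 = 1/(8\pi^2)$), while for $m\geq 7$ odd there is no $\lam^3$-term.

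Next I would apply a Grushin/Feshbach reduction to invert $T(\lam) = I + G_0(\lam)V$. Since $T_0 = T(0)$ has kernel $\Ng$ and cokernel dualizing against $V\Ng \subset L^2_s$, one sets up a Grushin problem adapted to this degeneracy: $T_0$ is invertible on the nondegenerate complement, so the Schur complement reduces inversion of $T(\lam)$ to inverting a finite matrix
\[
M(\lam) = \lam^2 M_2 + \lam^3 M_3 + O(\lam^4) \colon \Ng \to \Ng ,
\]
whose coefficients $M_j$ are computed via $\Ng$-pairings of $G_j V$. A short computation using the kernel identification $G_2 = G_0(0)^2$ (a direct calculation from \refeq(fractional)) together with $G_0(0)V\f = -\f$ on $\Ng$ shows $M_2$ is invertible; the leading $\lam^{-2}$-term of $M(\lam)^{-1}$, transported back through the Feshbach formula, becomes precisely $PV/\lam^2$ as an operator on $L^2_{-s}$.

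For $m=5$, the subleading $\lam^3$-piece of $M(\lam)$ comes from the rank-one $G_3$: direct computation identifies $M_3$ as a rank-one operator built from $\ph = PV$, and the Laurent expansion $M(\lam)^{-1} = \lam^{-2}M_2^{-1} - \lam^{-1}M_2^{-1} M_3 M_2^{-1} + O(1)$ produces, upon transporting back, the $\lam^{-1}$-correction $-a_0\lam^{-1}|\ph\ra\la\ph|V$ with $a_0 = iC_0/3 = i/(24\pi^2)$, exactly matching \refeq(res-5). For $m\geq 7$, $G_3\equiv 0$ forces $M_3 = 0$, so no $\lam^{-1}$-term appears, yielding \refeq(res-7). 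The remainder $E(\lam)$ is the sum of the higher-order Laurent terms and is H\"older continuous at $\lam = 0$ in the appropriate operator norm on weighted $L^2$-spaces; when inserted into \refeq(z) it produces an operator bounded on $L^p(\R^m)$ for every $1 \leq p \leq \infty$ by \cite{Y-odd, FY}.

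The main technical obstacle is tracking the constants exactly to confirm $a_0 = i/(24\pi^2)$, and making sense of the bilinear forms $\la\f_i, G_n V\f_j\ra$ on $\Ng$ (whose naive integrands are borderline non-integrable when $m=5$) through the integral-kernel representation and duality against $V\Ng \subset L^2_s$. The vanishing of the $\lam^3$-coefficient for odd $m\geq 7$ reduces to the combinatorial identity $\sum_{j=0}^{(m-3)/2} C_j\, i^{3-j}/(3-j)! = 0$, which can be verified either from the half-integer form of $H^{(1)}_{(m-2)/2}$ or, more conceptually, from $\mathrm{Im}\,G_0(\lam,x) = O(\lam^{m-2})$ via Stone's formula for the spectral measure.
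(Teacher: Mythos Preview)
The paper does not prove this theorem; it is quoted from the author's earlier work (see the sentence preceding \refth(III): ``For odd dimensions $m\geq 3$ we have the following results (see, e.g.\ Theorem~2.12 of \cite{Y-odd})'', which covers both \refth(III) and \refth(m=5)). So there is no in-paper argument to compare against, and your sketch is in effect a reconstruction of the cited proof.

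Your outline is sound and is indeed the Jensen--Kato strategy used in \cite{Y-odd}. A few checks and one refinement:
\begin{itemize}
\item The vanishing of odd Taylor coefficients of $G_0(\lam)$ at orders $<m-2$ is correct, and your spectral-measure explanation (Stone's formula gives $G_0(\lam)-G_0(-\lam)=O(\lam^{m-2})$, and this difference carries exactly the odd powers) is the right one. For $m=5$ the $\lam^3$-coefficient of the kernel is $iC_0/3=i/(24\pi^2)$, matching $a_0$.
\item The identification $G_2=G_0(0)^2$ is correct (both equal $|D|^{-4}$), and with it one gets $\la V\f_i, G_2 V\f_j\ra=\la G_0(0)V\f_i, G_0(0)V\f_j\ra=\la\f_i,\f_j\ra$, so the leading matrix on $\Ng$ is essentially the identity---this is how invertibility of $M_2$ is obtained in \cite{Y-odd}.
\item For $m\geq 7$ your conclusion $M_3=0$ is right, but note it does not follow from ``$G_3=0$'' alone: one needs that $T(\lam)-T(0)$ has \emph{only even} powers below $\lam^{m-2}$, which then propagates through the Schur complement. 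You say this in passing (``every odd power $\lam^{2k+1}$ with $2k+1<m-2$ vanishes''); it is worth making explicit that this parity structure, not merely the vanishing of the single coefficient $G_3$, is what kills the $\lam^{-1}$ term.
\item To land on the exact form $-a_0\lam^{-1}|\ph\ra\la\ph|V$ with $\ph=PV$, you will use that $G_3V=a_0|1\ra\la V|$ is rank one and that, after sandwiching with the eigen\-projection, $\la V,\f_j\ra=\la PV,\f_j\ra=\la\ph,\f_j\ra$; this is how the constant function $1$ and the potential $V$ get replaced by $\ph$ on both sides.
\end{itemize}
Finally, the meaning of $E(\lam)$ in this paper is not merely ``regular remainder'' but ``remainder whose insertion into \refeq(z) yields an operator bounded on $L^p$ for all $1\le p\le\infty$''; that $L^p$-property is the substance of \cite{Y-odd,FY}, which you correctly invoke.
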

Define $S(\lam)=(I+G_0(\lam)V)^{-1}-E(\lam)$ and 
\bqn 
Z_{s}  = \frac{i}{\pi}\int^\infty_0 
G_0( \lam)VS(\lam)(G_0(\lam)-G_0(-\lam))F(\lam)\lam 
d\lam. 
\lbeq(zs) 
\eqn 
Then, it follows from \refths(III,m=5) that 
$Z-Z_s\in \Bb(L^p(\R^m))$ for all $1\leq p \leq \infty$ 
and we have only to study $Z_s$ in what follows.  

\subsection{Even dimensional case} 
When $m$ is even, singular terms of 
$(1+ G_0(\lam)V)^{-1}$ may contain 
logarithmic factors. The following is the improvement of 
Proposition 3.6 of \cite{FY}. We let $\dim \Eg=d$ and 
$\{\f_1, \dots, \f_d\}$ be the real orthonormal basis of $\Eg$. 
For making the expression simpler, we state the theorem 
for $V(1+ G_0(\lam)V)^{-1}$. 

\bgth \lbth(even6)
Let $m\geq 6$ be even. Suppose 
$|V(x)|\leq C \ax^{-\delta}$ for $\delta>m+4$ if $m=6$ 
and for $\delta>m+3$ if $m\geq 8$. Let $\ph=P V$ with $V$ 
being considered as a function. Then, we have the 
following statements for $\Im \lam \geq 0$ and 
$\log \lam$ such that $\log \lam \in \R$ for 
$\lam>0$: 
\ben 
\item[{\rm (1)}] If $m=6$ then, we have that   
\begin{multline}\lbeq(sing-1) 
V(1+ G_0(\lam)V)^{-1}= \frac{VP V}{\lam^2} + 
\frac{\w_{5}}{(2\pi)^6} \log \lam  (V\ph\otimes V\ph) \\
+ \left(\frac{\w_{5}\|\ph\|}{(2\pi)^6}\right)^2 
\lam^2 \log^2 \lam (V\ph \otimes V\ph) 
+ \lam^2 \log \lam F_2 + VE(\lam), 
\end{multline}
where $F_2$ is an operator of rank at most $8$ such that 
\bqn \lbeq(sing-ef) 
F_2 = \sum_{a,b=1}^{8} \ph_a \otimes \p_b, 
\quad \ph_a,\ \p_b \in (L^1 \cap L^\infty)(\R^6).  
\eqn 
\item[{\rm (2)}] If $m\geq 8$, then we have with a constant 
$c_m$ that  
\begin{equation}\lbeq(sing-2) 
V(1+ G_0(\lam)V)^{-1}= 
\frac{VP V}{\lam^2} + 
c_m (V\ph \otimes V\ph) \lam^{m-6} \log \lam + VE(\lam). 
\end{equation}
\item[{\rm (3)}] If $m \geq 12$, then 
$c_m (V\ph \otimes V\ph) \lam^{m-6} \log \lam$ 
of \refeq(sing-2) may 
be included in $VE(\lam)$. 
\een
\edth 
\bgpf We prove (1) only, using the notation of the proof 
of subsection 3.2.1 of \cite{FY}. A slightly more careful 
look at the argument there shows that, in spite of 
Eqn.(3.5) of \cite{FY}, $V(1+ G_0(\lam)V)^{-1}$ is actually 
given by 
\bqn 
\frac{VPV}{\lam^2} + VD_{01} \log \lam + 
VD_{21} \lam^2 \log \lam + VD_{22} \lam^2\log^2\lam + VE(\lam) .
\eqn 
Here, with $F_{jk}= F_{jk}(0)$, $F_{jk}(\lam)$ being 
defined by (3.16) of \cite{FY}, and 
$A(0)=(2\pi)^{-6}\w_{m-1}(1\otimes 1)$, $VD_{01}$ 
and $VD_{22}$ are rank $1$ operators given by  
\begin{gather} 
VD_{01}= VP VF_{01} P V 
= VP VA(0)VP V= \frac{\w_{m-1}}{(2\pi)^6}(V\ph \otimes V\ph), 
\lbeq(ta-1g)
\\
VD_{22} = V(P VF_{01})^2 P V = V(P V A(0)VP)^2 V 
= \frac{\w_{m-1}^2}{(2\pi)^{12}} \|\ph\|^2(V\ph \otimes V\ph),  
\notag 
\end{gather}
where we used $PVQ=PV$ and$VQP=VP$ and,  
\begin{align}
VD_{21} & = VPV(F_{21} + F_{00}PVF_{01}
+ F_{01}PVF_{00})PV  \lbeq(dst) \\
& - VX(0)\overline{Q}D_2 V P V F_{01}P V 
- VX(0)\overline{Q}A(0)VPV  \lbeq(dst0)\\
& - VP V F_{01}P V Q D_2 V \overline{Q}X(0) 
- P VQ A(0)V\overline{Q}X(0). \lbeq(dst1)
\end{align}
It is obvious that the first line \refeq(dst) 
is of rank at most $\min(4, d)$ and of the form 
$\sum \alpha_{jk} (V\f_j \otimes V\f_k)$; four other 
operators are of rank one and of the form $f \otimes g$ 
with $f\, \in (L^1\cap L^\infty)(\R^6)$. We check this 
for $VX(0)\overline{Q}D_2 V P V F_{01}P V$ as a prototype.
We have $D_2=D_0^2$ and $D_0V\ph = -\ph $. Thus, 
\refeq(ta-1g) implies  
\[
VX(0)\overline{Q}D_2 V P V F_{01}P V
= - (2\pi)^{-6}\w_{m-1}
(VX(0)\overline{Q}D_0 \ph)\otimes (V\ph). 
\]
Here $D_0 \ph \in C^2(\R^6)$ 
and satisfies $D_0 \ph \absleq C\ax^{-2}$  
by virtue of \reflm(asymp-reei). Hence, a fortiori 
$D_0\ph \in C_0(\R^6)$, the Banach space of continuous 
functions which converge to $0$ as $|x|\to \infty$.  
It is obvious that $\Xg\equiv \overline{Q}C_0(\R^6) \subset C_0(\R^6)$ and 
$X(0)=N^{-1}(0)=[\overline{Q}(1+D_0V)\overline{Q}]^{-1}$ is an isomorphism 
of $\Xg$. This is because $T=\overline{Q}D_0 V\overline{Q}$ 
is compact both in $\Xg=\overline{Q}C_0(\R^6)$ and 
$\Yg=\overline{Q}L^2_{-\delta+2}(\R^6)$, $\Xg \cap\Yg$ is dense 
in $\Yg$ and ${\rm Ker}_{\Yg}(1+ T)=\{0\}$ (see e.g. Lemma 2. 11 of \cite{GY}). 
Thus, $VX(0)\overline{Q}D_0 \ph(x) \absleq C \ax^{-\delta}$. 
\end{proof}

It follows from \refth(even6) that $Z u= Z_{s}u + Z_{\log}u$ 
modulo the operator which is bounded in $L^p$ for all 
$1\leq p \leq \infty$ 
and we need study  
\begin{align}
Z_{se}& = \frac{i}{\pi}\int^\infty_0 G_0( \lam)
VPV(G_0(\lam)-G_0(-\lam))F(\lam)\lam^{-1} d\lam, \lbeq(e-zs) 
\\
Z_{\log}& = \sum_{j,k}\frac{i}{\pi}\int^\infty_0 
G_0(\lam)\lam^{2j} (\log \lam)^k D_{jk}
(G_0(\lam)-G_0(-\lam))F(\lam)\lam d\lam,  \lbeq(loge) 
\end{align} 
for even $m \geq 6$, where the sum and $D_{jk}$ are  
as in \refth(even6). 

\section{Proof of \refthb(theo)} 

The proof of \refth(theo) for $m=3$ is the simplest 
and is the prototype for other dimensions and, most of 
the basic ideas already appear here.  

\subsection{The case of exceptional type of the first kind} \lbssec(4-1)
We begin with the case that $H$ is of exceptional type of 
the first kind and, we let $\ph$ be the canonical resonance, 
$a= 4\pi i|\la V, \ph\ra|^{-2}\not=0$ and    
\bqn \lbeq(dpi)
\p(x)=|D|^{-1}(V\ph)(x)
= \frac1{2\pi^2} \int \frac{V(y)\ph(y)}{|x-y|^2} dy.
\eqn 
The following lemma proves \refth(theo) when 
$H$ is of exceptional type of the first kind.   

\bglm \lblm(3d-first) 
{\rm (1)} For $1<p<3$, there exists a constant 
$C_p$ such that  
\bqn 
\|Z_{s}u\|_p \leq C_p  \|u\|_p, \quad u\in C_0^\infty(\R^3) . 
\lbeq(3d-first) 
\eqn 
{\rm (2)} For $3<p<\infty$, there exists a constant 
$C_p$ such that  
\bqn \lbeq(3d-1-2)
\|(Z_{s}+ a \ph \otimes \p) u \|_p \leq C_p \|u\|_p, 
\quad u \in C_0^\infty(\R^3). 
\eqn 
{\rm (3)} For $p$ outside $1<p<3$, 
$Z_s$ is unbounded in $L^p(\R^3)$.  
\edlm 

\bgpf Recall $c_0= C_0 \w_2=1$. In this case 
$S(\lam)= - \frac{a}{\lam}|\ph\ra \la \ph| V$ and  
\bqn \lbeq(f-k1)
Z_{s}u= -\frac{ia}{\pi}\int^\infty_{0}  
G_0(\lam)V\ph\ra 
\la V\ph |(G_0(\lam)-G_0(-\lam))u \ra F(\lam) d\lam. 
\eqn 
Defining $M(r)= M(r,(V\ph)\ast \check u)$, we substitute 
\refeq(coker) and \refeq(spheric) respectively for $G_0(\lam)$ 
and $\la V\ph |(G_0(\lam)-G_0(-\lam))u \ra$. Then,  
\[ 
Z_{s} u = \frac{ai}{\pi}\int^\infty_{0}
\left(\int_{\R^3}
\
\frac{e^{i\lam |x-y|}V(y)\ph(y)}{4\pi|x-y|}dy
\right)    
\left(\int_{\R} e^{-i\lam r}r  
M(r)dr\right) 
F(\lam) d\lam. 
\]
If we change the order of integrations,  
\begin{gather}
Z_{s}u= \frac{ ai}{2\pi}\int_{\R^3}
\frac{K_{0}( |x-y|)V(y)\ph(y)}{|x-y|}dy, \lbeq(3d-1) \\
K_0(\rho)= \frac1{2\pi}\int_0^\infty 
e^{i\lam\rho} F(\lam) 
\left(\int_{\R} e^{-ir\lam}rM(r)dr \right)d\lam.
\lbeq(k0-def)
\end{gather}
Since $\Fg^\ast F \in \Sg(\R)$, it follows 
by virtue of \reflms(max,add-cut) that  
\bqn \lbeq(1s-1)
K_0(\rho)=\{(\Fg^\ast F)\ast \Hg(rM(r))\}(\rho)
\absleq C \Mg \Hg (rM)(\rho).
\eqn 
Function $K_0(\rho)$ may also be expressed as 
\bqn \lbeq(parts-10)
K_0(\rho)=\frac{i}{2\pi\rho }\int_0^\infty e^{i\lam\rho } 
\left(F(\lam)\int_\R e^{-ir\lam} rM (r) dr \right)' d\lam .
\eqn 
and, after integration by parts, we see that $K_0(\rho)$ 
satisfies also 
\bqn 
K_0(\rho )\absleq C \rho^{-1}(\Mg\Hg(r^2 M)(\rho ) 
+\Mg\Hg(r M)(\rho )). 
\lbeq(3d-3)
\eqn 
The boundary term does not appear in \refeq(parts-10) 
since $\int_{\R} rM(r) dr =0$. 

\noindent 
(1a) Let $3/2<p<3$. By virtue of Young's inequality   
\bqn \lbeq(Yo-1)
\|Z_{s}u\|_p \leq 
\frac{|a|(4\pi)^{1/p}}{2\pi}\|V\ph\|_1 
\left(\int_0^\infty \left|
\frac{K_0(\rho)}{\rho}\right|^p \rho^{2} d\rho\right)^{1/p}. 
\eqn 
We estimate $K_0(\rho)$ by 
\refeq(1s-1) and use that 
$\rho^{2-p}$ is an $A_p$ weight on $\R$. \reflm(ap) 
and Young's inequality imply     
\begin{align}
& \left(\int_0^\infty 
\left|
\frac{K_0(\rho)}{\rho}\right|^p \rho^{2} d\rho\right)^{1/p} 
\leq 
C \left(\int_0^\infty |\Mg \Hg (rM)(\rho)|^p 
\rho^{2-p} d\rho\right)^{1/p}  \notag \\
& \leq C_p \left(\int_0^\infty M(r)^p r^2 dr\right)^{1/p}
\leq C_p \|V\ph \ast u\|_p \leq C_p \|V\ph\|_1 \|u\|_p .
\lbeq(3d-b1)
\end{align}
and $\|Z_s u\|_p \leq C_p \|V\ph\|_1^2 \|u\|_p$. 

\noindent 
(1b) For $1<p<\frac32$, we use estimate 
\refeq(3d-3) and that $\rho^{2-2p}$ is an $A_p$ 
weight on $\R$ and obtain that    
\begin{align}
& \left(\int_0^\infty 
\left|
\frac{K_0(\rho)}{\rho}\right|^p \rho^{2} 
d\rho\right)^{\frac1{p}}\leq 
\left(\int_0^\infty 
|(\Mg\Hg(r^2 M)+ \Mg\Hg(r M))(\rho)|^p
\rho^{2-2p}d\rho \right)^{\frac1{p}} \notag \\
& \leq C \left(\int_0^\infty 
|M(r)|^p \max(r^{2},r^{2-p})dr\right)^{\frac1{p}} 
\leq C(\|V\ph\|_1+ \|V\ph\|_{p'})\|u\|_p, 
\lbeq(3dim-fir)
\end{align}
where we estimated the integral over 
$0\leq r \leq 1$ by using that 
\bqn  
\sup |M(r)| \leq \|V\ph \ast u\|_\infty\leq \|V\ph\|_{p'}
\|u\|_p.  \lbeq(supM)
\eqn 
Thus, we have 
$\|Z_su\|_p \leq C(\|V\ph\|_1+ \|V\ph\|_{p'})\|V\ph\|_1\|u\|_p$ 
for $1<p<3/2$. Combining (1a) and (1b), we obtain 
\refeq(3d-first) for $1<p<3$ by interpolation(\cite{BL}).  

\noindent 
(2) Let $p>3$. Writing $\int_{\R} r e^{-ir\lam}M(r)dr 
=i \left(\int_{\R} e^{-ir\lam}M(r)dr \right)'$ in 
\refeq(k0-def), we apply integration by parts and 
obtain yet another expression of $K_0(\rho)$:     
\bqn 
K_0(\rho) = \frac{-i}{2\pi}
\int_{\R}M(r)dr 
-\frac{i}{2\pi}\int_0^\infty 
\left(e^{ i\lam\rho} F(\lam) \right)' 
\left(\int_{\R} e^{-ir\lam}M(r)dr \right) d\lam.  \lbeq(k0-def-a)
\eqn  
Denote the second term by $\tilde{K}_{0}(\rho)$. 
By virtue of \reflms(max,add-cut),  
\bqn \lbeq(est-tK0)
\tilde{K}_{0}(\rho)\absleq C (\rho +1) \Mg \Hg(M)(\rho).
\eqn 
Substituting 
\refeq(k0-def-a) for $K_0(\rho)$ in \refeq(3d-1), we obtain  
$Z_{s}u = Z_{b} u + Z_{i} u$, 
where $Z_b$ and $Z_i$ are operators produced by  
$\frac{-i}{2\pi}
\int_{\R}M(r)dr$ and $\tilde{K}_{0}(\rho)$, respectively. 
Because  
\bqn 
\frac{1}{\pi}\int_{\R}M(r)dr
=\frac1{2\pi^2} \int_{\R^3}\left(\int_{\R^3}
\frac{(V\ph)(x+y)}{|x|^2} dx \right) u(y)dy=\la \psi, u\ra 
\lbeq(D2phi)
\eqn 
by the definition \refeq(dpi), we have by 
using \refeq(redfe) for $m=3$ that 
\bqn \lbeq(b-3d2)
 Z_{b}u (x) = \frac{a}{4\pi^2}\int_{\R}M(r)dr \cdot 
\int_{\R^3}\frac{V(y)\ph(y)}{|x-y|}dy 
= -a |\ph\ra \la \p | u\ra .
\eqn 
We splite the integral as 
\bqn \lbeq(zi-3df)
Z_i u(x)=\frac{ai}{2\pi}\left(\int_{|y|\leq 1}
+ \int_{|y|>1} \right) 
\frac{\tilde{K}_0(|y|)(V\ph)(x-y)}{|y|}dy 
= I_1(x) + I_2(x).
\eqn 
For estimating $I_2$ we use \refeq(est-tK0) for $\rho\geq 1$: 
$|\tilde{K}_0(\rho)|\leq C\rho \Mg\Hg(M)(\rho)$. 
Since $\rho^2$ is an $A_p$-weight on $\R$ 
for $p>3$, we have by using Young's and H\"older's 
inequalities and \reflm(ap) that
\begin{align}
& \|I_2\|_p  \leq C \|V\ph\|_1 
\left(\int_0^\infty |\Mg\Hg(M)(\rho)|^p 
\rho^2 d\rho\right)^{\frac1{p}} \notag \\
& \leq C \|V\ph\|_1 
\left(\int_0^\infty|M(r)|^p r^2dr\right)^{\frac1{p}}
\leq C \|V\ph\|_1^2 \|u\|_p. \lbeq(3d-se-3d)
\end{align}
H\"older's inequality implies, with $p'=p/p-1$, that 
\[
|I_1 (x)|
\leq C 
\left(\int_{|y|\leq 1}
\left|\frac{(V\ph)(x-y)}{|y|}\right|^{p'} dy\right)^{1/p'}
\left(\int_0^1 |\tilde{K_0}(\rho)|^p \rho^2 d\rho \right)^{1/p} .
\]
Since $\tilde{K_0}(\rho)\absleq C \Mg\Hg(M)(\rho)$ for 
$0<\rho <1$ by virtue of \refeq(est-tK0)  and since $\rho^2$ is an 
$A_p$-weight, we obtain as in \refeq(3d-se-3d) that 
\bqn 
\left(\int_0^1 |\tilde{K_0}(\rho)|^p 
\rho^2 d\rho \right)^{1/p} 
\leq C 
\left(\int_0^\infty |\Mg\Hg(M)(\rho)|^p 
\rho^2 d\rho \right)^{1/p} \leq C \|u\|_p .
\lbeq(3d-se-3d+)
\eqn 
It follows by virtue of Minkowski's inequality that 
\bqn %[begin{align}
\|I_1\|_p  %& 
\leq C \|u\|_p \left\|\left(\int_{|y|\leq 1}
\left|\frac{(V\ph)(x-y)}{|y|}\right|^{p'} 
dy\right)^{1/p'}\right\|_p  %\notag \\& 
\leq C \|u\|_p \|V\ph\|_p % \int_{|y|\leq 1}|y|^{-p'} dy 
\lbeq(3d-flas)
\eqn %nd{align}
because $1<p'<3/2<3<p<\infty$. Thus,  
\[
\left\|\int_{\R^3}
\frac{\tilde{K}_0(|x-y|)V(y)\ph(y)}{|x-y|}dy \right\|_p 
\leq 
C (\|V\ph\|_p + \|V\ph\|_1) \|u\|_p. 
\]
With \refeq(b-3d2) this proves \refeq(3d-1-2). 

\noindent 
(3) It is well-known that $W$ is 
unbounded in $L^1(\R^3)$ in this case 
(\cite{Y-disp,ES}), hence so is $Z_s$. 
Since $\int_{\R^3}V\ph dx \not=0$, \reflm(asymp-reei) 
implies that $\ph \not\in L^p(\R^3)$ for 
$1\leq p\leq 3$ and that $\p\in L^p(\R^3)^\ast$ if and 
only if $p>3$. Hence, $\ph \otimes \p$ is 
unbounded in $L^p(\R^3)$ for any $1\leq p\leq\infty$. 
Thus, statement (2) implies that $Z_s$ is unbounded in 
$L^p(\R^3)$ for $p\geq 3$. This completes the proof of 
the lemma. 
\edpf 

We review here the basic strategy of this 
subsection as it will be repeatedly employed 
in the following (sub)sections. We express 
$Z_s u $ as the convolution 
\refeq(3d-1) of $V\ph$ and $K_0(\rho)$ of 
\refeq(k0-def). 
By applying integration by parts if necessary 
we represent and estimate $K_0(\rho)$ as in 
\refeq(1s-1), \refeq(3d-3) or \refeq(est-tK0) 
by using $\Mg\Hg$. These estimates are used 
for proving  
\bqn \lbeq(lpnorm)
\left(\int_0^\infty \left|K_0(\rho)\right|^p 
\rho^{2-p} d\rho \right)^{\frac1{p}}
\left(= \w_2^{-\frac1{p}}
\left\|\frac{K_0(|x|)}{|x|}\right\|_p \right)
\leq C \|u\|_p 
\eqn 
via the weighted inequality for $\frac32<p<3$, 
$1<p<\frac32$ and $p>3$ respectively. 
Desired estimates are then obtained by combining 
\refeq(lpnorm) and Young's inequality. However, 
the boundary term appears in the integration by 
parts for large values of $p>3$ which obstructs 
the $L^p$-boundedness. We represent the obstruction  
explicitly in terms of functions of $\Ng$ and 
show that $L^p$-boundedness depends on the properties 
of functions in $\Ng$. Suitable modifications, 
improvements and additional arguments will of course 
become necessary in what follows, which have already 
been appeared the simplest case of this subsection.

\subsection{The cases of the second and third kinds}

Let $H$ be of exceptional type of the second kind. Then,    
\bqn \lbeq(s2)
S(\lam)= \frac{P V}{\lam^2} + i \frac{P V D_3 V P V}{\lam},
\eqn 
where $D_3$ is the integral operator with 
kernel $|x-y|^2/4\pi$. We take the real orthonormal 
basis $\{\f_1, \dots, \f_n\}$ of $\Eg$  and define 
$a_{jk}=\pi^{-1}\la \f_j |VD_3V|\f_k\ra \in \R$. 
We have $\la V, \f_j \ra=0$, $1\leq j \leq n$. 
Substituting \refeq(s2) for $S(\lam)$ in \refeq(zs), we have  
\begin{gather}
Z_{s}u= Z_{s0}u + Z_{s1}u = \sum_{j,k=1}^n Z_{s0,jk}u
+ \sum_{j=1}^n Z_{s1,j}, \\
\lbeq(d3-zsem-0)
Z_{s0,jk} u = i a_{jk} \int^\infty_{0}  
G_0(\lam)V\f_j \ra 
\la V\f_k |(G_0(\lam)-G_0(-\lam))u \ra F(\lam) d\lam, \\
\lbeq(d3-zsem-1)
Z_{s1,j}u = \frac{i}{\pi}\int^\infty_{0}  
G_0(\lam)V\f_j \ra 
\la V\f_j |(G_0(\lam)-G_0(-\lam))u \ra F(\lam) 
\frac{d\lam}{\lam}.
\end{gather} 

\bglm \lblm(3d-second-0) For any $1<p<\infty$, there 
exists a constant $C_p$ such that  
\bqn 
\|Z_{s0}u\|_p \leq C_p \|u\|_p, \quad u\in C_0^\infty(\R^3).  
\lbeq(3d-second-0) 
\eqn 
\edlm 
\bgpf The operator $Z_{s0,jk}$ is equal to $Z_s$ of 
\refeq(f-k1) with two $\ph \in \Ng$'s being replaced 
by $\f_j$ and $\f_k \in \Eg$ and $a$ by $-{\pi}a_{jk}$. 
Thus, the proof of \reflm(3d-first) implies that 
$Z_{s0,jk}\in \Bb(L^p(\R^3))$ for $1<p<3$ and that  
\bqn \lbeq(3d-sec-d)
Z_{s0,jk} -{\pi} a_{jk} \f_j \otimes  |D|^{-1}(V\f_k) 
\in \Bb(L^p(\R^3)), \quad p>3. 
\eqn 
Here $\f_j \otimes  |D|^{-1}(V\f_k)$ is 
bounded in $L^p(\R^3)$ for $p>3$ because $\f_j\in L^p(\R^3)$ 
and $|D|^{-1}(V\f_k)\in (L^p(\R^3))^\ast$ by virtue of 
\refeq(asymp-d) and \refeq(asymp-d1).  
Thus $Z_{s0,jk} \in \Bb(L^p(\R^3))$ for $3<p$ and, hence, for 
$1<p<\infty$ by interpolation. This proves the lemma.   
\edpf 

\bglm \lblm(3d-second-1)  
\ben
\item[{\rm (1)}] Let $1<p<3$. Then, for a constant 
$C_p$, we have 
\bqn \lbeq(z1s-pbdd)
\|Z_{s1}u \|_p \leq C_p \|u\|_p,
\quad u \in C_0^\infty(\R^3). 
\eqn 
\item[{\rm (2)}] Let $3<p<\infty$. Then, for a constant 
$C_p$, we have 
\bqn \lbeq(3d-2-est)
\|(Z_{s1}+P)u\|_p \leq C \|u\|_p, 
\quad u \in C_0^\infty(\R^3). 
\eqn 
In \refeq(3d-2-est) $P$ may be replaced by $P\ominus P_1$ 
by virtue of \reflm(asymp-reei).  
\item[{\rm (3)}] The operator 
$Z_{s1}$ is bounded in $L^p(\R^3)$ 
for some $p>3$ if and only if $\Eg=\Eg_1$. In this case 
$Z_{s1}$ is bounded in $L^p(\R^3)$ for all $1<p<\infty$. 
\een
\edlm 
\bgpf Define $\p_j(x) =|D|^{-1} (V\f_j)(x)$,  
$j=1, \dots, n$. Then \reflm(mulfo) implies   
\bqn \lbeq(sec-1)
Z_{s1,j}u = \frac{i}{\pi}\int_0^\infty 
G_0(\lam)|V\f_j\ra 
\la \p_j |(G_0(\lam) -G_0(-\lam))u\ra F(\lam)d\lam 
\eqn 
which can be obtained from $Z_{s}u$ 
of \refeq(f-k1) by replacing $a$ by $-1$, the first 
$V\ph$ by $V\f_j$ and the second by $\p_j$. Thus, 
it may be expressed by using 
$K_{0,j}(\rho)$ of \refeq(k0-def) with $M(r)$ being replaced by 
$M_j(r)=M(r, \p_j \ast \check{u})$:
\bqn 
Z_{s1,j}u = \frac{1}{2\pi{i}}\int_{\R^3}
\frac{K_{0,j}(|x-y|)V(y)\f_j(y)}{|x-y|}dy.  \lbeq(3d-21)
\eqn

\noindent 
(1) The argument of (1a) in the proof of 
\reflm(3d-first) implies     
\bqn  \lbeq(3d-h)
\|Z_{s1,j}u\|_p \leq C \|V\f_j\|_1 \|\p_j \ast u\|_p, \quad 3/2<p<3
\eqn 
(see \refeq(3d-b1)) and the one of (1b) does 
\bqn  
\|{Z}_{s1,j}u\|_p \leq C \|V\f_j\|_1 
(\|\p_j \ast u\|_p + \|\p_j \ast u\|_\infty), \quad 1<p<3/2 \lbeq(3d-i)
\eqn  
(see \refeq(3dim-fir)).  
Since $\int V\f_j dx=0$, \refeq(asymp-d1) implies 
that $\p_j=|D|^{-1}\f_j \in L^q(\R^3)$ 
for all $1<q\leq \infty$ and that the convolution 
operator with $\p_j(x)$ is bounded in $L^p$ for any 
$1<p<\infty$ via Calder\'on-Zygmund theory (see e.g. 
\cite{Stein}, pp. 30-36).  
Thus, $\|\p_j \ast u\|_p \leq C \|u\|_p$,   
$\|\p_j \ast u\|_\infty \leq \|\p_j\|_{p'}\|u\|_p$ 
and ${Z}_{s1,j}$ is bounded in $L^p(\R^3)$ for all 
$1<p<3$, $j=1, \dots, n$. Statement (1) follows. 

\noindent 
(2) Integration by parts as in \refeq(k0-def-a) by 
using the identity 
$\int_{\R} e^{-ir\lam}rM_j(r)dr= 
i \left(\int_{\R} e^{-ir\lam}M_j(r)dr \right)'$
implies that $K_{0,j}(\rho)$ may be written as 
\bqn 
 -\frac{i}{2\pi}
\int_{\R}M_j(r)dr 
-\frac{i}{2\pi}\int_0^\infty 
\left(e^{i\lam\rho} F(\lam) \right)' 
\left(\int_{\R} e^{-ir\lam}M_j(r)dr \right) d\lam, 
\lbeq(k0-def-jb)
\eqn 
which we insert into \refeq(3d-21). 
Since $|D|^{-1}\p_j= (-\lap)^{-1}(V\f_j)= - \f_j$, 
\refeq(D2phi) with  $\p_j \in \Eg$ in place of 
$V\ph$ produces $-\la \f_j |u \ra $. It follows 
that the boundary term of \refeq(k0-def-jb) produces   
\bqn \lbeq(pr-bdry)
\frac{-1}{4\pi}\int_{\R^3}\frac{V(y)\f_j(y)}
{|x-y|}dy\cdot \frac1{\pi}\int_{\R}M_j(r)dr 
=- |\f_j \ra \la \f_j |u \ra 
\eqn 
as in \refeq(b-3d2). 
Denote by $\tilde{K}_{0j}(\rho)$ and $\tilde{Z}_{s1,j}$ 
the second term of \refeq(k0-def-jb) and the operator it 
produces via \refeq(3d-21). They can respectively be 
obtained from $\tilde K_0(\rho)$ of \refeq(k0-def-a) 
and $Z_i$ of \refeq(zi-3df) by replacing $M(r)$ 
and $\tilde K_0(\rho)$ by $M_j(r)$ and $\tilde{K}_{0,j}(\rho)$. 
Thus, the argument of step (2) of the proof of \reflm(3d-first), 
\refeq(3d-se-3d) and \refeq(3d-flas) in particular, 
implies that    
\bqn \lbeq(proid)
\|\tilde{Z}_{s1,j}u \|_p \leq 
C (\|V\f_j \|_p + \|V\f_j\|_1)\|\p_j \ast u\|_p, \quad 3<p<\infty.
\eqn 
The Calder\'on-Zygmund theory with \refeq(asymp-d1) 
once more implies  $\left\|\tilde{Z}_{s1,j}u \right\|_p 
\leq C \|u\|_p$.  
Since $\f \otimes \f \in \Bb(L^p)$ 
for all $1<p<\infty$ if $\f \in \Eg_1$ 
by virtue of \refeq(asymp-d), this together with 
\refeq(pr-bdry) proves statement (2). 

\noindent 
(3) It is obvious from (1) and (2) that $Z_{s1} \in \Bb(L^p(\R^3))$ 
for all $1<p<\infty$ if $\Eg=\Eg_1$. Suppose then that 
$Z_{s1} \in \Bb(L^p(\R^3))$ for some $p>3$ then $P\ominus P_1$ 
must be bounded in $L^p(\R^3)$ by virtue of (2). Take the 
orthonormal basis $\{\f_1, \dots, \f_d\}$ of $\Eg\ominus \Eg_1$  
and $\{\rho_1, \dots, \rho_d\} \subset C_0^\infty(\R^3)$ 
such that $\{(\rho_j, \f_k)\}$ becomes the unit matrix. 
Then, $(P\ominus P_1)\rho_j= \f_j$, $j=1, \dots, n$  and, if 
$P\ominus P_1$ is bounded in $L^p(\R^3)$ for some $p\geq 3$, there 
must exist a constant $C>0$ such that 
\[
|(u,\f_j)|= |((P\ominus P_1) u, \rho_j)| \leq C_j \|u\|_p, 
\quad \mbox{for all} \ u \in C_0^\infty(\R^3).
\]
Then, $\f_j$ has to be in $L^{p'}(\R^3)$ for $p'\leq 3/2$ 
for all $j=1, \dots, n$. This implies  $\f_j=0$ by virtue 
of \refeq(asymp-d). Thus, $\Eg=\Eg_1$ must hold. This completes 
the proof. 
\edpf 

\reflm(3d-second-0) and \reflm(3d-second-1) prove \refth(theo) when 
$H$ is of exceptional type of the second kind. The following lemma 
completes the proof of \refth(theo).

\bglm \lblm(third) 
Suppose that $H$ is of exceptional type of the third kind. 
Then:
\ben
\item[{\rm (1)}] 
$W $ is bounded in $L^p(\R^3)$ for all $1<p<3$. 
\item[{\rm (2)}]
$W + a\ph\otimes (|D|^{-1}V\ph) + P$ 
is bounded in $L^p(\R^3)$ for all $p>3$. 
\item[{\rm (3)}] 
$W$ is unbounded in $L^p(\R^3)$ for any $p>3$ and $p=1$.  
\een
\edlm 
\bgpf 
The combination of \reflmss(3d-first,3d-second-0,3d-second-1) 
proves statements (1) and (2). Suppose that $W$ 
is bounded in $L^p(\R^3)$ 
for some $3<p<\infty$. Then, so is 
$a(\ph\otimes (|D|^{-1}V\ph))+ P$. 
Let $\p\in \Ng$ be the function which defines the 
canonical resonance $\ph$ by \refeq(canonical) 
and which satisfies \refeq(res-eigen-orth). Then,  
\[
(V\p, a(\ph\otimes (|D|^{-1}V\ph))u + Pu)
= - a(|D|^{-1}V\ph, u) , \quad u \in C_0^\infty(\R^3) 
\]
and this must be extended to a bounded functional of 
$u \in L^p(\R^3)$. Hence, $|D|^{-1}V\ph \in L^q(\R^3)$ 
for $q=(p-1)/p <3/2$. This contradicts \refeq(asymp-d) 
because $\int_{\R^3}V(x)\ph(x)dx \not=0$ and (3) is proved. 
\edpf

\section{Proof of \refthsb(theo5,theo6) for odd $m$}

If $m\geq 5$, then $\Ng= \Eg$ and we let 
$\{\f_1, \dots, \f_d\}$ be the real orthonormal basis of 
$\Eg$. \refth(m=5) implies that, with $a_0=i/(24\pi^2)$,     
\bqn  \lbeq(si-57)
S(\lam)= \left\{
\br{ll} 
\lam^{-2}P V -a_0 \lam^{-1}(\ph \otimes V\ph), \quad 
& \mbox{if $m=5$},  \\
\lam^{-2} P V,  \ & \mbox{if $m \geq 7$}. \er \right.
\eqn 
Note that $\ph\not=0$ if and only if $\Eg_1 \not=\Eg$. 
We substitute \refeq(si-57) for $S(\lam)$ in \refeq(zs) and 
apply \refeq(coker) and \refeq(spheric) as previously. 
Let $C_j, c_k$, $1\leq j,k\leq \frac{m-3}2$ respectively 
be constants of \refeq(coker) and \refeq(spheric). 
Then, we have 
\bqn \lbeq(wa) 
Z_s u= Z_{s0}u + Z_{s1}u, 
\eqn 
where $Z_{s0}=0$ for $m\geq 7$ and, for $m=5$, with 
$M(r)=M(r, V\ph \ast \check{u})$  
\begin{gather}
Z_{s0}u
= -2ia_0 \sum_{j,k=0,1}(-1)^{j+1}C_k c_j {Z}_{s0}^{jk}u, 
\lbeq(5dd) 
\\
Z_{s0}^{jk}u(x) = \int_{\R^5} \frac{V\ph (y)}{|x-y|^{3-k}}
K^{(j,k)}_0(|x-y|)dy, \lbeq(5d-s0) \\
K^{(j,k)}_0(\rho )= \frac1{2\pi}
\int_0^\infty e^{i\lam\rho }\lam^{j+k}
\left(\int_{\R} e^{-i\lam r} r^{j+1} M(r)dr
\right)F(\lam)d\lam,
\lbeq(Kjk0-def)
\end{gather}
and $Z_{s1}u$ is defined for all $m\geq 5$ by 
\bqn \lbeq(z5s1)
Z_{s1}u =\sum_{l=1}^d Z_{s1}(\f_l)u
\eqn 
where, for $\f\in \Eg$, with $M(r)=M(r, V\f \ast \check{u})$, 
\begin{gather} 
{Z}_{s1}(\f)u= 2i \sum_{j,k=0}^{\frac{m-3}2} 
(-1)^{j+1}C_k c_j {Z}_{s1}^{jk}(\f) ,\lbeq(z5s1-a) \\ 
{Z}_{s1}^{jk}(\f)u(x)=\int_{\R^m} 
\frac{V\f (y)}{|x-y|^{m-2-k}}K^{(j,k)}(|x-y|)dy, 
\lbeq(ws1jk) \\
K^{(j,k)}(\rho )= \frac1{2\pi}
\int_0^\infty e^{i\lam\rho }\lam^{j+k-1}
\left(
\int_{\R} e^{-i\lam r} r^{j+1} M(r)dr
\right)F(\lam)d\lam.  \lbeq(Kjk-def)
\end{gather}
Note that $Z_{s0}^{jk}u$ and $K^{(j,k)}_0(\rho)$ 
are obtained from $Z_{s1}^{jk}u$ and $K^{(j,k)}(\rho)$ 
by changing $\f$ by $\ph$ and $\lam^{j+k-1}$ by 
$\lam^{j+k}$ in \refeq(Kjk-def).  
 
We shall prove the last 
statements of (2) and (3) of \refths(theo5,theo6)  
only for $Z_{s1}(\f)$  since the argument of the 
proof of (3) of \reflm(3d-second-1) 
can easily be adapted for proving the same statements 
for $Z_{s1}$.

\subsection{Estimate of $Z_{s0}$ for $m=5$}

We begin by proving the following lemma for $Z_{s0}$, 
assuming $\ph\not=0$. 
\bglm 
\lblm(5-7) \ben 
\item[{\rm (1)}] $Z_{s0}$ is bounded in $L^p(\R^5)$ for 
$1<p<5$.
\item[{\rm (2)}] $Z_{s0}+a_0 | \ph \ra \la |D|^{-1}(V\ph)|$ 
is bounded in $L^p(\R^5)$ for $5/2<p<\infty$. 
\item[{\rm (3)}] $Z_{s0}$ is not bounded in $L^p(\R^5)$ 
if $p\geq 5$. 
\een
\edlm 
\bgpf For $\ph=PV$, we have 
$\int_{\R^5} V\ph dx = \|\ph\|^2 >0$ 
and, by virtue of \refeq(asymp-d) and 
\refeq(asymp-d1), 
$\ph \otimes |D|^{-1}(V\ph)\in \Bb(L^p(\R^5))$ 
if and only if $5/3<p<5$. Hence, statement (3) 
follows (2). 
Using that  
$e^{i\rho\lam}= (i\rho)^{-(k+1)}\pa_\lam^{k+1}
e^{i\rho\lam}$ and 
$\int_{\R} \lam^{j+1} e^{-i\lam r} M(r)dr
=i^{j}\left(\int_{\R} e^{-i\lam r} M(r)dr\right)^{(j)}$, 
we apply integration by parts to \refeq(Kjk0-def) 
and write $K^{(j,k)}_0(\rho )$ in two ways  
\begin{align}
K^{(j,k)}_0(\rho )& = \frac{i^{k+1}}{2\pi\rho^{k+1}}
\int_0^\infty e^{i\rho\lam}
\left(\lam^{j+k}F(\lam) \int_{\R} 
e^{-i\lam r} r^{j+1} M(r)dr \right)^{(k+1)}d\lam  
\lbeq(parts-1) 
\\
& = \frac{(-i)^{j}}{2\pi}
\int_0^\infty \left(e^{i\lam\rho } \lam^{j+k}F(\lam)
\right)^{(j)} 
\left(\int_{\R} e^{-i\lam r}r M(r)dr\right)
d\lam.  \lbeq(parts-2a) 
\end{align}
Note that boundary terms do not appear 
in \refeq(parts-1) since 
$\int_{\R} r M(r) dr=0$ and, if $k=1$, we may apply 
further integration by parts to \refeq(parts-2a) 
without having boundary term and 
\bqn 
K^{(j,k)}_0(\rho )=\frac{(-i)^{j+1}}{2\pi}
\int_0^\infty \left(e^{i\lam\rho } \lam^{j+k}F(\lam)
\right)^{(j+1)} 
\left(\int_{\R} e^{-i\lam r} M(r)dr\right)
d\lam.  \lbeq(parts-2) 
\eqn 
We then apply \reflms(max,add-cut) to the right 
sides and obtain the following estimates for $j,k=0,1$: 
\begin{numcases} 
{K^{(j,k)}_0 (\rho) \absleq }
C\rho^{-(k+1)} \sum_{l=0}^{k+1}\Mg\Hg(r^{j+l+1}M)(\rho), 
\lbeq(lkjk-1) \\
C(1+\rho^{j+k})\Mg\Hg(r^{1-k}M)(\rho). 
\lbeq(lkjk-2) 
\end{numcases} 

\noindent 
(a) Let $1<p<5/4$. 
Since $|r|^{-4(p-1)}$ is an $A_p$ weight on $\R$ and $3p-4>-1$, 
we have by using \refeq(lkjk-1) and \refeq(supM) that, 
for any $j,k=0,1$,    
\begin{align}
& \left\|\frac{K^{(j,k)}_0 (|y|)}{|y|^{3-k}}\right\|_p 
\leq C  
\sum_{l=0}^{k+1}
\left(\int_0^\infty \frac{|\Mg\Hg(r^{j+l+1}M)(\rho)|^p}
{\rho^{4(p-1)}}  d\rho\right)^{1/{p}} 
\notag \\ 
& \leq C \left(
\int_{0}^1 
\frac{|M(r)|^pdr}{r^{3p-4}} + \int^\infty_1 |M(r)|^p r^4dr 
\right)^{\frac1{p}} \leq C(\|V\ph\|_{p'}+ \|V\ph\|_{1})\|u\|_p .
\lbeq(4-49a)
\end{align}
Young's inequality then implies  
$\|Z_{s0}^{jk}u\|_p \leq C\|V\ph\|_1 
(\|V\ph\|_{p'}+ \|V\ph\|_{1})\|u\|_p$. 

\noindent 
(b) We next show that $\|Z_{s0}^{j1}u\|_p \leq C \|u\|_p$ 
for $p>5$ and $j=0,1$. 
Interpolating this with the result of (a), we then have  
the same for all $1<p<\infty$.  
We split the integral as in \refeq(zi-3df) 
and repeat the argument after it: 
\[
|Z_{0s}^{j1}u(x)|\leq 
C\left(\int_{|y|\leq 1} + \int_{|y|>1}\right) 
\frac{|V\ph(x-y)|}{|y|^{2}}
|K^{(j,1)}_0 (|y|)|dy = I_1(x) + I_2(x). 
\] 
For $\rho\geq 1$, we have 
$K^{(j,1)}_0(\rho ) \absleq C \rho^2 \Mg \Hg (M(r))(\rho)$ 
by virtue of \refeq(lkjk-2) and  
since $r^4$ is $A_p$ weight on $\R$ if $p>5$. It follows 
that   
\begin{align}
\|I_2\|_p & \leq C\|V\ph\|_1 
\left\| \frac{K^{(j,1)}_0}{|x|^2}\right\|_{L^p(|x|\geq 1)} 
\leq C \|V\ph\|_1\left(\int_0^\infty 
|\Mg \Hg (M)(\rho)|^p \rho^{4}d\rho\right)^{\frac1{p}}  \notag \\
& \leq C \|V\ph\|_1 \left(\int_0^\infty 
|M(r)|^p r^{4}dr\right)^{1/p} 
\leq C \|V\ph\|_1^2 \|u\|_p. \lbeq(efg) 
\end{align}
H\"older's inequality and 
\refeq(lkjk-2) for $0\leq \rho \leq 1$, 
$K^{(j,1)}_0(\rho ) \absleq C \Mg \Hg (M)(\rho)$ imply  
\[ 
|I_1(x)|  
\leq C\left(\int_{|y|\leq 1} 
\left|\frac{|V\ph(x-y)|}{|y|^2}\right|^{p'}dy 
\right)^{\frac1{p'}} 
\left(\int_0^1  
|\Mg \Hg (M(r))(\rho)|^p \rho^{4}d\rho\right)^{\frac1{p}}.
\]  
Since $p'\leq \frac{5}{4}$ if $p>5$, 
Minkowski's inequality and \refeq(efg) imply 
\bqn \lbeq(efg-2)
\|I_1 \|_p \leq C \|V\ph\|_1 \|V\ph\|_p \|u\|_p .
\eqn 

\noindent 
(c) We finally prove 
$-2ia_0 C_0(c_1 {Z}_{s0}^{10}-c_0 Z_{s0}^{00}) 
+ a_0 |\ph \ra \la |D|^{-1}(V\ph)| \in \Bb(L^p(\R^5))$ 
for $p>5/2$. This will complete the proof of the lemma. 
Indeed, by virtue of \refeq(5dd), this and (b) clearly 
imply statement (2); 
since $| \ph \ra \la |D|^{-1}(V\ph)|$ is bounded 
in $L^p(\R^5)$ for $5/3<p<5$ as remarked previously, 
this also implies 
$-2ia_0 C_0(c_1Z_{s0}^{10}-c_0 Z_{s0}^{00})\in \Bb(L^p(\R^5))$ 
for $5/3<p<5$ and, hence, for $1<p<5$ by virtue 
of result (a) and interpolation. Then, (b) 
yields statement (1). If $k=0$, further integration by parts 
to \refeq(parts-2a) produces 
boundary term:
\begin{multline} \lbeq(4.58)
K^{(j,0)}_0(\rho ) = 
\frac{(-i)^{j+1}}{2\pi}
j! \int_{\R} M(r)dr \\ 
+ \frac{(-i)^{j+1}}{2\pi}
\int_0^\infty \left(e^{i\lam\rho } \lam^{j}F(\lam)
\right)^{(j+1)} 
\left(\int_{\R} e^{-i\lam r} M(r)dr\right) d\lam.
\end{multline}
The second integral, which we denote by 
$\tilde{K}^{(j,0)}_0(\rho)$, satisfies  
\bqn \lbeq(p5p52)
\tilde{K}^{(j,0)}_0(\rho) \absleq 
C (1+ \rho^{j+1}) \Mg\Hg(M)(\rho)
\leq C (1+ \rho^{j+2}) \Mg\Hg(M)(\rho) 
\eqn 
and we estimate 
the operator $\tilde{Z}^{j0}$ 
obtained by replacing $K^{(j,0)}_0(\rho)$ 
by $\tilde{K}^{(j,0)}_0(\rho)$ in \refeq(5d-s0) 
by repeating the argument of step (b):  
Split $\tilde{Z}^{j0}u(x)$  as in there 
and obtain  
$\|I_2\|_p\leq C \|u\|_p$ for $5/2<p<5$ (resp. $p>5$) 
by using the first (resp. second) estimate 
of \refeq(p5p52) and that $r^{4-p}$ (resp. $r^4$) is 
an $A_p$-weight on $\R$. Likewise we obtain 
$\|I_1\|_p \leq C \|u\|_p$ for $5/2<p<5$ 
(resp. $p>5$) by first applying H\"older's inequality 
by considering the integrand as 
$(|V\ph(x-y)|/|y|^2)\cdot (|\tilde{K}^{(j,0)}_0(|y|)|/|y|)$ 
(resp. $|V\ph(x-y)|/|y|^3  \cdot |\tilde{K}^{(j,0)}_0(|y|)|$) 
and then using Minkowski's inequality. Thus, we have 
for $j=0,1$ that 
\bqn 
\|\tilde{Z}^{j0} u\|_p \leq C \|u\|_p , \quad 5/2<p<\infty.
\eqn 
The contribution of boundary terms of \refeq(4.58) to  
$c_0 K_0^{(00)}- c_1 K_0^{(10)}$ is given 
by virtue of \refeq(c0c1) and \refeq(asymp-d1) by 
\begin{multline*}
(c_1-ic_0) \times \frac1 {2\pi}\int_{\R} M(r)dr=
\frac{c_0}{\pi{i}}\int_{\R} M(r)dr 
=-4\pi^2 C_0 i \la |D|^{-1}(V\ph), u\ra 
\end{multline*} 
and this contributes to  
$2a_0 i C_0(c_0Z_{s0}^{00}-c_1 Z_{s0}^{10})u(x)$ by   
\[
8\pi^2 a_0 C_0^2 \int_{\R^5}\frac{V\ph(y)}{|x-y|^3}dy 
\cdot 
(\la |D|^{-1}(V\ph), u\ra ) 
= -a_0 \ph(x) \la |D|^{-1}(V\ph), u \ra,  
\]
where we used $8\pi^2 C_0=1$ when $m=5$. 
This proves the lemma. \edpf 

\subsection{Estimates of $Z_{s1}$ for $m \geq 5$.}

We next study $Z_{s1}u$ for all $m\geq 7$.  
By virtue of \refeq(z5s1) and \refeq(z5s1-a) 
and the remark at the beginning of section 5, it suffices 
to study $Z_{1s}^{jk}(\f)u$ defined by \refeq(ws1jk) 
for $\f\in \Eg$. For simplifying notation, we often omit 
$\f$ from $Z_{1s}^{jk}(\f)$. Define  
\bqn \lbeq(m-ast)
M_\ast(r)=M(r, |D|^{-1}(V\f) \ast \check{u}).
\eqn 
Then, by virtue of \refeq(devi-by), 
$K^{(j,k)}(\rho)$ may also be expressed as 
\bqn  \lbeq(kjk-mast)
K^{(j,k)}(\rho)= \frac{1}{2\pi}
\int_0^\infty e^{i\lam\rho} \lam^{j+k} F(\lam) 
\left(\int_{\R} e^{-i\lam r} r^{j+1} 
M_\ast (r)dr\right)d\lam  
\eqn 
which has the larger factor $\lam^{k+j}$ than $\lam^{k+j-1}$ 
of \refeq(Kjk-def). 
We omit the proof of the following lemma which is 
essentially the same as that of \refeqs(lkjk-1,lkjk-2)   
\bglm 
$K^{(j,k)}(\rho )$ satisfies the following estimates: 
\begin{numcases}  
{K^{(j,k)}(\rho) \absleq } 
C \rho^{-k-1}\sum_{l=0}^{k+1} \Mg\Hg(r^{j+1+l}M)(\rho), 
& $j \geq 2$.    \lbeq(jk-1) \\
C (1+ \rho^{j-1})\Mg\Hg(r^2 M)(\rho), & $j \geq 1$. 
\lbeq(jk-0) \\[7pt] 
C(1+ \rho^{j}) \Mg\Hg(rM)(\rho),   
& $k+j \geq 1 $.  \lbeq(jk-3) \\[7pt]
C(1+ \rho^{j+1}) \Mg\Hg(M)(\rho),  
& $k \geq 2$. \lbeq(jk-2) \\[7pt]
C(1+ \rho^{j}) \Mg\Hg(rM_\ast)(\rho),   
& $k \geq 0$. \lbeq(jk-4) 
\end{numcases} 
\edlm

\bglm \lblm(lemma)
Suppose $m \geq 5$ and $\f \in \Eg$. Then:
\ben 
\item[{\rm (1)}] If $j \geq 2$, $Z_{1s}^{jk}(\f)$, 
$k=0,\dots, \frac{m-3}{2}$, are bounded in 
$L^p(\R^m)$ for $1<p<\frac{m}{2}$.  
\item[{\rm (2)}] For $k \geq 2$, 
$Z_{1s}^{jk}(\f)$, $j=0,\dots, \frac{m-3}{2}$, 
are bounded in $L^p(\R^m)$ for $\frac{m}3<p$. 
\item[{\rm (3)}] For all $j,k$, $Z_{1s}^{jk}(\f)$ is bounded 
in $L^p(\R^m)$ for $\frac{m}3<p<\frac{m}2$. 
\een 
If both $j,k \geq 2$, $Z_{1s}^{jk}(\f)$ is bounded 
in $L^p(\R^m)$ for all $1<p<\infty$. 
\edlm 
\bgpf (a) We first prove (1) for $1<p<\frac{m}{m-1}$. 
General case follows from this and (3) by interpolation.  
We use \refeq(jk-1) and that 
$r^{-(m-1)(p-1)}$ is an $A_p$ weight on $\R$ for 
$1<p<\frac{m}{m-1}$. 
Then, estimating as in \refeq(4-49a), we obtain   
\begin{align} 
& \|Z_{s1}^{jk} u\|_p 
\leq  C \|V\f\|_1 \left(\int_0^\infty |M(r)|^p r^{m-1}dr 
+ \int_0^1 \frac{|M(r)|^p}{r^{(m-4)p}} r^{m-1}dr \right)^{1/p}
\notag   \\
& \leq C\|V\f\|_1 (\|V\f\|_1 + \|V\f\|_{p'})\|u\|_p. \lbeq(const-ab)
\end{align}

\noindent 
(b) We next prove (2) for $p>m$. General case then follows 
from this and (3) by interpolation. We split the 
integral as in \refeq(zi-3df):  
\[
Z_{s1}^{jk}u(x) 
\absleq 
\left(\int_{|y|\leq 1} +\int_{|y|\geq 1}\right)
\frac{|V\f(x-y)|}{|y|^{m-2-k}}|K^{(j,k)} (|y|)|dy 
= I_1(x) + I_2(x). 
\]
Using \refeq(jk-2) for $\rho\geq 1$ and that  
$r^{m-1}$ is $A_p$ weight on $\R$ if $p>m$, we obtain  
\bqn 
\|I_2\|_p \leq C \|V\f\|_1 \left(\int_1^\infty 
|\Mg \Hg (M)(\rho)|^p \rho^{m-1}d\rho\right)^{\frac1{p}} 
\leq C \|V\f\|_1^2 \|u\|_p. 
\lbeq(efg-0) 
\eqn 
H\"older's inequality and 
\refeq(jk-2) for $0\leq \rho \leq 1$ imply that 
\bqn 
|I_1(x)| \leq \left(\int_{|y|\leq 1} 
\left|\frac{V\f(x-y)|}{|y|^{m-2-k}}\right|^{p'}dy \right)^{1/{p'}} 
\left(\int_0^1 
|\Mg \Hg (M)(\rho)|^p \rho^{m-1}d\rho\right)^{1/p}. 
\lbeq(efg-1)
\eqn  
Then, Minkowski's inequality and the estimate as in 
\refeq(efg-0) yield  
\[
\|I_1 \|_p 
\leq C \|V\f\|_1 \|u\|_p 
\left(\int_{|x|<1} 
\frac{\|V\f\|_p^{p'} dx}{|x|^{(m-2-k)p'}}\right)^{1/p'}
\leq  C \|V\f\|_1\|V\f\|_p \|u\|_p 
\]
because $p'\leq \frac{m}{m-1}$ if $p>m$ and 
$|y|^{-(m-2-k)p'}$ is integrable over $|y|\leq 1$. 
Thus, statement (2) for $p>m$ follows. 

\noindent  
(c) We prove statement (3) by modifying the argument 
in step (b). Let $\frac{m}3<p<\frac{m}2$. Then,  
$r^{m-1-2p}$ is an $A_p$ weight on $\R$.  
We split  the integral of $Z^{jk}_{s1}u(x)$ as in step (b). \\
(i) Let $j \geq 1$. Estimate \refeq(jk-0) for $\rho\geq 1$ 
and \reflm(ap) yield 
\bqn  
\|I_2\|_p \leq C \|V\f\|_1 \left(\int_0^\infty 
|\Mg \Hg (r^2 M)(\rho)|^p \rho^{m-1-2p}d\rho\right)^{\frac1{p}} 
\leq C \|V\f\|_1^2 \|u\|_p. 
\lbeq(efg-01) 
\eqn 
Estimate \refeq(jk-0) for $\rho\leq 1$ and H\"older's 
inequality imply    
\[ 
|I_1(x)| \leq \left(\int_{|y|\leq 1} 
\left|\frac{|V\f(x-y)|}{|y|^{m-4-k}}\right|^{p'}dy 
\right)^{\frac1{p'}} 
\left(\int_0^1 |\Mg \Hg (r^2 M)(\rho)|^p 
\rho^{m-1-2p}d\rho\right)^{\frac1{p}}. 
\]  
Minkowski's inequality and the second estimate of 
\refeq(efg-01) 
imply $\|I_1\|_p \leq C \|V\f\|_p \|V\f\|_1 \|u\|_p$ 
as previously and, hence, 
$\|Z_{s1}^{jk}u\|_p \leq C\|u\|_p$. 

\noindent
(b) Let $j=0$. Express $K^{(0,k)}(\rho )$ by using 
$\tilde{M}(r)$ of \refeq(tiM) and estimate as    
\bqn \lbeq(mtm-12)
K^{(0,k)}(\rho )= \frac1{2i\pi}
\int_0^\infty e^{i\lam\rho }\lam^{k}
\left(\int_{\R} e^{-i\lam r} \tilde{M}(r)dr
\right)F(\lam)d\lam
\absleq  C\Mg \Hg(\tilde{M})(\rho).
\eqn 
Since $\rho^{-(m-2-k)}\leq \rho^{-2}$ for $\rho\geq 1$,  
Young's inequality, \reflm(ap) and Hardy's inequality  
yield   
\begin{align} 
& \|I_2\|_p \leq C \|V\f\|_1 
\left(\int_0^\infty 
|\tilde{M}(r)|^p r^{m-1-2p}dr\right)^{1/p} \notag \\
& \leq C \|V\f\|_1 
\left(\int_0^\infty 
|M(r)|^p r^{m-1}dr\right)^{1/p} 
\leq C \|V\f\|_1 \|V\f\|_p \|u\|_p. \lbeq(4-100)
\end{align}
H\"older's inequality and \refeq(mtm-12) imply  
\[
|I_1(x)| \leq \left(\int_{|y|\leq 1} 
\left|\frac{|V\f(x-y)|}{|y|^{m-4-k}}\right|^{p'}dy 
\right)^{1/{p'}} 
\left(\int_0^1 
|\Mg \Hg (\tilde{M})(\rho)|^p \rho^{m-1-2p}d\rho\right)^{1/p} 
\]
Estimate the second factor by \refeq(4-100) 
and use Minkowski's equality. This yields  
$\|I_1\|_p \leq C \|V\f\|_p \|V\f\|_1 \|u\|_p$.
The last statement follows from (1) and (2) by 
interpolation. 
\edpf

\bglm \lblm(joint) 
Let $m \geq 5$ and  $\f\in \Eg$. Then: 
\ben 
\item[{\rm (1)}]  For $1<p<\frac{m}{2}$,  
$\|(c_0 Z_{s1}^{(0,k)}-c_1 Z_{s1}^{(1,k)})u\|_p \leq C \|u\|_p$ 
for all $0 \leq k \leq \tfrac{m-3}2$.
\item[{\rm (2)}] 
The operator $Z_{s1}(\f)$ is bounded in $L^p(\R^m)$ 
for $1<p<\frac{m}2$. 
\een
\edlm 
\bgpf It suffices to prove the estimate of (1) for 
$1<p<\frac{m}{m-1}$ since that for $1<p<\frac{m}2$ 
follows from this and \reflm(lemma) (3) by interpolation 
and since statement (2) follows from this and 
statement (1) of \reflm(lemma). Using the identity  
$e^{i\lam\rho}=(i\rho)^{-k-1}\pa_{\lam}^{k+1}e^{i\lam\rho}$, 
we apply integration by parts $k+1$ times to the integral 
of \refeq(mtm-12) and use the identity \refeq(MtildeM-1). 
We obtain 
\begin{align}
& K^{(0,k)}(\rho)=\frac{i^k}{2\pi\rho^{k+1}}
\left(k! \int_{\R}r^2 M(r)dr \right. \notag \\
& \quad \left. + 
\sum_{l=0}^{k+1}
\begin{pmatrix} k+1 \\ l \end{pmatrix}
\int_0^\infty e^{i\lam\rho}
(\lam^{k}F)^{(k+1-l)} 
\int_{\R} e^{-i\lam r}(-ir)^{l} \tilde{M}dr
d\lam \right).   \lbeq(K0k) 
\end{align}
Integration by parts $k+1$ times  to 
$K^{(1,k)}(\rho)$ of \refeq(Kjk-def) likewise yields  
\begin{align}
& K^{(1,k)}(\rho)= \frac{i^k}{2\pi{i}\rho^{k+1}}
\left(-k! \int_{\R}r^2 M(r)dr \right. 
\notag 
\\
& \left. - 
\sum_{l=0}^{k+1}
\begin{pmatrix} k+1 \\ l \end{pmatrix}
\int_0^\infty e^{i\lam\rho}
(\lam^{k}F)^{(k+1-l)} 
\int_{\R} e^{-i\lam r}(-ir)^{l}r^2 M dr
d\lam \right) . \lbeq(K1k)
\end{align}
Since $c_0-ic_1=0$, the boundary terms of \refeq(K0k) 
and \refeq(K1k) cancel and  
\[ 
\frac{c_0 K^{(0,k)}(\rho) - c_1 K^{(1,k)}(\rho)}
{\rho^{m-2-k}} 
\absleq \frac{C}{\rho^{m-1}}
\sum_{l=0}^{k+1}(\Mg \Hg(r^{l}\tilde{M})(\rho)+ 
\Mg \Hg(r^{l+2} M)(\rho)). \lbeq(coK)  
\]
For $1<p<\frac{m}{m-1}$, $\rho^{-(m-1)(p-1)}$ is an $A_p$-weight 
on $\R$. It follows by Young's inequality, 
\reflm(ap) and Hardy's inequality that 
$\|(c_0 Z^{(0,k)}-c_1 Z^{(1,k)})u\|_p$ is bounded by 
$C\|V\f\|_1$ times       
\begin{align} 
& \sum_{l=0}^{k+1}  
\left(\int_0^\infty 
(|\tilde M(r)|^p r^{pl} + 
|M(r)|^p r^{p(l+2)}) r^{m-1-p(m-1)}dr\right)^{1/p} 
\lbeq(9th)
\\
& \quad \leq C  
\left(\int_0^1 \frac{|M(r)|^p}{r^{p(m-3)}} r^{m-1}dr 
+ \int_0^\infty |M(r)|^p r^{m-1} dr \right)^{1/p} 
\lbeq(91th) 
\\
& \quad\leq  C(\|V\f\|_{p'}+ \|V\f\|_p)\|u\|_p.
\lbeq(W0ka)
\end{align}
Here we used $k+3\leq m-1$ for $m\geq 5$ in the first step 
and $p(m-1)<m$ in the last. 
This proves the estimate of (1) for $1<p<\frac{m}{m-1}$. 
\edpf 

\reflm(5-7) and the second statement of 
\reflm(joint) prove statement (1) 
of \refths(theo5,theo6) for odd $m$. The following 
lemma (and \reflm(5-7) for the case $m=5$) proves 
statement (2) of these theorems for odd $m$. 

\bglm \lblm(m/2m) 
Let $m\geq 5$, $\f \in \Eg$ and $\frac{m}2<p<m$.  
Then, for a constant $C>0$, 
\bqn \lbeq(m/2m)
\left\|Z_{s1}(\f)u+\frac{\Ga\left(\frac{m-2}{2}\right)}
{\sqrt{\pi}\Ga\left(\frac{m-1}2\right)}  
\la u, \f\ra \f \right\|_p \leq C \|u\|_p. 
\eqn 
If $Z_{s1}(\f) \in \Bb(L^p)$ for some $\frac{m}2<p<m$, 
then $\f\in \Eg_0$ and $Z_{s1}(\f) \in \Bb(L^p)$ for 
all $1<p<m$. 
\edlm 
\bgpf Let $j+k\geq 1$. Since $m-2-(k+j)\geq 1$, we have from 
\refeq(jk-3) that 
\[
\frac{K^{(j,k)}(\rho)}{\rho^{m-2-k}} 
\absleq C \left(\frac1{\rho^{m-2-k}}
+\frac1{\rho}\right)\Mg \Hg (r M)(\rho)
\]
Using that $r^{m-1-p}$ is $A_p$ weight and 
$(m-2)p'<m$ for $m/2<p<m$, 
we repeat the argument of the step (b) or (c) of the 
proof of \reflm(lemma) and obtain 
\bqn 
\|Z^{jk}_{s1}u\|_p \leq C \|u\|_p, \quad j+k\geq 1. 
\eqn 
It remains to consider $-2i C_0 c_0 Z^{00}_{s1}$, 
see \refeq(z5s1-a). We apply integration by parts 
to \refeq(kjk-mast) with $j=k=0$ 
\begin{align} 
& K^{(0,0)}(\rho)= \frac{i}{2\pi}
\int_0^\infty e^{i\lam\rho} F(\lam) 
\pa_{\lam}\left(\int_{\R} e^{-i\lam r}  M_\ast (r)dr\right)d\lam \notag 
\\
& = \frac{-i}{2\pi}\int_{\R} M_\ast (r)dr
- \frac{i}{2\pi}
\int_0^\infty (e^{i\lam\rho} F(\lam))' 
\left(\int_{\R} e^{-i\lam r} M_\ast (r)dr\right)d\lam.
\lbeq(lef)
\end{align} 
We denote the second integral of \refeq(lef) and the 
operator produced by inserting it into \refeq(ws1jk) 
with $j=k=0$ in place of $K^{(0,0)}(\rho)$ by 
$K_\ast^{(0,0)}(\rho)$ and $Z_\ast^{00}$ respectively. 
We have $|K_\ast^{(0,0)}(\rho)|
\leq C(1+\rho)\Mg\Hg(M_\ast)(\rho)$. 
Decompose   
\[
Z_\ast^{00}u(x)\absleq  
\left(\int_{|y|\leq 1}+ \int_{|y|\geq 1}\right) 
|(V\f)(x-y)|\frac{|K_\ast^{(0,0)}(|y|)|}{|y|^{m-2}}dy 
= I_1(x)+ I_2(x)
\] 
as previously. For estimating $\|I_2\|_p$, define 
$1/q=1/p-1/m$ and apply Young's inequality, 
H\"older's inequality, \reflm(ap) noticing that $q>m$ and 
$r^{m-1}$ is $A_q$ weight and, Hardy-Littlewood-Soblev 
inequality recalling that 
$|D|^{-1}(V\f)\ast(x)\absleq C \ax^{1-m}$. 
We obtain   
\begin{align}
& \|I_2\|_p 
\leq C \|V\f\|_1 \left(\int_0^\infty
|\Mg\Hg(M_\ast)(\rho)|^q \rho^{m-1} d\rho\right)^{1/q}
\left\|\frac{1}{|y|^{m-3}}\right\|_{L^{m}(|y|>1)} \notag \\
& \leq C \|V\f\|_1 \||D|^{-1}(V\f)\ast \check{u}\|_q
\leq C\|V\f\|_1 \||D|^{-1}(V\f)\|_{\frac{m}{m-1},w} \|u\|_p.
\lbeq(deco-8) 
\end{align} 
For $I_1(x)$, H\"older's inequality implies  
\[
|I_1(x)|\leq C \left(\int_{|y|\leq 1}
\left|\frac{(V\f)(x-y)}{|y|^{m-2}}\right|^{q'}dy\right)^{1/q'} 
\left(\int_{|y|\leq 1}
|\Mg\Hg(M_\ast)(|y|)|^{q}dy\right)^{1/q}.
\]
The second factor on the right is bounded by 
$C\||D|^{-1}(V\f)\|_{\frac{m}{m-1},w}\|u\|_p$ 
as in \refeq(deco-8) and $q'<\frac{m}{m-1}<\frac{m}2<p$. 
It follows by Minkowski's inequality that 
\[
\|I_1\|_p \leq C \|V\f\|_p \|u\|_p \left(\int_{|y|\leq 1}
\frac{dy}{|y|^{(m-2)q'}}\right)^{1/q'}
\leq C \|V\f\|_p \|u\|_p. 
\]
Thus, we have $\|Z^{00}_\ast u\|_p \leq C \|u\|_p$ 
for $\frac{m}2<p<m$. The boundary term of 
\refeq(lef) is, by virtue of \refeq(fractional) and 
that $c_0=(m-2)^{-1}$, equal to  
\begin{align}
& \frac{-i}{2\pi}\int_{\R} M_\ast (r)dr
= \frac{-i}{\pi\omega_{m-1}} 
\int_{\R^m}\left(\int_{\R^m}
\frac{|D|^{-1}(V\f)(y)}{|x-y|^{m-1}} dy 
\right)u(x) dx \notag 
\\
& = \frac{-i\Ga\left(\frac{m}{2}\right)}
{\sqrt{\pi}\Ga\left(\frac{m-1}2\right)}  
\int_{\R^m} |D|^{-2}(V\f)(x)u(x) dx
= \frac{i\Ga\left(\frac{m}{2}\right)}
{\sqrt{\pi}\Ga\left(\frac{m-1}2\right)}\la \f, u \ra. 
\lbeq(bd-m)
\end{align}
Inserting this into the right of \refeq(ws1jk) for 
$j=k=0$, we see the contribution of the boundary 
term to $Z_{s1}(\f)u$ is given by 
\[
\frac{2c_0 C_0 \Ga\left(\frac{m}{2}\right)}
{\sqrt{\pi}\Ga\left(\frac{m-1}2\right)}
\int_{\R^m}\frac{V\f(y)}{|x-y|^{m-2}}dy 
\la \f, u \ra
= - \frac{\Ga\left(\frac{m-2}{2}\right)}
{\sqrt{\pi}\Ga\left(\frac{m-1}2\right)} |\f \ra \la \f, u\ra.
\] 
This proves the first statement. If 
$Z_{s1}\in \Bb(L^p)$ for some $\frac{m}2<p<m$, \refeq(m/2m) implies 
$\f \otimes \f \in \Bb(L^p)$ for this $p$. Then, \refeq(asymp-d) 
implies that $\f$ must satisfy $\la \f, V\ra=0$ and 
and $\f \otimes \f \in \Bb(L^p)$ for all $\frac{m}{m-1}<p <m$. 
Then, $Z_{s1}\in \Bb(L^p)$ must be satisfied for all 
$\frac{m}2<p<m$ and, hence, for all $1<p<m$ by \reflm(joint) 
and interpolation. 
\edpf 

We finally study $Z_{1s}(\f)$ in $L^p(\R^m)$ for $p>m$. 
If $Z_{1s}(\f)\in \Bb(L^p(\R^m))$ for some $p>m$, then 
\reflm(m/2m) implies $\f \in \Eg_0$. Thus, assume 
$\f\in \Eg_0$ in the following lemma. 
The following lemma proves statements (3) of \refth(theo5) 
and \refth(theo6) for odd $m\geq 7$. 

\bglm \lblm(m<p)
Let $m \geq 5$ be odd, $p>m$ and $\f \in \Eg_0$. Then: 
\ben 
\item[{\rm (1)}] For a constant $C_p>0$,  
$\|Z_{s1}(\f)u + |\f \ra \la \f|) u \|_p \leq C \|u\|_p$.  
\item[{\rm (2)}]  
If $Z_{1s}(\f)$ is bounded in $L^p(\R^m)$ for some $p>m$, 
then $\f \in \Eg_1$. In this case $Z_{1s}$ is bounded in $L^p(\R^m)$ 
for all $1<p<\infty$. 
\een
\edlm 
\bgpf Considering that 
$\int_{\R} r^{j+1} e^{-i\lam r} M_\ast (r)dr
= i^{j+1}\left(\int_{\R} e^{-i\lam r} 
M_\ast (r)dr\right)^{(j+1)}$, we apply integration by parts 
to \refeq(kjk-mast). Then, for $k\geq 1$, we have  
\bqn 
K^{(j,k)}(\rho)
=\frac{(-i)^{j+1}}{2\pi}\int_0^\infty 
\left(e^{i\lam\rho}\lam^{j+k}F(\lam)\right)^{(j+1)} 
\left(\int_{\R} e^{-i\lam r} M_\ast (r)dr\right)
d\lam \lbeq(inte-pm)
\eqn
and, if $k=0$, additional boundary term which is 
given by virtue of \refeq(bd-m) by 
\bqn \lbeq(brytn)
\frac{(-i)^{j+1}j!}{2\pi}\int_{\R}M_\ast (r)dr
= \frac{i(-i)^{j}j! \Ga\left(\frac{m}2\right)}{\sqrt{\pi}\Ga\left(\frac{m-1}2\right)}
\la \f, u\ra, \quad j=0, \dots, \frac{m-3}2.  
\eqn 
Denote the right of \refeq(inte-pm) 
by $\tilde{K}^{(j,0)}(\rho)$ when $k=0$. Then,  
\bqn \lbeq(est-88)
\frac{K^{(j,k)}(\rho)}{\rho^{m-2-k}} 
\absleq C\left(1+ \frac1{\rho^{m-2}}\right) 
\Mg \Hg(M_\ast)(\rho), \quad 0\leq j,k\leq \frac{m-3}2  
\eqn 
and the same for $\tilde{K}^{(j,0)}(\rho)$. We split 
$Z_{s1}^{jk}u$ as previously:   
\[
Z_{s1}^{jk}u(x)=\left(\int_{|x-y|\leq 1} + \int_{|x-y|>1}\right)
\frac{V\f(y)K^{(j,k)}(|x-y|)}{|x-y|^{m-2-k}}dy 
= I_1(x)+ I_2(x).
\]
We estimate $I_2(x)$ by using 
\refeq(est-88) for $\rho\geq 1$, that 
$\rho^{m-1}$ is $A_p$ weight for $p>m$, \refeq(asymp-d) 
for $\f\in \Eg_0$ and the Calder\'on-Zygmund theory. This yields    
\begin{align}
\|I_2\|_p \leq \|V\f\|_1
\left(\int_1^\infty 
|\Mg \Hg(M_\ast)(\rho)|^p \rho^{m-1}d\rho\right)^{1/p} \notag \\
\leq \|V\f\|_1 \||D|^{-1}(V\f)\ast u\|_p 
\leq C \|V\f\|_1 \|u\|_p.   \lbeq(est-89)
\end{align}
H\"older's inequality and \refeq(est-88) for 
$\rho\leq 1$ imply 
\[
|I_1(x)|\leq C \left(\int_{|y|\leq 1}
\left|\frac{(V\f)(x-y)}{|y|^{m-2}}\right|^{p'}dy\right)^{1/p'} 
\left(\int_{|y|\leq 1}
|\Mg\Hg(M_\ast)(|y|)|^{p}dy\right)^{1/p}.
\]
The second factor on the right is bounded by $C\|u\|_p$ 
as in \refeq(est-89). Since $p'<\frac{m}{m-1}<m<p$, 
it follows by Minkowski's inequality that 
\[
\|I_1\|_p \leq C \|V\f\|_p \|u\|_p \left(\int_{|y|\leq 1}
\frac{dy}{|y|^{(m-2)p'}}\right)^{1/p'}
\leq C \|V\f\|_p \|u\|_p. 
\]
Thus, $Z_{s1}^{jk}\in \Bb(L^p(\R^m))$ for $p>m$ if 
$k\geq 1$ and the same for the operator 
$\tilde{Z}_{s1}^{j0}$ produced by $\tilde{K}^{(j,0)}(\rho)$. 
The contribution of boundary terms \refeq(brytn) to 
$Z_{s1}(\f)$ is given by using the constants $C_j$ 
of \refeq(coker) by 
\begin{align}
& 2i \sum_{j=0}^{\frac{m-3}2}
C_0 C_j(-1)^{j+1}\w_{m-1} 
\left(\int_{\R^d}\frac{(V\f)(y)}{|x-y|^{m-2}}dy \right) 
\frac{i(-i)^{j}j! \Ga\left(\frac{m}{2}\right)}
{\sqrt{\pi}\Ga\left(\frac{m-1}2\right)}
\la \f, u\ra  \notag \\
& \hspace{1cm} =-\tilde{D}_m |\f\ra \la \f, u\ra, \quad 
\tilde{D}_m =  
\sum_{j=0}^{\frac{m-3}{2}}\frac{(m-3-j)!}
{2^{m-3-j}\left(\frac{m-3}2\right)!\left(\frac{m-3}2-j\right)!} .
\end{align}
The constant $\tilde{D}_m$ can be elementarily 
computed and with  $n=\frac{m-3}2$ 
\[
\tilde{D}_m= 
\sum_{k=0}^{n} 
\frac1{2^{2n-k}}\begin{pmatrix} 2n-k \\ n-k  \end{pmatrix} 
=
\sum_{k=0}^{n} 
\frac1{2^{n+k}}\begin{pmatrix} n+k \\ k  \end{pmatrix} =1. 
\]
(see also page 167 of \cite{GKP}.)
This proves statement (1). We omit the 
proof of (2) which is similar to the corresponding 
statement of \reflm(m/2m).  
\edpf 

Since $Z_{s1} u = \sum_{i=1}^n Z_{s1}(\f_j)$ for the orthonormal 
basis of $\Eg$, the combination of lemmas in this section 
proves \refths(theo5,theo6) for odd $m$. 

\section{Proof of \refthb(theo6) for even $m \geq 6$} 

For proving \refth(theo6) for even dimensions $m\geq 6$ 
we need study $Z_s$ and $Z_{\log}$ of \refeq(e-zs) 
and \refeq(loge). Since $Z_{\log}$ may be studied 
in a way similar to but simpler than that for 
$Z_s$, we shall be mostly concentrated on $Z_s$ and 
only briefly comment on $Z_{\log}$ at the end of the 
section. As in odd dimensions we take the real orthonormal 
basis $\{\f_1, \dots, \f_d\}$ of $\Eg$ and define, for 
$\f \in \Eg$, 
\bqn 
Z_{s}(\f)u= \frac{i}{\pi}\int^\infty_0 G_0( \lam)
|V\f\ra \la \f V|(G_0(\lam)-G_0(-\lam))F(\lam)\lam^{-1} 
d\lam.  
\lbeq(zsf) 
\eqn 
Then, we have 
\[
Z_s u= \sum_{j=1}^d Z_{s}(\f_j)u 
\]
and we study $Z_s(\f)$ for $\f \in \Eg$. 
In this section we choose and fix a $\f\in \Eg$ 
arbitrarily and write $M(r)=M(r,V\f \ast \check{u})$.
  
We wish to apply the argument for odd dimensions 
also to even dimensions as much as possible and, 
we express ${Z}_{s}(\f)$ as a superposition 
of operators which are of the same form as those studied 
in odd dimensions except scaling. We set $\n=(m-2)/2$. 
Define for $a>0$ 
\bqn 
M^a(r)=M(r/(1+2a))
\eqn 
and, for $j, k=0, \dots, \n$ and $a,b>0$, 
\bqn 
Q_{jk}^{a,b}(\rho) = \frac{(-1)^{j+1}}{2\pi(1+2a)^{j+2}} 
\int_{0}^\infty \lam^{j+k-1} e^{i\lam(1+2b)\rho} 
\Fg (r^{j+1} M^a)(\lam) F(\lam) d\lam. \lbeq(kjkab) 
\eqn 
As in \refeq(kjk-mast), we may express $Q_{jk}^{a,b}(\rho)$ 
by using $M_\ast(r)$ and increase the factor $\lam^{j+k-1}$ 
of \refeq(kjkab) to $\lam^{j+k}$: 
\bqn 
Q_{jk}^{a,b}(\rho) = \frac{(-1)^{j+1}}{2\pi(1+2a)^{j+2}} 
\int_{0}^\infty \lam^{j+k} e^{i\lam(1+2b)\rho} 
\Fg (r^{j+1} M^a_\ast)(\lam) F(\lam) d\lam. \lbeq(k0kab+) 
\eqn 
When $j=0$, we also use $\tilde{M}(r)$ of \refeq(tiM) to express 
$Q_{0k}^{a,b}(\rho)$ as follows:
\bqn  
Q_{0k}^{a,b}(\rho)= \frac{i}{2\pi (1+2a)^{2}} 
\int_{0}^\infty \lam^{k} e^{i\lam(1+2b)\rho} 
\Fg (\widetilde{M}^a)(\lam) F(\lam) d\lam.  \lbeq(k0kab) \\
\eqn  
\bglm \lblm(Wsm)
Let ${Q}_{jk}^{a,b}(\rho)$ be defined 
by \refeq(kjkab), \refeq(k0kab+) or \refeq(k0kab). Then,    
\bqn 
Z_s (\f)u(x) = \frac{2i}{\w_{m-1}}\sum_{j,k=0}^\n 
T^{(a)}_j  T^{(b)}_k \left[ \int_{\R^m} 
\frac{(V\f)(x-y) Q^{a,b}_{jk}(|y|)}{|y|^{m-2-k}}dy 
\right]. \lbeq(Wsm-rep) 
\eqn 
\edlm 
\bgpf We apply \refeq(usformula) for 
$\la V\f ,(G_0(\lam)-G_0(-\lam))u \ra$ and 
\refeq(co-kerb) for $G_0(\lam)$ in \refeq(zsf). 
We see that $Z_s (\f)u(x)$ is the integral 
with respect to $\lam\in (0,\infty)$ of 
\[
\frac{i}{\pi}\sum_{j,k=0}^\n 
T^{(a)}_j  T^{(b)}_k 
\left[
\frac{(-1)^{j+1}\lam^{j+k-1}}{(1+2a)^{j+2}\w_{m-1}}
\left(\frac{e^{i\lam(1+2b)|y|}}
{|y|^{m-2-k}}\ast V\f \right)
\Fg (r^{j+1} M^a)(\lam)
\right]
F(\lam).
\] 
Integrating with respect to $\lam$ first yields 
\refeq(Wsm-rep). 
\edpf 

We define, for $0\leq j,k \leq \nu$ and $a,b>0$, that  
\begin{gather} \lbeq(zdef)
{Z}^{jk}(\f)u(x)= \frac{2i}{\w_{m-1}} 
T^{(a)}_j  T^{(b)}_k \left[{Z}^{jk}_{a,b}(\f)u(x)\right], \\
{Z}^{jk}_{a,b}(\f)u(x)
=  \int_{\R^m} 
\frac{(V\f)(x-y) Q^{a,b}_{jk}(|y|)}{|y|^{m-2-k}}dy .
\lbeq(zjkab-def)
\end{gather} 
\reflm(Wsm) implies $Z_s(\f)u= \sum {Z}^{jk}(\f)u$. 
In what follows we often write $Z^{jk}u$ and 
$Z^{jk}_{a,b}$ respectively for $Z^{jk}(\f)u$ 
and $Z^{jk}_{a,b}(\f)$.

\subsection{Estimate of $\|{Z}^{jk}u\|_p$ for 
$(j,k)\not=(\n,\n)$.} 

We estimate ${Z}^{jk}$ for 
the case $(j,k)\not=(\nu,\nu)$ first, 
postponing the case $(j,k)=(\nu,\nu)$ to 
the next subsection. As we shall see, the argument 
used for odd dimensions applies to ${Z}^{jk}$ if  
$(j,k)\not=(\nu,\nu)$ modulo superpositions and 
scalings. 

\bglm \lblm(q-est) With suitable 
constants $C>0$, followings are majorants of 
$Q_{jk}^{a,b}(\rho)$ for $0\leq k,j\leq \nu$ 
which satisfy the attached conditions respectively:  
\begin{align} 
&  
{\rm (1)} 
\hspace{0.5cm} 
C\frac{\{\Mg \Hg(r^{j+1} M^a)\}((1+2b)\rho)}{(1+2a)^{j+2}}, 
\quad \mbox{if}\ \ j+k\geq 1.  \lbeq(njk) \\
&  {\rm (2)} 
\hspace{0.5cm} 
C\frac{\Mg \Hg (\widetilde{M}^a)((1+2b)\rho)}{(1+2a)^2}, 
\quad \mbox{if}\ \ j=0.  \lbeq(n0) \\
& {\rm (3)} 
\hspace{0.5cm} 
C\sum_{l=0}^{k+1}\frac{\Mg\Hg(r^{j+l+1} M^{a})((1+2b)\rho )}
{(1+2a)^{j+2}(1+2b)^{k+1}\rho ^{k+1}}, 
 \quad \mbox{if}\ \ 2 \leq j \leq \n. \lbeq(devel-a) \\
& 
{\rm (4)} 
\hspace{0.5 cm} 
C\frac{\Mg \Hg(r^{2} M^a)((1+2b)\rho)}{(1+2a)^{j+2}} 
\{(1+2b)^{j-1}\rho^{j-1} + 1\}, 
\quad \mbox{if}\ \ 1 \leq j. 
\lbeq(kjkab-sub)  \\
& 
{\rm (5)} 
\hspace{0.5cm}
C \frac{\Mg \Hg(r M^a)((1+2b)\rho)}
{(1+2a)^{j+2}}\{(2b+1)^j \rho^{j}+1\}, 
\quad \mbox{for all }\  j,k . 
\quad \lbeq(kjkab-sub-mre) 
\end{align}
\edlm 
\bgpf 
Define $\Phi_{jk}(\lam)=\lam^{j+k-1} F(\lam)$. 
If $j+k\geq 1$, $\Phi_{jk}\in C_0^\infty(\R)$ and 
\reflm(add-cut) implies 
$Q_{jk}^{a,b}(\rho)=(-1)^{j+1}(1+2a)^{-(j+2)} 
\{(\Fg \Phi_{jk}) \ast \Hg(r^{j+1} M^a)\}((1+2b)\rho)$.
Then, \refeq(njk) follows by applying \refeq(max). 
Likewise  we have \refeq(n0) from \refeq(k0kab) 
If $j\geq 2$, we apply integration by parts $k+1$ 
times  to \refeq(kjkab) using that 
$e^{i\lam(1+2b)\rho }= (i(1+2b)\rho )^{-(k+1)}
\pa_{\lam}^{k+1} e^{i\lam(1+2b)\rho }$ then, without boundary 
terms, 
\begin{align}\lbeq(devel)
& Q^{a,b}_{jk}(\rho )= \sum_{l=0}^{k+1} 
\frac{(-1)^{j+1}}{2\pi(1+2a)^{j+2}} 
\left(\frac{1}{-i(1+2b)\rho }\right)^{k+1} 
\begin{pmatrix} k+1 \\ l \end{pmatrix} \notag \\
& \times \int_{0}^\infty 
e^{i\lam(1+2b)\rho } \Phi_{jk}(\lam)^{(k+1-l)} 
\Fg ((-i)^l r^{j+l+1} M^{a})(\lam) d\lam 
\end{align}
and \refeq(devel-a) follows as previously.
If $j\geq 1$, we may apply integration 
by parts to \refeq(kjkab) by using that 
$\Fg (r^{j+1} M^a)(\lam) = i^{j-1}
\{\Fg (r^{2} M^a)(\lam)\}^{(j-1)}$. Then  
\bqn 
Q_{jk}^{a,b}(\rho) = i^{j-1}\int_{0}^\infty 
\frac{\left(\lam^{j+k-1}F(\lam) 
e^{i\lam(1+2b)\rho}\right)^{(j-1)} 
\Fg (r^{2} M^a)(\lam)}
{2\pi(1+2a)^{j+2}} 
d\lam  \lbeq(kjkab-sub-1) 
\eqn 
and \refeq(kjkab-sub) follows. 
Apply another integration by parts in \refeq(kjkab-sub-1). 
No boundary term appears as $\Fg(rM^a)(0)=0$, and we obtain 
\refeq(kjkab-sub-mre).  
\edpf 

\subsubsection{Estimate for  $1<p<\frac{m}{m-1}$} \lbsssec(6-1-1)

Define for $0\leq  \s \leq m-1$ and 
$1<p<\frac{m}{m-1}$: 
\bqn \lbeq(n)
N_{\s}^{a,b}(u) = \left( \int^\infty_0 
|\Mg\Hg(r^{\s} M^{a})((1+2b)\rho )|^p 
\rho ^{m-1-p(m-1)} d\rho  \right)^{1/p}.
\eqn 
\bglm \lblm(njl) For any 
$\frac{m}{1+\s}\leq q\leq \infty$,  
we have 
\bqn \lbeq(es-1)
N_{\s}^{a,b}\leq C \frac{(1+2b)^{m-1-\frac{m}p}}
{(1+2a)^{m-1-\frac{m}{p}-\s}}
(\|V\f\|_1+ \|V\f\|_{q})\|u\|_p.  
\eqn 
\edlm 
\bgpf Change variable first. 
Since $\rho^{m-1-p(m-1)}$ is an $A_p$-weight,  
\begin{multline}
N_{\s}^{a,b} =(1+2b)^{m-1-\frac{m}p} 
\left( \int^\infty_0 
|\Mg\Hg(r^{\s} M^{a})(\rho )|^p 
\rho ^{m-1-p(m-1)} d\rho  \right)^{1/p}  \\
\leq  C \frac{(1+2b)^{m-1-\frac{m}p}}
{(1+2a)^{m-1-\frac{m}{p}-\s}}
\left( \int^\infty_0 
|M(r)|^p r^{m-1-p(m-1-\s)} dr \right)^{1/p}. 
\lbeq(mn-p1)
\end{multline}
Denote by $I$ the integral on \refeq(mn-p1).  
Let  $\kappa=m-1-\s$. If $\kappa=0$, then 
$I \leq C\|V\f \ast u\|_p \leq C\|V\f\|_1\|u\|_p$ 
and \refeq(es-1) follows. 
Let $0<\kappa\leq m-1$. Split $I$ into integral 
over $0<r<1$ and $r>1$ and 
use $r^{m-1-p\kappa}\leq r^{m-1}$ for $r\geq 1$. Then, 
we have 
$I \leq C(\||x|^{-\kappa} (V\f \ast u)(x)\|_{L^p(|x|<1)} 
+ \|V\f\|_1 \|u\|_p)$. Take $\kappa'$ such that  
$\kappa<\kappa'<m$ and 
apply H\"older's and Young's inequalities for the 
integral over $|x|\leq 1$. We obtain with 
$q=\frac{m}{m-\kappa'}\in 
\left[\frac{m}{1+\s}, \infty\right]$ that 
\bqn 
I \leq 
C(\||x|^{-\kappa}\|_{L^{\frac{m}{\kappa'}}(|x|\leq 1)}
\|V\f\|_{q}+\|V\f\|_1) \|u\|_p . \lbeq(es-3)
\eqn  
This completes the proof. 
\edpf 

\bglm \lblm(zsq2) Suppose $1<p<\frac{m}{m-1}$. Then, 
for $2\leq j \leq \n$ and $0\leq k\leq \n$ such that 
$(j,k)\not=(\n, \n)$,  
\bqn \lbeq(zsq2)
\|{Z}^{jk}u\|_p \leq C \|u\|_p, \quad u \in 
C_0^\infty(\R^m).
\eqn
\edlm 
\bgpf Minkowski's and Young's inequality imply   
\bqn 
\|{Z}^{jk}u\|_p \leq 
2\omega_{m-1}^{-1}\|V\f\|_1 
\cdot T^{(a)}_j  T^{(b)}_k 
\left[\left\||x|^{2+k-m}Q^{a,b}_{jk}\right\|_p  
\right]. \lbeq(k-4) 
\eqn 
We apply \refeq(devel-a) to estimate 
$Q^{a,b}_{jk}(|x|)$. Then, since 
$\s \equiv j+l+1\leq m-1$ for $(j,k)\not=(\n,\n)$, 
\reflm(njl) implies 
\bqn 
\left\||x|^{2+k-m}Q_{jk}^{a,b}\right\|_p 
\leq C (1+2a)^{\frac{m}{p}-(m-k-1)}
(1+2b)^{m-2-\frac{m}p-k}\|u\|_p.  \lbeq(d-1)
\eqn   
We plug this to \refeq(k-4) and use $m-k-1\geq j+2$. 
Then, 
\begin{align*}
& \|{Z}^{jk} u\|_p 
\leq C_{mjk} T_j^{(a)} T_k^{(b)}[
(1+2a)^{\frac{m}{p}-(j+2)}
(1+2b)^{m-2-\frac{m}p-k}]\|u\|_p \\
& \leq C \|u\|_p 
\left(\int_0^\infty 
\frac{(1+2a)^{\frac{m}{p}-(j+2)}}{(1+a)^{(2\n-j+\frac12)}}
\frac{da}{\sqrt{a}}\right) 
\left(\int_0^\infty 
\frac{(1+2b)^{m-2-\frac{m}p-k}}
{(1+b)^{(2\n-k+\frac12)}}
\frac{db}{\sqrt{b}}\right) .
\end{align*} 
Counting powers show that the integrals 
are finite and the lemma follows. 
\edpf  

As in odd dimensions we use the cancellation in 
\begin{multline} \lbeq(sum-jk)
Z^{0k}u+ Z^{1k}u \\
=\frac{2i}{\omega_{m-1}}
\int_{\R^m} \frac{(V\f)(x-y)}{|y|^{m-2-k}}
T_k^{(b)}(T_0^{(a)}Q_{0k}^{a,b}(|y|)+ 
T_1^{(a)}Q_{1k}^{a,b}(|y|))dy 
\end{multline}
and obtain the following lemma. 

\bglm \lblm(zsq3) For $1<p<\frac{m}{m-1}$, 
there exists a constant $C>0$ such that 
\bqn 
\|({Z}^{(0,k)}+ {Z}^{(1,k)})u\|_p 
\leq C \|u\|_p, \quad k=0,\dots, \n.
\eqn
\edlm 
\bgpf We apply integration by parts $k+1$ times 
to \refeq(k0kab) and \refeq(kjkab) as in the proof 
of \refeq(devel-a). This produces  
\begin{align}
Q_{0k}^{a,b}(\rho )
& = \frac{-i^{k}k!(\Fg \widetilde{M^a})(0){\w}_{m-1}}
{2\pi(1+2a)^2(1+2b)^{k+1}\rho ^{k+1}}
- \frac{i^k {\w}_{m-1}}{2\pi}
\sum_{l=0}^{k+1}C_{k+1,l} Q_{0k,l}^{a,b}(\rho ),  \lbeq(b+1) \\
Q_{1k}^{a,b}(\rho )& = 
\frac{i^{k+1}k!\Fg(r^2 M^a)(0){\w}_{m-1}}
{2\pi(1+2a)^3(1+2b)^{k+1}\rho ^{k+1}}
+\frac{i^{k+1}{\w}_{m-1}}{2\pi}\sum_{l=0}^{k+1} 
C_{k+1,l} Q_{1k,l}^{a,b}(\rho ), \lbeq(b+2) 
\end{align}
where $Q_{0k,l}^{a,b}(\rho )$ and 
$Q_{1k,l}^{a,b}(\rho )$ are given and estimated as follows:  
\begin{align}
Q_{0k,l}^{a,b}(\rho ) & = \int_0^\infty 
\frac{e^{i\lam(1+2b)\rho }(\lam^k F(\lam))^{(k+1-l)}
(\Fg ((-ir)^l \widetilde{M^a})(\lam))}
{(1+2a)^2(1+2b)^{k+1}\rho ^{k+1}}d\lam \notag \\ 
& \absleq C 
\frac{\Mg\Hg (r^l \widetilde{M^a})((1+2b)\rho))}
{(1+2a)^2(1+2b)^{k+1}\rho ^{k+1}},  \lbeq(fg-1) \\
Q_{1k,l}^{a,b}(\rho )& = (-i)^{l} 
\int_0^\infty \frac{e^{i\lam(1+2b)\rho }(\lam^k F(\lam))^{(k+1-l)}
\Fg(r^{2+l} M^a)(\lam))}
{(1+2a)^3(1+2b)^{k+1}\rho ^{k+1}}
d\lam \notag \\
& \absleq C \frac{\Mg\Hg(r^{2+l} M^a)((1+2b)\rho)}
{(1+2a)^3(1+2b)^{k+1}\rho ^{k+1}}.
\lbeq(fg-2) 
\end{align} 
Eqn.\refeq(MtildeM-1) shows 
$\Fg (\widetilde{M^a})(0)
= \Fg (r^2 M^a)(0)
= (1+2a)^3 \int_0^\infty r^2 M(r) dr$ and  
\[
T_1^{(a)} [i]= T_0^{(a)}[(1+2a)]= (m-3)^{-1} 
\]
It follows that the sum of the superposition via $T_0^{(a)}$
of the boundary term of \refeq(b+1)  
and that via $T_1^{(a)}$  of \refeq(b+2)  vanishes:  
\bqn 
\frac{i^k k!}
{(1+2b)^{k+1}\rho^{k+1}}
\left(\int_0^\infty r^2 M(r) dr \right)
(T_1^{(a)}[i] -T_0^{(a)}[(1+2a)])=0.  
\lbeq(vabo)
\eqn 
For $1<p<\frac{m}{m-1}$, $\rho^{m-1-p(m-1)}$ is an 
$A_p$ weight on $\R$ and we have the identity: 
\bqn \lbeq(Ha-def)
\widetilde{M^a}(r) = \int_r^\infty s M^a (s) ds 
= (1+2a)^2 \tilde{M}((1+2a)^{-1}r). 
\eqn 
Then, \reflm(ap), \refeq(Ha-def), 
change of variable and Hardy's inequality imply   
\begin{align}
& \left\|\frac{Q_{0k,l}^{a,b}(|x|)}{|x|^{m-k-2}}\right\|_p 
\leq 
\frac{C(1+2b)^{m-1-\frac{m}{p}}}{(1+2a)^2(1+2b)^{k+1}} 
\left( \int^\infty_0 
|r^{l} \widetilde{M^{a}}(r)|^p 
r^{m-1-p(m-1)} dr \right)^{1/p} \notag \\
& \leq \frac{C(1+2a)^{\frac{m}{p}-(m-1-l)}}
{(1+2b)^{\frac{m}{p}-(m-k-2)}}
\left( \int^\infty_0 
|M(r)|^p r^{m-1-p(m-3-l)} dr \right)^{1/p}.   
\lbeq(last-1)
\end{align} 
The integral is similar to the integral which appeared 
in \refeq(mn-p1) and we remark $m-3-l \geq 0$ for $m\geq 6$. 
Thus, applying \refeq(es-3) with $\sigma=l+2$, we obtain    
\bqn 
\refeq(last-1) 
\leq \frac{C(1+2a)^{\frac{m}{p}-(m-k-2)}}
{(1+2b)^{\frac{m}{p}-(m-k-2)}}(\|V\f\|_1+ \|V\f\|_{\frac{m}{3}})
\|u\|_p, \ 0\leq l\leq k+1. \lbeq(d6-0)
\eqn  
Counting the powers of $a$ and $b$, we 
thus have from \refeq(d6-0) that  
\bqn 
T_0^{(a)}T_k^{(b)} \left[ 
\left\|
|x|^{2+k-m}Q_{0k,l}^{a,b}\right\|_p 
\right] \leq C \|u\|_p.  \quad \ 0\leq l\leq k+1 \lbeq(g-1).
\eqn 
Entirely similarly, starting from \refeq(fg-2), we obtain  
\begin{align}
& \left\|
\frac{Q_{1k,l}^{a,b}(|x|)}
{|x|^{m-k-2}}
\right\|_p 
\leq \frac{
C(1+2b)^{m-1-\frac{m}{p}}
}
{(1+2a)^3(1+2b)^{k+1}} 
\left( 
\int^\infty_0 
|r^{2+l}M^a(r)|^p 
r^{m-1-p(m-1)} dr 
\right)^{1/p}  \notag \\
& \leq 
\frac{
C(1+2a)^{\frac{m}{p}-(m-k-1)}
}
{
(1+2b)^{\frac{m}p-(m-k-2)}
} (\|V\f\|_1+ \|V\f\|_{\frac{m}{3}})
\|u\|_p, \ 0\leq l\leq k+1. \lbeq(d7-m)
\end{align}
The extra decaying factor $(1+2a)^{-1}$ of \refeq(d7-m) 
compared to \refeq(d6-0) cancels the extra increasing 
factor $(1+a)$ of $T_1^{(a)}$ compared to $T_0^{(a)}$ and 
we have    
\bqn 
T_1^{(a)}T_k^{(b)} 
\left[ 
\left\|{|x|^{k+2-m}}
{Q_{0k,l}^{a,b}(|x|)}
\right\|_p 
\right] \leq C \|u\|_p, \quad 0\leq l \leq k+1.
\lbeq(g-2)
\eqn 
In view of \refeq(sum-jk), \refeq(b+1), \refeq(b+2) and 
\refeq(vabo), \refeq(g-1) and \refeq(g-2) with the 
help of Young's and Minkowski's inequalities imply the lemma.  
\edpf 

\subsubsection{Estimate for $\frac{m}3<p<\frac{m}2$} 

The following lemma together with \reflm(zsq2) and 
\reflm(zsq3) will prove that 
$\sum_{(j,k)\not=(\n,\n)}Z^{jk}$ 
is bounded in $L^p(\R^m)$ for $1<p<\frac{m}2$. 

\bglm \lblm(e-m3) 
Let $\frac{m}3<p<\frac{m}2$. Then, 
for $(j,k)\not=(\n,\n)$,   
\bqn \lbeq(e-m3)
\|Z^{jk} u\|_p\leq C_p \|u\|_p 
\eqn 
for a constant $C_p>0$ independent of 
$u \in C_0^\infty(\R^m)$.
\edlm 
\bgpf Except the superposition the proof is virtually the 
repetition of that of statement (2) of \reflm(lemma). 

\noindent 
(1) Let $j \geq 1$ first. 
Since $\rho^{m-1-2p}$ is $A_p$ weight for 
$\frac{m}3<p<\frac{m}2$, we have 
\bqn 
\left(\int_0^\infty 
|\{\Mg \Hg(r^{2} M^a)\}(\rho)|^p 
\rho^{m-1-2p}d\rho \right)^{\frac1{p}} 
\leq C(1+2a)^{\frac{m}p}\|V\f\|_1 \|u\|_p. \lbeq(8-2)
\eqn
Splitting the integral of \refeq(zjkab-def) we define  
\bqn 
Z_{jk}^{a,b}u(x) = 
\left(
\int_{|y|<\frac1{1+2b}}+ \int_{|y|>\frac1{1+2b}}\right)   
\frac{(V\f)(x-y) Q_{jk}^{a,b}(|y|)}{|y|^{m-2-k}}dy
= I_1(x)+ I_2(x) .  \lbeq(spt)
\eqn 
For $I_1(x)$, we estimate $|y|^{-(m-k-2)}\leq |y|^{-(m-2)}$ 
for $|y|\leq 1$ and apply H\"older's inequality.  Then 
\[
\|I_1\|_p \leq \left\|
\int_{|y|\leq \frac1{2b+1}}
\frac{|(V\f)(x-y)|^{p'} dy}{|y|^{p'(m-4)}}
\right\|_{p/p'}^{1/p'} \left(\int_{|y|\leq \frac1{2b+1}} 
\left|
\frac{Q_{jk}^{a,b}(|y|)}{|y|^{2}}
\right|^p dy 
\right)^{1/p} 
\]
Minkowski's inequality implies that the first factor 
on the right 
is bounded by $C\|V\f\|_p (1+2b)^{m-4-\frac{m}{p'}}$ 
and $\frac{m}{p'}-(m-4)>1$.  For the second factor, 
we apply \refeq(kjkab-sub) for $(1+2b)\rho<1$ and 
then \refeq(8-2). We obtain 
\bqn 
\|I_1\|_p \leq 
C(1+2a)^{\frac{m}{p}-j-2}(1+2b)^{1-\frac{m}{p}} 
\|V\f\|_1 \|V\f\|_p \|u\|_p. \lbeq(8-9)
\eqn 
By Young's inequality 
$\|I_2\|_p \leq C \|V\f\|_1 
\||x|^{2+k-m}Q_{jk}^{a,b}(|x|)\|_{L^p((1+2b)|x|>1)}$. 
For the second factor, we use \refeq(kjkab-sub) 
for $(1+2b)\rho\geq 1$ and, after changing 
the variables $\rho \to (1+2b)^{-1}\rho$, we estimate 
$\rho^{-(m-2-k-(j-1))}\leq \rho^{-2}$ for $\rho\geq 1$ 
(here we used $(j,k)\not=(\nu,\nu)$) and apply \refeq(8-2) 
once more. Then,   
\begin{align} 
& \|I_2\|_p \leq C \|V\f\|_1 
\frac{(1+2b)^{m-2-k-\frac{m}{p}}}{(1+2a)^{j+2}}
\left(
\int_1^\infty |\{\Mg \Hg(r^{2} M^a)\}(\rho)|^p 
\rho^{m-1-2p}d\rho 
\right)^{\frac1{p}} \notag \\
& \leq C  
(1+2a)^{\frac{m}{p}-j-2}(1+2b)^{m-2-k-\frac{m}{p}}
\|V\f\|_1^2 \|u\|_p.  \lbeq(ev-8)
\end{align}
Since 
$m-2-k\geq 1$ and  
$(1+2a)^{\frac{m}p-j-2}(1+2b)^{m-2-k-\frac{m}{p}}$ 
is summable by $T^{(a)}_j T^{(b)}_k$, 
\refeq(8-9) and \refeq(ev-8) imply   
\bqn 
\|Z^{jk} u\|_p 
\leq C \|V\f\|_1 (\|V\f\|_1+ \|V\f\|_{p}) \|u\|_p. 
\lbeq(8-10) 
\eqn 
(2) When $j=0$, using \refeq(n0) in stead of 
\refeq(kjkab-sub) for estimating $Q_{0k}^{a,b}(\rho)$, 
we apply the argument in the proof in (1). 
Then, the help of 
\refeq(Ha-def) and Hardy's inequality it leads to the 
same estimates \refeq(8-9) and \refeq(ev-8) and, hence, 
to the desired \refeq(e-m3) for $j=0$. 
This completes the proof of the lemma.
\edpf 

\subsubsection{Estimate for $m/2<p<m$ and for $p>m$. } \lbsssec(6-1-3)
We now estimate  $Z^{jk}$, $(j,k)\not=(\nu,\nu)$, 
in $L^p(\R^m)$ for $\frac{m}2<p<m$ and for $p>m$. 
As in odd dimensions, 
$Z^{00}$ will not in general be bounded in $L^p(\R^m)$ 
when $\frac{m}2<p<m$ and likewise for all $Z^{0k}$, 
$k=0, \dots, \frac{m-2}{2}$ when $p>m$.  
Elementary computations using 
\[
z^{-k}=\frac1{\Ga(k)}
\int_{0}^\infty e^{-zt}t^{k-1}dt, \quad \Re z>0, \  \ 
k>0
\]
and the formula \refeq(const-cm0) for $C_{m,j}\w_{m-1}$ 
we obtain the following lemma.

\bglm \lblm(comput)
%\ben
\item[{\rm (1)}] We have $T_j^{(a)}[1]=(m-3-j)!/(m-2)!$. 
\item[{\rm (2)}] For $k\geq 1$ and $j=0,\cdots, \n$,
$T_j^{(a)}[(1+2a)^{-k}]$ is given by 
\bqn \lbeq(tja1)
\frac{(-i)^j2^{m-1}\Ga(2\n-j+k)}{(m-2)!\Ga(k)}
\begin{pmatrix} \n \\ j \end{pmatrix}.
\int_1^\infty \frac{(x^2-1)^{k-1}}{(x^2+1)^{2\n-j+k}}dx 
\eqn 
%\een
\edlm 

\bglm \lblm(even-m/2m) Let $\frac{m}2<p<m$ and 
$\f \in \Eg$. Then:
\ben 
\item[{\rm (1)}] If $(j,k) \not=(0,0)$ or 
$(j,k)\not=(\n,\n)$, $Z^{jk}$ is bounded in $L^p(\R^m)$:  
\bqn \lbeq(m/2m-a)
\|Z^{jk}u \|_p \leq C \|u\|_p, \quad 
u \in C_0^\infty(\R^m) 
\eqn 
\item[{\rm (2)}] There exists a constant $C>0$ such that  
$u\in C_0^\infty(\R^m)$, we have 
\begin{gather}\lbeq(meven-m)
\left\|Z^{00}u + D_m |\f\ra \la \f, u \ra 
\right\|_p \leq C \|u\|_p, \\  
D_m= \frac{2^m \Ga\left(\frac{m}{2}\right)}
{\sqrt{\pi}\Ga\left(\frac{m-1}2\right)}\, 
\int_1^\infty (1+x^2)^{m-1}dx. \lbeq(const-dm)
\end{gather}
If $Z^{00}(\f)$ is bounded in $L^p(\R^m)$ 
for some $\frac{m}2<p<m$ then $\f\in \Eg_0$. 
In this case $Z^{00}(\f)$ is bounded in $L^p(\R^m)$ 
for all $\frac{m}2<p<m$. 
\een
\edlm  
\bgpf (1) Split $Z_{a,b}^{jk}u(x)$ as in \refeq(spt) 
and apply the argument thereafter to $I_1(x)$ 
and $I_2(x)$ by using the estimate \refeq(kjkab-sub-mre).
Since $m-2-(k+j)\geq 1$ and $\rho^{m-1-p}$ is 
an $A_p$ weight for $\frac{m}2<p<m$, we have, as in \refeq(ev-8),  
\bqn \lbeq(8-11)
\|I_2\|_p \leq C  
\frac{(1+2b)^{m-2-k-\frac{m}{p}}}
{(1+2a)^{j+2-\frac{m}{p}}}
\|V\f\|_1^2 \|u\|_p.
\eqn 
For dealing with $I_1(x)$, we estimate 
$|y|^{-(m-2-k)}\leq |y|^{m-2}$ for $|y|\leq 1$ 
as previously but now decompose 
$|y|^{-(m-2)}=|y|^{-(m-3)} \cdot |y|^{-1}$,  
remarking that $(m-3)p'<m$ and $p/p'>1$. Then,  
we obtain as in \refeq(8-9) that 
\bqn \lbeq(8-12)
\|I_1 \|_p 
\leq \frac{(1+2a)^{\frac{m}{p}-j-2}}
{(1+2b)^{\frac{m}{p}}} \|V\f\|_1\|V\f\|_p \|u\|_p. 
\eqn 
Summing up \refeq(8-11) and \refeq(8-12)  
by $T^{(a)}_j T^{(b)}_k $, we obtain \refeq(m/2m-a). 

\noindent 
(2) Let $j=k=0$. 
We apply integration by parts to \refeq(k0kab+). 
\begin{multline}
Q_{00}^{a,b}(\rho) = \frac{-i}{2\pi(1+2a)^{2}} 
\int_{0}^\infty e^{i\lam(1+2b)\rho} 
\Fg (M^a_\ast)'(\lam) F(\lam) d\lam  \\
= \frac{i}{2\pi} \int_{\R} 
\frac{M_\ast^a(r)}{(1+2a)^2} dr + 
\frac{i}{(1+2a)^{2}} \int_{0}^\infty 
(F(\lam) e^{i\lam(1+2b)\rho})' \Fg (M_\ast^a)(\lam) d\lam .
\lbeq(q00-2) 
\end{multline}
Denote the second term on \refeq(q00-2) by $\tilde{Q}_{00}^{ab}(\rho)$ 
and by $\tilde{Z}^{00}$ the operator produced by inserting 
$\tilde{Q}_{00}^{ab}(\rho)$ for ${Q}_{00}^{ab}(\rho)$ 
in \refeq(zdef). We have  
\bqn \lbeq(ex-ra)
\tilde{Q}_{00}^{a,b}(\rho) 
\absleq C \frac{\Mg \Hg(M_\ast^a)((1+2b)\rho)}{(1+2a)^{2}} 
(1+ (1+2b)\rho). 
\eqn 
Let $\frac{m}2<p<m$. We split as in \refeq(spt) and estimate 
$I_2$ first: 
\[
\tilde{Z}^{00}u(x)= \left(
\int_{|y|< \frac{1}{1+2b}}+ 
\int_{|y|\geq \frac{1}{1+2b}}\right) 
\frac{(V\f)(x-y)\tilde{Q}_{00}^{a,b}(|y|)}{|y|^{m-2}}dy
= I_1(x) + I_2(x). 
\]
We obtain  
\begin{align} 
\|I_2\|_p & \leq 
C \|V\f\|_1\frac{(1+2b)^{m-2-\frac{m}{p}}}{(1+2a)^2} 
\left\|
\frac{\Mg\Hg(M_\ast^a)(|y|)}{|y|^{m-3}}
\right\|_{L^p(|y|>1)} \notag \\
& \leq C 
\|V\f\|_1\frac{(1+2b)^{m-2-\frac{m}{p}}}{(1+2a)^2} 
\left\|\frac{1}{|y|^{m-3}}
\right\|_{L^m(|y|>1)} 
\left(\int_0^\infty |M_\ast^a(r)|^q 
r^{m-1}dr \right)^{\frac1{q}} \notag \\
& \leq C \frac{\|V\f\|_1(1+2b)^{m-2-\frac{m}{p}}}
{(1+2a)^{2-\frac{m}{q}}}
\||D|^{-1}(V\f)\|_{\frac{m}{m-1},\infty}\|u\|_p,   
\lbeq(mane-11)
\end{align} 
where we used Young's inequality, \refeq(ex-ra) for 
$(1+2b)\rho\geq 1$ and 
change of variable $(1+2b)\rho$ to $\rho$ in the first stage, 
H\"older's inequality considering $p^{-1}= m^{-1}+ q^{-1}$ 
and that $1$ is an $A_q$ weight $q=mp/(m-p)>m$ in the second 
and finally weak-Young's inequality. 
For $I_1$, we apply H\"older's and Minkowski's 
inequalities and \refeq(mane-11) and obtain     
\begin{align}
& \|I_1\|_p 
\leq C \left\|\left(\int_{|y|\leq \frac1{1+2b}}
\left|\frac{(V\f)(x-y)}{|y|^{m-2}}\right|^{q'}dy
\right)^{\frac1{q'}}\right\|_p  
\notag \\
& \qquad \times (1+2b)^{-\frac{m}{p}} (1+2a)^{-2} 
\left(\int_{|y|\leq 1}
|\Mg\Hg(M_\ast^a)(|y|)|^{q}dy\right)^{1/q} \notag \\
&\qquad  \leq C (1+2b)^{-\frac{m}{p}} (1+2a)^{\frac{m}{p}-2} 
\|V\f\|_p \||D|^{-1}(V\f)\|_{\frac{m}{m-1},\infty}\|u\|_p. 
\lbeq(mane-12)
\end{align}
Summing \refeq(mane-11) and \refeq(mane-12) 
by $T^{(a)}_0 T^{(b)}_0$, we obtain 
$\|\tilde{Z}^{(0,0)}u\|_p \leq C \|u\|_p$. 

\noindent 
By virtue of \refeq(redfe) and \refeq(bd-m), 
the contribution to ${Z}^{00}u$ of the boundary term of \refeq(q00-2) is 
given by 
\begin{align}
& \frac{2i}{\w_{m-1}}
T_0^{(a)}T_0^{(b)}\left[\int_{\R^m}\frac{(V\f)(y)dy}{|x-y|^{m-2}} \cdot
\frac{i}{2\pi} \int_{\R} 
\frac{M_\ast(r)}{(1+2a)} dr \right] \notag 
\\
& \hspace{1cm} = 
-\frac{2}{C_0\w_{m-1}}
T_0^{(a)}[(1+2a)^{-1}]T_0^{(b)}[1]
\frac{\Ga\left(\frac{m}{2}\right)}
{\sqrt{\pi}\Ga\left(\frac{m-1}2\right)}\la \f, u \ra \f. \lbeq(rimy)
\end{align}
By using \reflm(comput) and $C_0\w_{m-1}=(m-2)^{-1}$. 
we can simplify  
\refeq(rimy) to $-D_m \la \f, u \ra \f$ with $D_m$ given by 
\refeq(const-dm) and \refeq(meven-m) follows. 
The last statement follows as in the odd dimensional 
case, see the remark after \reflm(m/2m). \edpf

Finally in this section we study $Z^{jk}(\f)u$ for 
$(j,k)\not=(\nu,\nu)$ in $L^p(\R^m)$ when $p>m$, 
assuming $\f \in \Eg_0$ by the same 
reason as in odd dimensions. We define  
\bqn \lbeq(djm)
D_{m,j}= 2^{m} \begin{pmatrix} \n \\ j \end{pmatrix}
\frac{\Ga\left(\frac{m}2\right)}{\sqrt{\pi}\Ga\left(\frac{m-1}2\right)}
\int_1^\infty \frac{(x^2-1)^{j}}{(x^2+1)^{m-1}}dx, \quad 
j=0,\dots, \n.  
\eqn 

\bglm \lblm(m<p-even)
Let $m \geq 6$ be even and $p>m$. Suppose that 
$\f \in \Eg_0$. Then: 
\ben
\item[{\rm (1)}] 
For $(j,k)$ such that $k\not=0$ and $(j,k)\not=(\nu,\nu)$, $Z^{jk}$ 
is bounded in $L^p(\R^m)$. 
\item[{\rm (2)}] There exists a constant $C>0$ such that 
\bqn 
\|Z^{j0}u + D_{j,m} \la \f, u\ra \f \|_p \leq C \|u\|_p , \quad 
j=0,\dots, \n.
\eqn 
\item[{\rm (3)}] 
If $Z^{j0}(\f)$ is 
bounded in $L^p(\R^m)$ for some $0\leq j \leq \n$ and some $m<p<\infty$, 
then $\f \in \Eg_1$. In this case, $Z^{j0}(\f)$ is bounded in $L^p(\R^m)$ 
for all $1<p<\infty$ and $0\leq j \leq \n$. 
\een
\edlm 
\bgpf We apply integration by parts $j+1$ times to \refeq(k0kab+):  
\bqn \lbeq(esub)
Q_{jk}^{a,b}(\rho) = \int_{0}^\infty 
\frac{(-i)^{j+1}\lam^{j+k}F(\lam) e^{i\lam(1+2b)\rho} 
\pa_\lam^{j+1} \{\Fg (M^a_\ast)(\lam)\}}
{2\pi(1+2a)^{j+2}} 
d\lam  .
\eqn 
(1) If $k\geq 1$, then no boundary terms appear and we have 
\bqn \lbeq(esub-1)
Q_{jk}^{a,b}(\rho)
 \absleq 
\frac{C\Mg \Hg(M_\ast^a)((1+2b)\rho)}{(1+2a)^{j+2}} 
\{(1+2b)^{j+1}\rho^{j+1} + 1\}.
\eqn 
Observing that $m-2-(k+j+1)\geq 0$ for $(j,k)\not=(\nu,\nu)$, 
that $r^{m-1}$ is $A_p$ weight on $\R$ for $p>m$ and that 
$(m-2-k)p'<m$,  we apply the argument used for proving 
\refeq(mane-11) and \refeq(mane-12) in 
the proof of the previous lemma and obtain   
\begin{align}
&\|Z^{jk}_{a,b}u\|_p  \leq 
\frac{
C\|V\f\|_1(1+2b)^{m-2-k-\frac{m}{p}}
}
{(1+2a)^{j+2-\frac{m}{p}}} \||D|^{-1}(V\f)\ast u\|_p  
 \notag  \\
& + 
\frac{C(1+2b)^{-\frac{m}{p}}\|V\f\|_p}{(1+2a)^{j+2-\frac{m}{p}}}
\left( 
\int_{|y|<\frac1{1+2b}}\frac{dy}{|y|^{(m-2-k)p'}}
\right)^{\frac1{p'}}
\||D|^{-1}(V\f)\ast u\|_p
\lbeq(103)
\end{align}
Since $\int (V\f)(x)dx=0$, $\||D|^{-1}(V\f)\ast u\|_p 
\leq C \|u\|_p$ for any $1<p<\infty$ by virtue of 
\refeq(asymp-d1) and the Calder\'on-Zygmund theory. It follows 
that 
\[
\|Z^{jk}u\|_p \leq T_j^{(a)} T_k^{(b)}\|Z^{jk}_{a,b}u\|_p \leq C \|u\|_p
\]
for $k\geq 1$ and $(j,k)\not=(\n,\n)$ and statement (1) is proved.  

\noindent 
(2) If $k=0$, then, $j+1$ times integration by parts in \refeq(esub) 
produces, in addition to the integral which is 
bounded by \refeq(esub-1) and whose contribution to 
$Z^{j0}$ produces a bounded operator in $L^p(\R^m)$ for 
$p>m$, which may be proved by repeating  the argument of step (1), 
we have the boundary term which may be expressed as follows by using 
\refeq(bd-m) once more: 
\bqn \lbeq(b-tem)
\frac{i^{j+1}j!}{2\pi(1+2a)^{j+1}}\int_{\R}M_\ast(r) dr 
= \frac{-i^{j+1}j!}{(1+2a)^{j+1}}
\frac{\Ga\left(\frac{m}2\right)}
{\Ga\left(\frac{m-1}2\right)\sqrt{\pi}}\la \f,u\ra.
\eqn 
Therefore, the contribution of the boundary term to $Z^{j0}u$ may be 
computed as follows using 
$C_0\w_{m-1} = T^{(b)}_0[1]=(m-2)^{-1}$ by 
substituting \refeq(tja1) with $k=j+1$ for 
$T^{(a)}_j[(1+2a)^{-(j+1)}]$:
\begin{align*} 
& \frac{2 i}{\w_{m-1}}
T^{(a)}_j T^{(b)}_0\left[
\int_{\R^m}\frac{(V\f)(y)dy}{|x-y|^{m-2}}
\left(-\frac{i^{j+1}j!}{(1+2a)^{j+1}} \right)
\right] 
\frac{\Ga\left(\frac{m}2\right)}
{
\Ga\left(\frac{m-1}2\right)\sqrt{\pi}
}
\la \f,u\ra \notag \\
& \hspace{1cm} = 2 i^{j+2}j! 
T^{(a)}_j[(1+2a)^{-(j+1)}] 
\frac{\Ga\left(\frac{m}2\right)}
{\Ga\left(\frac{m-1}2\right)\sqrt{\pi}}
\la \f,u\ra \f  
= -D_{m,j}\la \f,u\ra \f 
\lbeq(la) 
\end{align*}
This proves statement (2). We omit the proof of statement (3) 
which is similar to the corresponding part of the 
previous lemma. 
\edpf 

\bglm Define $\tilde D_m= \sum_{j=0}^\n {D}_{m,j}$. 
Then, $\tilde{D}_m =1$. 
\edlm 
\bgpf Use binomial formula for \refeq(djm). We have  
\[
\tilde{D}_m= 
2^{m}\frac{\Ga\left(\frac{m}2\right)}
{\Ga\left(\frac{m-1}2\right)\sqrt{\pi}}
\int_1^\infty \frac{x^{m-2}}{(x^2+1)^{m-1}}dx 
\]
Change of variable $x\to x^{-1}$ shows that the integral 
is equal to the same integral over the interval 
$0<x<1$. It follows after making the change of variable 
$x^2=t$ that the integral is equal to  
\[
\frac14 
\int_0^\infty \frac{t^{\n-\frac12}}{(t+1)^{m-1}}dt 
= \frac{\Ga\left(\frac{m-1}2\right)^2}{2^2 \Ga(m-1)}. 
\]
Thus, 
$\tilde{D}_m= 
2^{m-2}\Ga\left(\frac{m}2\right)\Ga\left(\frac{m-1}2\right)
\Ga(m-1)^{-1}\pi^{-\frac12}=1.$
\edpf 

In the next two sections we prove that  $Z^{\n\n}$ 
and $Z_{\log}$ are  bounded 
in $L^p(\R^m)$ for all $1<p<\infty$. These will complete 
the proof of \refth(theo6). 

\subsection{Estimate of $\|Z^{\n\n}u\|_p$ for 
$1<p<\infty$} \lbssec(5-2)

In this section we prove 
\bqn \lbeq(aim)
\|Z^{\n\n}u\|_p\leq C \|u\|_p, \quad 1<p<\infty.
\eqn 
The method of previous subsection does not apply for 
proving this and we exploit more direct method.  
By virtue of interpolation, it suffices to prove \refeq(aim) 
for arbitrarily small $p>1$ and large $p>m$.  

\subsubsection{The case for $1<p<\frac{2(m-1)}{m+1}$} 
\lbsssec(5-2-1) 

We first show \refeq(aim) for $1<p<\frac{2(m-1)}{m+1}$. 
After changing the variable $r$ to $(1+2a)r$ in \refeq(kjkab), 
we write $Q_{\n\n}^{a,b}(\rho)/\rho^\n$ in the form 
\bqn 
\frac{(-1)^{\n+1}}
{2\pi\rho^{\n}} \int_0^\infty 
e^{i(1+2b)\rho\lam} \lam^{m-3} F(\lam)
\left(\int_{\R} e^{-i(1+2a)r\lam} r^{\n+1}M(r)dr \right) 
d\lam. \lbeq(moto) 
\eqn 
Integration by parts implies that \refeq(moto) is equal to   
\begin{align*}
& \frac{i(-1)^{\n+1}}
{2\pi(1+2b)\rho^{\n+1}} \int_0^\infty 
e^{i(1+2b)\rho\lam}
(\lam^{m-3} F(\lam))' 
\left(
\int_{\R} e^{-i(1+2a)r\lam} r^{\n+1}M(r)dr \right) 
d\lam \\
& + \frac{(-1)^{\n+1}(1+2a)}
{2\pi(1+2b)\rho^{\n+1}} \int_0^\infty 
e^{i(1+2b)\rho\lam}
\lam^{m-3} F(\lam)
\left(
\int_{\R} e^{-i(1+2a)r\lam} r^{\n+2}Mdr \right) 
d\lam.
\end{align*}
The  first line becomes 
$i(1+2b)^{-1}Q_{\n(\n-1)}^{a,b}(\rho)/\rho^{m-2-(\n-1)}$ 
if we replace $(m-3)F(\lam)+\lam F'(\lam)$ by $F(\lam)$  
and the former function can play the same role as the 
latter does in the argument 
of previous sections and $\n-1\geq 1$ if $m\geq 6$. 
Thus, if we substutute it for $Q_{\n\n}^{a,b}(\rho)/\rho^\n$ 
in \refeq(zjkab-def) for $(j,k)=(\nu,\nu)$ and then the 
resulting function into \refeq(zdef) for 
$Z_{a,b}^{\n\n}(\f)u(x)$, it produces the operator 
which has the same $L^p$ property as $Z^{\n(\n-1)}$ 
which is bounded in $L^p(\R^m)$ for $1<p<\infty$. 
Hence, we need study only the operator produced by the 
second line. Once again we substitute it for 
$Q_{\n\n}^{a,b}(\rho)/\rho^\n$ in \refeq(zjkab-def) 
and the result into \refeq(zdef) for 
$Z_{a,b}^{\n\n}(\f)u(x)$. We denote the functiotn 
thus obtain by $Z^{\n\n}u(x)$, abusing notation. 
We want to show that this $Z^{\n\n}u(x)$ satisfies 
\refeq(aim) for $1<p<\frac{m}{m-1}$. 
Integrating with respect to $a, b$ first 
via Fubini's theorem shows    
\begin{align} \lbeq(ann)
& Z^{\n\n}u(x)= \frac{2i}{\w_{m-1}}
\int_{\R^m}(V\f)(x-y) X_{\n}(|y|) dy, \\
& X_\n(\rho)= \frac{2iC_{m,\nu}^2\w_{m-1}}{\rho^{\nu+1}} 
\int_0^\infty 
\left\{
e^{i\lam\rho} \lam^{m-3}
\left(
\int_0^\infty 
\frac{(1+2b)^{-1}e^{2i\lam\rho {b}}}
{(1+b)^{\nu+\frac12}}
\frac{db}{\sqrt{b}}
\right) 
\right.  \notag \\
& \times \left.
\int_{\R}e^{-i\lam{r}}
\left(
\int_0^\infty 
\frac{(1+2a)e^{-2ia\lam{r}}}{(1+a)^{\nu+\frac12}}
\frac{da}{\sqrt{a}}
\right) \lbeq(m-1)
r^{\nu+2}M(r)dr 
\right\}
F(\lam)d\lam.  
\end{align}
Let $\chi_\pm(r)= 1$ for $\pm r>0$ and 
$\chi_\pm(r)=0$ for $\pm r \leq 0$. Define, for $t>0$,  
\begin{gather}
g_\pm (t) = \int_0^\infty 
\left(1+\frac{a}{t}\right)
\left(1+ \frac{a}{2t}\right)^{-\n-\frac{1}2} 
e^{\pm ia}\frac{da}{\sqrt{a}},    \lbeq(f-2a) \\
h_\pm (t) = \int_0^\infty 
\left(1+\frac{b}{t}\right)^{-1}
\left(1+ \frac{b}{2t}\right)^{-\n-\frac{1}2} 
e^{\pm ib}\frac{db}{\sqrt{b}}    \lbeq(f-2b)
\end{gather}
and, with $C=iC_{m,\nu}^2\w_{m-1}$, 
write $X_\n(\rho)$ as follows: 
\begin{gather} \lbeq(f-3)
X_\n(\rho)= \frac{C}{\rho^{\n+\frac32}}
\int_{\R} (L_{+}(\rho,r)+ L_{-}(\rho,r))
r^{\n+2}|r|^{-\frac12}M(r)dr, \\ 
L_{\pm}(\rho,r)= \chi_\pm(r)
\int_0^\infty e^{i\lam(\rho-r)}
\lam^{m-4}h_{+}(\lam\rho)g_{\mp}(\pm r \lam)F(\lam) d\lam.  
\lbeq(f-1)
\end{gather} 

\bglm \lblm(gpm) Suppose that $f$ is of $C^\infty$ on 
$[0,\infty)$ and satisfies 
$|f^{(j)}(c)| \leq C_jc^{-(j+1)}$ for $c\geq 1$, 
$j=0,1, \dots$. Define 
\[
\ell_\pm(t) = \int_0^\infty e^{\pm ic}f(c/t)\frac{dc}{\sqrt{c}}.
\]
Then, $\ell_{\pm}(t)$ is $C^\infty$ for $t>0$ and 
satisfies the following properties. 
\ben 
\item[{\rm (1)}] 
$\ell_{\pm}(1/t)$ can be exteded to a $C^\infty$ function 
on $[0,1]$, hence, 
$\lim_{t\to \infty} \ell_{\pm}(t)= \alpha_\pm $ exists 
and for $t\geq 1$, 
$|\ell_\pm ^{(j)}(t)|\leq C_j t^{-j-1}$, $j=1,2,\dots$. 
\item[{\rm (2)}] For $0<t<1$, 
$|t^j \ell_{\pm}^{(j)} (t)|\leq C_j \sqrt{t} \leq C_j$, 
$j=0,1, \dots$.
\een  
\edlm 
\bgpf We prove the lemma for $\ell_{+}(t)$ only 
and omit the $+$-sign. 
It is evident that $\ell(t)$ is $C^\infty$ for $t>0$. 
Splitting the interval, we define   
\[
\ell(t)=\left(\int_0^1 + \int_1^\infty\right) 
f\left(\frac{c}{t}\right)e^{ic}\frac{dc}{\sqrt{c}}
\equiv {\ell}_{1}(t)+ {\ell}_{2}(t). 
\]
It is obvious that ${\ell}_{1}(1/t)$ is of 
$C^\infty[0,1]$. 
To see the same for ${\ell}_{2}(1/t)$,  
we perform integration by parts $n$ times for $t>0$:
\bqn 
i^n {\ell}_{2}(1/t)= B_n(t)+
(-1)^n \int_1^\infty \pa_c^n \left(\frac{f(ct)}{\sqrt{c}}
\right) e^{ic}dc. \lbeq(o-1)
\eqn
The boundary term $B_n(t)$ is a polynomial of order 
$n$ and Leibniz's formula implies  
$\pa_c^n \left(\frac{f(ct)}{\sqrt{c}}\right) 
=\sum_{j=0}^n C_{nj}
f^{(j)}(ct)(ct)^j c^{-\frac12-n} $.
Since $\pa_y^{k} (f^{(j)}(y)y ^j)$ 
is bounded for any $j,k=0,1, \dots$ and 
\[
\pa_t^k \left(\sum_{j=0}^n C_{nj}
f^{(j)}(ct)(ct)^j c^{-\frac12-n} \right) 
= \sum_{j=0}^n C_{nj}
\left. \pa_y^{k} (f^{(j)}(y)y ^j)\right\vert_{y=ct} 
c^{-\frac12-n+k}, 
\]
the integral of \refeq(o-1) is a function 
of class $C^{n-1}([0,1])$. Since $n$ is arbitray, 
this proves (1). 
For proving (2), after changing the variable we decompose:
\[
\ell(t)=\sqrt{t}\left(\int_0^1 + \int_1^\infty\right) 
f(c)e^{ict}\frac{dc}{\sqrt{c}}
\equiv \sqrt{t}(\tilde{\ell}_{1}(t)+ \tilde{\ell}_{2}(t)) 
\]
We obseve that $\sqrt{t}$ satisfies the property (2) 
and that, if $\alpha(t)$ satisfies (2) and 
$|t^j \beta^{(j)}(t)|\leq C_j$, then 
so does $\gamma(t)=\alpha(t)\beta(t)$. 
Hence, $\sqrt{t}\tilde{\ell}_{1}(t)$ satisfies (2) because 
$\tilde{\ell}_{1}(t)$ is entire. To prove the same for  
$\sqrt{t}(\tilde{\ell}_{2}(t)$, it suffices to show that 
$|(t^n \tilde{\ell}_{2}(t))^{(n)}| \leq C_n$ for $0<t<1$, 
$n=0, 1,2, \dots$.
By integration by parts we have 
\begin{align*}
&(it)^n \tilde{\ell}_2 (t)= \int_1^\infty (\pa_c^n e^{itc})f(c)  
\frac{dc}{\sqrt{c}} \\
& = \sum_{j=0}^{n-1}(-1)^{j+1}
\pa_c^j \left(\frac{f(c)}{\sqrt{c}}\right)
\left. \pa_c^{n-j-1}(e^{itc})\right\vert_{c=1} + 
\int_1^\infty e^{itc}
(f(c)c^{-\frac12})^{(n)}dc.
\end{align*}
The boundary term is a polynomial of $t$ and the integral is 
$n$ times continuously differentiable and a fortiori 
$(t^n \tilde{\ell}_2(t))^{(n)}\leq C$ for $0<t<1$. 
\edpf 

Generalizing ${L}_{\pm}(\rho,r)$ of \refeq(f-1), 
we define ${L}_{\pm,\sigma}(\rho,r)$ 
for an integer $\s\geq 0$ and functions $g_\pm$ and $h$ by 
\bqn \lbeq(dlt)
{L}_{\pm,\sigma}(\rho,r)= \chi_\pm(r)
\int_0^\infty e^{i\lam(\rho-r)}
\lam^{\s}h_{+}(\lam\rho)g_{\mp}(\pm r\lam)F(\lam) d\lam  
\eqn 
so that we have $L_\pm(\rho,r)={L}_{\pm,m-4}(\rho,r)$. 

\bglm \lblm(p-1) 
Suppose that $g_\pm (t)$ and $h_{+}(t)$ are 
$C^\infty$ functions of $t>0$ and they satisfy 
following properties replacing $f$: 
\ben 
\item[{\rm (a)}] The limit $\lim_{t\to \infty} f(t)$ exists. 
\item[{\rm (b)}]  
$ |t^j f^{(j)}(t)|\leq C_j \left\{
\br{lll} t^{-1} ,  \quad &  1< t, & j=1,2,\dots, \\ 
\sqrt{t}, \quad  & 0<t<1, & j=0,1, \dots.
\er \right.$. 
\een   
Then, ${L}_{\pm,\sigma}$ is $C^\infty$ 
with respect to $\rho>0$ and $r>0$ and, 
for a constant $C>0$,  
\bqn \lbeq(dlt-con)
|{L}_{\pm,\sigma} (\rho, r)| \leq C \la \rho-r  \ra^{-(\s+1)}
\eqn 
\edlm 
\bgpf We prove the lemma for ${L}_{+,\s}$. 
The proof for ${L}_{-,\s}$ is similar. It is obvious that 
${L}_{+,\s}(\rho,r)$ is smooth and is bounded 
for $\rho, r>0$ and it suffices to prove \refeq(dlt-con) 
for $|\rho-r|\geq 1$.  
We apply integration by parts $\s+1$ times to   
\[
{L}_{+,\s}(\rho,r)=\frac{(-i)^{\s+1}}{(\rho-r)^{\s+1}} \int_0^\infty 
\left(\pa_\lam ^{\s+1}
e^{i\lam(\rho-r)}\right)
\lam^{\s}h_{+}(\lam\rho)g_{-}(r\lam)F(\lam) d\lam.
\]
By Leibniz' rule, derivatives 
$(\lam^{\s}h_{+}(\lam\rho)g_{-}(r\lam)F(\lam))^{(\kappa)}$ for 
are linear combinations of 
over $(\alpha,\beta,\gamma,\delta)$ such that 
$\alpha+\beta+\gamma+\delta=\kappa $ and $\alpha \leq \s$ 
of 
\bqn \lbeq(derva)
\lam^{\s-\kappa + \delta}
(\lam\rho)^\beta h^{(\beta)}(\lam\rho)
(r\lam)^\gamma g_{-}^{(\gamma)}(r\lam) F^{(\delta)}(\lam)
\eqn 
and they converge to $0$ as $\lam \to 0$ if $\ka\leq \s$. 
It follows that no boundary terms appear and 
$(\rho-r)^{\s+1}{L}_{+,\s}(\rho,r)$ is a linear combination 
over the same set of $(\alpha,\beta,\gamma,\delta)$ 
as above with $\kappa=\s+1$ of 
\[
I_{\alpha\beta\gamma\delta}(\rho,r)=\int_0^\infty 
e^{i(\rho-r)\lam}\lam^{\delta-1}
(\lam\rho)^\beta h^{(\beta)}(\lam\rho)
(r\lam)^\gamma g_{-}^{(\gamma)}(r\lam) 
 F^{(\delta)}(\lam)d\lam. 
\]
It suffices to show that 
$I_{\alpha\beta\gamma\delta}(\rho,r)$ is 
bounded. If $\delta\not=0$, $F^{(\delta)}(\lam)= 0$ 
outside $0<c_0<\lam<c_1<\infty$ and it is clear that  
$I_{\alpha\beta\gamma\delta}(\rho,r)\absleq C$. 
Thus, we assume $\delta=0$ in what follows. We may also 
assume $0<r<\rho<\infty$ by symmetry. We split as 
$(0,\infty)=(0,1/\rho)\cup [1/\rho,1/r]\cup (1/r,\infty)$ 
and denote integrals over these intervals by 
$I_1$, $I_2$ and $I_3$ in this order so that 
$I_{\alpha\beta\gamma\delta}(\rho,r)= I_1+ I_2+ I_3$. 

\noindent 
(1) If $0<\lam<1/\rho$ then $0<r\lam<\rho\lam <1$ and 
$(\rho\lam)^\beta h^{(\beta)}(\rho\lam) 
\absleq C \sqrt{\rho\lam}$ and 
$(r\lam)^\gamma g_{-}^{(\gamma)}(r\lam) 
\absleq C \sqrt{r\lam}$. It follows that 
\bqn \lbeq(I-1)
I_1\absleq 
C\int_0^{1/\rho} 
\sqrt{\rho{r}} d\lam 
= C \sqrt{\frac{r}{\rho}}\leq C 
\eqn 
(2) If $1/\rho\leq  \lam \leq 1/r$, we have 
$0<r\lam \leq 1\leq \rho\lam$ and we estimate as   
$(\rho\lam)^\beta h^{(\beta)}(\rho\lam) \absleq C$ 
and $(r\lam)^\gamma g_{-}^{(\gamma)}(r\lam)
\absleq C\sqrt{r\lam}$. It follows that 
\bqn \lbeq(I-2)
I_2 \absleq 
C \int_{1/\rho}^{1/r} \lam^{-\frac12}\sqrt{r}d\lam 
= 2C \sqrt{r}\left(\frac1{\sqrt{r}}-\frac1{\sqrt{\rho}}\right) 
\leq 2C. 
\eqn 
(3) Finally if $1<r\lam<\rho\lam$, then we likewise estimate 
\[
(\lam\rho)^\beta h^{(\beta)}(\lam\rho)
(r\lam)^\gamma g_{-}^{(\gamma)}(r\lam) \absleq 
C \left\{\br{ll} (r\lam)^{-1}, \ & 
\mbox{if $\beta=0, \ \gamma\not=0$} \\
(\rho\lam)^{-1}, \ &
\mbox{if $\beta\not=0, \ \gamma=0$}, \\ 
(\rho\lam)^{-1}(r\lam)^{-1}, \ & 
\mbox{if $\beta, \gamma\not=0$}. \er 
\right. 
\]
The right  hand side is  bounded by $C r^{-1}\lam^{-1}$ 
and 
\[
I_3 \absleq C \int_{1/r}^\infty \lam^{-2} r^{-1} d\lam 
= C.    
\]
This completes the proof.  
\edpf 

\begin{proposition} \lbprp(last-prop)
Let $m\geq 6$ and $\f\in \Eg$. For 
$1\leq p \leq \frac{2(m-1)}{m+1}$, we have  
\bqn \lbeq(res-znn)
\|Z^{\n\n} u\|_p \leq C_p \|u\|_p .
\eqn 
\end{proposition}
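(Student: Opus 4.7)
The proof will combine the pointwise decay estimate for $L_\pm(\rho,r)$ supplied by \reflm(p-1) with two successive applications of Young's inequality to the double convolution structure displayed in \refeq(ann). First I would verify that the amplitudes $g_\pm(t)$ in \refeq(f-2a) and $h_+(t)$ in \refeq(f-2b) meet the hypotheses of \reflm(p-1). Writing them in the form $\int_0^\infty f(a/t)e^{\pm ia}\,da/\sqrt{a}$ with $f(s)=(1+s)(1+s/2)^{-\n-\frac12}$ and $f(s)=(1+s)^{-1}(1+s/2)^{-\n-\frac12}$ respectively, one checks $|f^{(j)}(c)|\leq C_j c^{-(j+1)}$ for $c\geq 1$ whenever $\n\geq \frac32$, which holds for $m\geq 6$. \reflm(gpm) then furnishes both the existence of the limits at infinity and the bound $|t^j g_\pm^{(j)}(t)|,\,|t^j h_+^{(j)}(t)|\leq C_j\sqrt{t}$ for $0<t<1$, so that \reflm(p-1) applied with $\s=m-4$ yields
\bqn \lbeq(pointwise)
|L_\pm(\rho,r)|\leq C\la \rho-r\ra^{-(m-3)}.
\eqn

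Inserting \refeq(pointwise) into \refeq(f-3) and using $|r^{\n+2}|r|^{-1/2}|=|r|^{(m+1)/2}$, I arrive at
\bqn \lbeq(Xnu-conv)
|X_\n(\rho)|\leq \frac{C}{\rho^{(m+1)/2}}(K\ast \p)(\rho),\quad
K(s)=\la s\ra^{-(m-3)},\ \p(r)=|r|^{(m+1)/2}|M(r)|,
\eqn
and $K\in L^1(\R)$ since $m-3>1$. At the endpoint $p_0=\frac{2(m-1)}{m+1}$ the exponent $m-1-p_0(m+1)/2$ collapses to $0$, so Young's inequality together with \refeq(m-holder) gives
\[
\int_0^\infty |X_\n(\rho)|^{p_0}\rho^{m-1}d\rho \leq C\|K\|_1^{p_0}\int_{\R}|r|^{m-1}|M(r)|^{p_0}dr \leq C\|V\f\|_1^{p_0}\|u\|_{p_0}^{p_0}.
\]
A second Young inequality applied to \refeq(ann) then yields $\|Z^{\n\n}u\|_{p_0}\leq C\|u\|_{p_0}$.

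At the other endpoint $p=1$, Fubini's theorem reduces the bound on $\int_0^\infty |X_\n(\rho)|\rho^{m-1}d\rho$ to estimating $I(r)=\int_0^\infty \rho^{(m-3)/2}\la\rho-r\ra^{-(m-3)}d\rho$. Splitting the $\rho$-integration at $|r|/2$ and $2|r|$ and exploiting $(m-3)/2>1$ (valid precisely because $m\geq 6$), one finds $I(r)\leq C(1+|r|)^{(m-3)/2}$, turning the bound into $C\int_{\R}(|r|^{(m+1)/2}+|r|^{m-1})|M(r)|dr$. The $|r|\geq 1$ contribution is controlled by \refeq(m-holder) at $p=1$, while the $|r|\leq 1$ contribution is controlled via $\|M\|_\infty\leq \|V\f\|_\infty\|u\|_1$. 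A final Young inequality in \refeq(ann) completes the case $p=1$, and the intermediate range $1<p<p_0$ then follows by Riesz--Thorin interpolation between the two endpoints.

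The main obstacle is the range restriction itself. For $p>p_0$ the weight $\rho^{m-1-p(m+1)/2}$ becomes negative near the origin, and the crude majorant \refeq(pointwise)---which discards the oscillation of $e^{i\lam(\rho-r)}$ across the sign change of $r$---no longer suffices for Young's inequality to absorb the radial Jacobian. Extending the bound to arbitrarily large $p$, as must be done to cover the full range $1<p<\infty$ claimed for $Z^{\n\n}$ in \refssec(5-2), will accordingly require a genuinely different argument exploiting cancellation between $L_+$ and $L_-$ rather than the sum of their absolute values.
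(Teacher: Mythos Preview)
Your argument is correct and uses the same core ingredient as the paper---the pointwise bound $|L_\pm(\rho,r)|\leq C\la\rho-r\ra^{-(m-3)}$ from \reflm(p-1)---but the route from that bound to \refeq(res-znn) is genuinely different. The paper does not split into endpoints and interpolate; instead it works directly with an arbitrary $p$ in $[1,\tfrac{2(m-1)}{m+1}]$ by setting $\kappa=\tfrac{m-1}{p}-\tfrac{m+1}{2}\geq 0$ and observing that $m-3-\kappa\geq\tfrac32$ for $m\geq 6$, which yields the uniform pointwise majorant
\[
\rho^{\kappa}\la\rho-r\ra^{-(m-3)}\leq C\la\rho-r\ra^{-3/2}\max(1,|r|^{\kappa}).
\]
A single Young inequality in $L^p(\R)$ against the kernel $\la\cdot\ra^{-3/2}$ then gives the estimate for every $p$ in the range at once, with the $|r|\leq 1$ and $|r|\geq 1$ contributions handled exactly as you do at your $p=1$ endpoint. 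Your two-endpoint argument is equally valid and perhaps conceptually cleaner, while the paper's approach is more self-contained (no appeal to Riesz--Thorin) and makes the $p$-dependence of the constants explicit.

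One small remark on your closing paragraph: the extension to large $p$ in \refsssec(5-2-2) is not obtained by exploiting cancellation between $L_+$ and $L_-$. Rather, the paper performs the integration by parts in \refeq(moto) in the opposite direction (using $r^{\n+1}M=i(1+2a)^{-1}\pa_\lam$ acting on $r^{\n}M$), which interchanges the roles of $g$ and $h$ and shifts two powers from $r$ to $\rho$ in the analogue of \refeq(f-3). This does not affect the validity of your proof of the present proposition.
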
 
\bgpf We recall \refeq(ann). 
\reflm(p-1) implies 
$L_{\pm}(\rho,r)\absleq C \la \rho-r \ra^{-(m-3)}$. 
It follows by Young's inequality and \refeq(f-3) that  
\bqn \lbeq(g-13) 
\|Z^{\n\n} u\|_p \leq C\|V\f\|_1 \left( \int_0^\infty\left( 
\int_{\R} 
\frac{\rho^{\frac{m-1}{p}-\frac{m+1}2}|r^{\frac{m+1}2}M(r)|}
{\la \rho-r \ra^{m-3}}dr\right)^p 
d\rho \right)^{\frac1{p}} . 
\eqn 
Define $\ka= \frac{m-1}{p}-\frac{m+1}2$, then 
$\ka\geq 0$ for $1\leq p \leq \frac{2(m-1)}{m+1}$  
and $m-3-\ka\geq \frac{3}2$ for any $1\leq p<\infty$ 
if $m\geq 6$. Thus, we may estimate   
\[
\rho^{\ka} \la \rho-r \ra^{-(m-3)} 
\leq C \left\{\br{ll} 
\la \rho-r \ra^{-\frac32}  \quad & \mbox{if} \ \ 
|r|\leq 1 \\[5pt]
\la \rho-r \ra^{-\frac32} |r^{\ka}| \quad & \mbox{if} \ \ 
|r|\geq 1 \er \right.
\]
and Young's inequality implies  
\[
\|Z^{\n\n} u\|_p  \leq  C\|V\f\|_1 
\left(\int_0^1 |r^{\frac{m+1}2}M(r)|^p dr + 
\int_1^\infty |r^{\frac{m-1}p}M(r)|^p dr \right)^{\frac1{p}},
\]
which is bounded by 
$C(\|V\f\ast u \|_{\infty}+ \|V\f \ast u\|_p) 
\leq (\|V\f\|_{p'}+ \|V\f\|_1) \|u\|_p$.  
This completes the proof of the proposition. 
\edpf 

\subsubsection{The case $2\leq p<\infty$} \lbsssec(5-2-2) 

\bglm Let $m\geq 6$ and $\f \in \Eg$. 
Then, $Z^{\n\n}(\f)$ is bounded in $L^p(\R^m)$ 
for any $p\geq 2$. 
\edlm 
\bgpf we apply integration by parts to \refeq(moto) 
by using the identity that 
$\int_{\R} e^{-i(1+2a)r\lam} r^{\n+1}M(r)dr = 
i(1+2a)^{-1}\pa_\lam 
\left(\int_{\R} e^{-i(1+2a)r\lam} r^{\n}M(r)dr\right)'$. 
We see that 
$\rho^{-\n}Q_{\n\n}^{a,b}(\rho)$ is equal to 
\begin{multline*} 
\frac{(-1)^{\n+1}}
{2\pi\rho^{\n}(i(1+2a))}\int_0^\infty 
e^{i(1+2b)\rho\lam}
(\lam^{m-3} F(\lam))'
\left(
\int_{\R} e^{-i(1+2a)r\lam} r^{\n}M(r)dr \right) 
d\lam   
\\
+ \frac{(-1)^{\n+1}(1+2b)}
{2\pi\rho^{\n-1}(1+2a)} 
\int_0^\infty 
e^{i(1+2b)\rho\lam}
\lam^{m-3} F(\lam)
\left(
\int_{\R} e^{-i(1+2a)r\lam} r^{\n}M(r)dr \right) 
d\lam 
\end{multline*} 
The argument similar to the one at the beginning of 
the proof of \refprp(last-prop) shows that the operator 
produced by the first line has the same $L^p$ property 
as $Z^{(\n-1)\n}$ and, hence, is bounded in $L^p(\R^m)$ for 
any $1<p<\infty$. Thus, we need consider the operator 
produced by the second line, which we substitute for 
$Q_{\n\n}^{a,b}(\rho)/\rho^\n$ in \refeq(zjkab-def) 
and the resulting function into \refeq(zdef) for 
$Z_{a,b}^{\n\n}(\f)u(x)$. The result is given 
by \refeq(ann) where $X_{\n}(\rho)$ is replaced by 
\begin{align}
& \tilde{X}_\n(\rho)
= \frac{C}{\rho^{\nu-1}} 
\int_0^\infty 
\left\{
e^{i\lam\rho} \lam^{m-3}
\left(
\int_0^\infty 
\frac{(1+2b)e^{2i\lam\rho {b}}}
{(1+b)^{\nu+\frac12}}
\frac{db}{\sqrt{b}}
\right) 
\right.  \notag \\
& \times \left.
\int_{\R}e^{-i\lam{r}}
\left(
\int_0^\infty 
\frac{(1+2a)^{-1}e^{-2ia\lam{r}}}{(1+a)^{\nu+\frac12}}
\frac{da}{\sqrt{a}}
\right) \lbeq(m-2a)
r^{\nu}M(r)dr 
\right\}
F(\lam)d\lam,  
\end{align}
which can be simplified into the form \refeq(f-1) 
with the roles of $g$ and $h$ being replaced and 
the factors $\rho^{-\left(\nu+\frac32\right)}$ and 
$r^{\n+2}|r|^{-\frac12}$ being replaced 
by $\rho^{-\left(\nu-\frac{1}{2}\right)}$ and 
$r^{\n}|r|^{-\frac12}$ respectively. 
Then, \reflms(gpm,p-1) imply    
\[ % \lbeq(g-3)
X_\n(\rho) \absleq  \frac{C}{\rho^{\n-\frac12}}
\int_{\R} \la \rho -r \ra^{3-m} r^{\n}|r|^{-\frac12}M(r)dr.
\]
We estimate $\| X_{\nu}(|y|)\|_{L^p(|y|\geq 1)}$ for 
$p\geq 2$ and $\| X_{\nu}(|y|)\|_{L^1(|y|<1)}$.  
Let $\kappa=\frac{m-1}p -\n +\frac12$. If $p\geq 2$, 
then $\kappa\leq 0$ and $m-3+ \kappa \geq \frac32$ 
for $m\geq 6$ and   
\[
\rho^{\kappa}\la \rho -r \ra^{3-m} |r|^{\n-\frac12}
\leq C \la \rho -r \ra^{-\frac32}|r|^{\frac{m-1}{p}}, 
\quad\mbox{for}\ \ \rho\geq 1.
\]
It follows by Young's inequality that for any 
$2\leq p<\infty$,  
\begin{align}
\| X_{\nu}(|y|)\|_{L^p(|y|\geq 1)} 
& \leq C \left(\int_0^\infty \left|
\int_{\R} \la \rho -r \ra^{-\frac32} 
|r|^{\frac{m-1}{p}}M(r)dr
\right|^p d\rho\right)^{\frac1{p}} \notag \\
& \leq C \left(\int_{0}^\infty |M(r)|^p 
r^{m-1} dr\right)^{\frac1{p}} 
\leq C \|V\f\|_1 \|u\|_p . \lbeq(-1)
\end{align}
When $\rho\leq 1$, we have $\rho^{m-1-\n+\frac32}\leq 1$ 
and $\la \rho -r \ra^{3-m} \leq C \la r \ra^{3-m}$. 
Hence, 
\[
\|X_\n(|y|)\|_{L^1(|y|<1)} 
\leq C\int_{\R} \la r \ra^{3-m} r^{\n-\frac12}|M(r)|dr 
\leq C \|M\|_\infty \leq C \|V\f\|_{p'}\|u\|_p .
\] 
We therefore obtain by using Young's inequlity again 
after splitting the integral corresponding to 
\refeq(ann) into the ones over $|y|<1$ 
and $|y|\geq 1$ that 
\[
\|Z^{\n\n}u\|_p 
\leq C(\|V\f\|_1^2 + \|V\f\|_p \|V\f\|_{p'})\|u\|_p. 
\]
This completes the proof. 
\edpf

\subsection{Estimate of $\|Z_{\log}u\|_p$} 

In this section we study $Z_{\log}$ and prove the 
following lemma. 
The combination of the lemma with results of the 
previous subsections  
proves \refth(theo6) for even dimensions $m\geq 6$, 
the formal proof of which will be omitted.

\bglm \lblm(log) 
\ben 
\item[{\rm (1)}] If $m=6$, then $Z_{\log}$ is bounded 
in $L^p(\R^m)$ for all $1<p<m$. If $\Eg=\Eg_0$, then so is 
$Z_{\log}$ for all $1<p<\infty$. 
\item[{\rm (2)}] If $m\geq 8$, then $Z_{\log}$ is bounded 
in $L^p(\R^m)$ for all $1<p<\infty$.
\een 
\edlm 
\bgpf We prove the lemma for $m=6$ only. 
The proof for $m\geq 8$ is similar and easier. 
Out of three operators on the right of \refeq(loge) for $m=6$, 
we first study 
\bqn 
Z_{1,\log}= \int^\infty_0 
G_0(\lam)(V\ph \otimes V\ph) \lam
\log\lam 
(G_0(\lam)-G_0(-\lam))F(\lam)d\lam, \lbeq(1-log) 
\eqn 
where we have ignored the constant $\w_{m-1}/\pi(2\pi)^m$ 
which is not important. 
Since $Z_{1,\log}=0$, if $\Eg=\Eg_0$, it suffices to prove 
(1) for $1<p<\frac{m}{m-1}$ and $\frac{m}2<p<m$. 
By using \refeq(co-kerb) and \refeq(usformula) as previously,  
we express $Z_{1,\log}$ as the sum over $0\leq j,k\leq \n$ of 
\bqn 
Z_{1,\log}^{jk}u(x) = C_{jk} 
T^{(a)}_j  T^{(b)}_k \left[ \int_{\R^m} 
\frac{(V\f)(x-y) Q^{a,b}_{jk,\log}(|y|)}{|y|^{m-2-k}}dy 
\right], \lbeq(Wsm-rep-log) 
\eqn 
where $Q_{jk,\log}^{a,b}(\rho)$ are defined by 
\refeq(kjkab) or \refeq(k0kab) (for the case $j=0$) 
by replacing $\lam^{j+k-1}$ or $\lam^k$ respectively by 
$\lam^{j+k+1}\log\lam$ or $\lam^{k+2}\log\lam$. 
We prove 
\bqn \lbeq(logest)
\|Z_{1,\log}^{jk}u\|_p \leq C\|u\|_p 
\eqn 
separately for $(j,k)\not=(\n,\n)$ and $(j,k)=(\n,\n)$ 
by repeating the argument in corresponding subsections.  

Let $(j,k)\not=(\n,\n)$. We first observe that, if $j \geq 1$, 
Fourier inverse transforms of the derivatives upto 
the order $k+1$ of $\lam^{j+k+1}(\log \lam) F$ have the RDIM 
\[
\Fg^{\ast}(\lam^{j+k+1}(\log \lam) F)^{(l)})(\rho)
\absleq C(1+ \rho)^{-2}\la \log (1+ \rho) \ra, \quad 
0\leq l \leq k+1 
\]
and estimates corresponding to \refeq(devel-a) and 
\refeq(fg-1) are satisfied 
by $Q_{jk,\log}^{a,b}(\rho)$ respectively for $1\leq j \leq \n$ 
and for $j=0$ (without producing the boundary term).   
Then, the argument in \refsssec(6-1-1) goes through for 
$Z_{1,\log}^{jk}$ and produces estimate \refeq(logest) 
for $1<p<\frac{m}{m-1}$. By the same reason the 
estimate corresponding \refeq(kjkab-sub-mre) for $m/2<p<m$ 
is satisfied by $Q_{jk,\log}^{a,b}(\rho)$ for all $j,k$ 
and we likewise 
have \refeq(logest) for $m/2<p<m$ by using the argument of 
the first part of proof of \reflm(even-m/2m). It is then  
obvious that the same holds for $Z_{2,\log}$ which is 
obtained from $Z_{1,\log}$ by replacing $\lam \log\,\lam$ by 
$\lam^3 (\log\, \lam)$ and, that the operator 
\bqn 
Z_{3,\log}^{(a,b)}= \int^\infty_0 
G_0(\lam)(\ph_a \otimes \p_b)\lam^3 (\log\, \lam)^2 
(G_0(\lam)-G_0(-\lam))F(\lam)d\lam.  \lbeq(3-log)
\eqn 
produced by $\lam^2\log\lam F_2$ of \refeq(sing-1) 
satisfies \refeq(logest) for all $1<p<m$. 

We next prove \refeq(logest) when $(j,k)=(\n,\n)$. 
It suffices prove it for $1 < p<p_0$ for some $p_0>1$ and 
$p\geq 2$. 
The argument at the beginnings of \refsssec(5-2-1) 
and \refsssec(5-2-2) shows that respectively for 
$1 < p<p_0$ and $p\geq 2$, we have only to estimate 
operators obtained by replacing $Q^{a,b}_{jk,\log}(\rho)$ by   
\begin{gather*} 
\frac{1+2a}{(1+2b)\rho^{\n+1}} \int_0^\infty 
e^{i(1+2b)\rho\lam}n
\lam^{m-1}\log{\lam} F(\lam)
\left(
\int_{\R} e^{-i(1+2a)r\lam} r^{\n+2}Mdr \right) 
d\lam   \ \mbox{and} \\
\frac{1+2b}
{(1+2a)\rho^{\n-1}} \int_0^\infty 
e^{i(1+2b)\rho\lam}
\lam^{m-1}\log{\lam} F(\lam)
\left(
\int_{\R} e^{-i(1+2a)r\lam} r^{\n+2}Mdr \right) 
d\lam
\end{gather*}
in \refeq(Wsm-rep-log). We then repeat the argument of 
\refssec(5-2). We have $\lam^{m-2} \log\lam$ in place of 
$\lam^{m-4}$ in \refeq(f-1). If we change 
$\lam^\sigma$ by $\lam^{\sigma+2} \log \lam $ in the definition 
\refeq(dlt) of $\tilde{L}_\pm (\rho, r)$, then 
\refeq(dlt-con) is satisfied with faster decaying factor 
$\la \rho-r\ra^{-(\sigma+2)}$ in place of   
$\la \rho-r\ra^{-(\sigma+1)}$. 
Thus, $\|Z_{\log}^{\n\n}u\|_p$ is bounded $C\|V\f\|_1$ 
times \refeq(g-13) with $\la \rho-r\ra^{-(m-2)}$ in place of 
$\la \rho-r\ra^{-(m-3)}$ and this proves the lemma for 
$1<p<p_0$. The proof for $p\geq 2$ is similar and we omit 
further details.  
\edpf

%%%%%%%%%%%%%%%%%%%%%%%%%%%%%%%%%

%\end{footnotesize}

\end{document}